\theoremstyle{plain}
\newtheorem{thm}{\protect\theoremname}
  \theoremstyle{plain}
  \newtheorem{prop}{\protect\propositionname}
  \theoremstyle{plain}
  \newtheorem{lem}{\protect\lemmaname}
  \providecommand{\lemmaname}{Lemma}
  \providecommand{\propositionname}{Proposition}
\providecommand{\theoremname}{Theorem}
\newcommand{\Yvi}{\boldsymbol{Y}_{\hspace*{-0.35ex}1}}
\newcommand{\Yvii}{\boldsymbol{Y}_{\hspace*{-0.35ex}2}}
\newcommand{\Yvu}{\boldsymbol{Y}_{\hspace*{-0.35ex}u}}
\newcommand{\pebar}{\overline{p}_{e}}
\newcommand{\peobar}{\overline{p}_{e,0}}
\newcommand{\peibar}{\overline{p}_{e,1}}
\newcommand{\penubar}{\overline{p}_{e,\nu}}
\newcommand{\peiibar}{\overline{p}_{e,2}}
\newcommand{\peiiibar}{\overline{p}_{e,12}}
\newcommand{\rcuiiia}{\mathrm{rcu}_{12,1}}
\newcommand{\rcuiiib}{\mathrm{rcu}_{12,2}}
\newcommand{\rcuiiinu}{\mathrm{rcu}_{12,\nu}}
\newcommand{\Ercc}{E_{r}^{\mathrm{cc}}}
\newcommand{\Erccprime}{E_{r,12}^{\mathrm{cc}^{\prime}}}
\newcommand{\Eiiiazcc}{E_{0,12,1}^{\mathrm{cc}}}
\newcommand{\Eiiiazcost}{E_{0,12,1}^{\mathrm{cost}}}
\newcommand{\Eircc}{E_{r,1}^{\mathrm{cc}}}
\newcommand{\Eiircc}{E_{r,2}^{\mathrm{cc}}}
\newcommand{\Enurcc}{E_{r,\nu}^{\mathrm{cc}}}
\newcommand{\Eiiircc}{E_{r,12}^{\mathrm{cc}}}
\newcommand{\LM}{I_{\mathrm{LM}}}
\newcommand{\RegLM}{\mathcal{R}_{\mathrm{LM}}}
\newcommand{\SetScc}{\mathcal{S}}
\newcommand{\SetSncc}{\mathcal{S}_{n}}
\newcommand{\SetTocc}{\mathcal{T}_{0}}
\newcommand{\SetToncc}{\mathcal{T}_{0,n}}
\newcommand{\SetTicc}{\mathcal{T}_{1}}
\newcommand{\SetTiiicc}{\mathcal{T}_{12}}
\newcommand{\SetTiiincc}{\mathcal{T}_{12,n}}
\newcommand{\SetTnucc}{\mathcal{T}_{\nu}}
\newcommand{\Ptilde}{\widetilde{P}}
\newcommand{\ubar}{\overline{u}}
\newcommand{\Ubar}{\overline{U}}
\newcommand{\xbar}{\overline{x}}
\newcommand{\Xbar}{\overline{X}}
\newcommand{\xtilde}{\widetilde{x}}
\newcommand{\Xtilde}{\widetilde{X}}
\newcommand{\zbar}{\overline{z}}
\newcommand{\qv}{\boldsymbol{q}}
\newcommand{\Qv}{\boldsymbol{Q}}
\newcommand{\uvbar}{\overline{\boldsymbol{u}}}
\newcommand{\uv}{\boldsymbol{u}}
\newcommand{\Uvbar}{\overline{\boldsymbol{U}}}
\newcommand{\Uv}{\boldsymbol{U}}
\newcommand{\Wv}{\boldsymbol{W}}
\newcommand{\xvbar}{\overline{\boldsymbol{x}}}
\newcommand{\xv}{\boldsymbol{x}}
\newcommand{\Xvbar}{\overline{\boldsymbol{X}}}
\newcommand{\Xv}{\boldsymbol{X}}
\newcommand{\yv}{\boldsymbol{y}}
\newcommand{\Yv}{\boldsymbol{Y}}
\newcommand{\zv}{\boldsymbol{z}}
\newcommand{\Fsf}{\mathsf{F}}
\newcommand{\Tsf}{\mathsf{T}}
\newcommand{\Ac}{\mathcal{A}}
\newcommand{\Cc}{\mathcal{C}}
\newcommand{\Dc}{\mathcal{D}}
\newcommand{\Ec}{\mathcal{E}}
\newcommand{\Kc}{\mathcal{K}}
\newcommand{\Pc}{\mathcal{P}}
\newcommand{\Rc}{\mathcal{R}}
\newcommand{\Uc}{\mathcal{U}}
\newcommand{\Xc}{\mathcal{X}}
\newcommand{\Yc}{\mathcal{Y}}
\newcommand{\Zc}{\mathcal{Z}}
\newcommand{\EE}{\mathbb{E}}
\newcommand{\PP}{\mathbb{P}}
\newcommand{\RR}{\mathbb{R}}
\newcommand{\defeq}{\triangleq}
\long\def\symbolfootnote[#1]#2{\begingroup\def\thefootnote{\fnsymbol{footnote}}\footnote[#1]{#2}\endgroup}
\DeclareMathOperator*{\argmax}{arg\,max}
\newcommand{\openone}{\mathds{1}}
\begin{document}
 
\title{Multiuser Random Coding \\ Techniques for Mismatched Decoding}
 
\author{Jonathan Scarlett, \IEEEmembership{Member, IEEE}, Alfonso Martinez, \IEEEmembership{Senior Member, IEEE}, \\ and Albert Guill\'en i F\`abregas, \IEEEmembership{Senior Member, IEEE}} 

\maketitle

\begin{abstract}  
    This paper studies multiuser random coding techniques for channel coding with 
    a given (possibly suboptimal) decoding rule. For the mismatched discrete memoryless 
    multiple-access channel, an error exponent is obtained that is tight with respect to the ensemble
    average, and positive within the interior of Lapidoth's achievable rate region.
    This exponent proves the ensemble tightness of the exponent of Liu and Hughes
    in the case of maximum-likelihood decoding.  An equivalent dual form of
    Lapidoth's achievable rate region is given, and the latter is shown to extend
    immediately to channels with infinite and continuous alphabets.
    In the setting of single-user mismatched decoding, similar analysis 
    techniques are applied to a refined version of superposition coding, which
    is shown to achieve rates at least as high as standard superposition coding 
    for any set of random-coding parameters.
\end{abstract}
\begin{IEEEkeywords}
    Mismatched decoding, multiple-access channel, superposition coding, random coding, error exponents, ensemble tightness, Lagrange duality, maximum-likelihood decoding.
\end{IEEEkeywords}

\symbolfootnote[0]{
J. Scarlett was with the Department of Engineering, University of Cambridge, Cambridge, CB2 1PZ, U.K.
He is now with the Laboratory for Information and Inference Systems, 
\'Ecole Polytechnique F\'ed\'erale de Lausanne, CH-1015, Switzerland (e-mail: jmscarlett@gmail.com). 
A. Martinez is with the Department of Information and Communication Technologies, Universitat Pompeu Fabra, 08018 Barcelona, 
Spain (e-mail: alfonso.martinez@ieee.org).  A. Guill\'en i F\`abregas is with the Instituci\'o Catalana de Recerca i Estudis Avan\c{c}ats (ICREA), 
the Department of Information and Communication Technologies, Universitat Pompeu Fabra, 08018 Barcelona, Spain, 
and also with the Department of Engineering, University of Cambridge, Cambridge, CB2 1PZ, U.K. (e-mail: guillen@ieee.org).

This work has been funded in part by the European Research Council under ERC grant agreement 259663, by the European Union's 7th Framework Programme   
under grant agreement 303633 and by the Spanish Ministry of Economy and Competitiveness under grants RYC-2011-08150 and TEC2012-38800-C03-03.

This work was presented in part at the 50th Annual Allerton Conference on Communication, Control and Computing 2012, and at the 2013 IEEE International Symposium on Information Theory.
}

\thispagestyle{empty}

\section{Introduction} \label{sec:MU_INTRODUCTION}

The mismatched decoding problem \cite{Hui,Csiszar1,Csiszar2,Merhav,ConverseMM,MacMM,MMRevisited,JournalSU,MMSomekh} 
seeks to characterize the performance of coded communication systems when the 
decoding rule is fixed and possibly suboptimal. This problem is of interest,
for example, when the optimal decoding rule is infeasible due to channel uncertainty 
or implementation constraints.  Finding a single-letter expression for the mismatched 
capacity (i.e.~the highest achievable rate with mismatched decoding; see Section \ref{sub:MU_SETUP} for formal definitions) remains
an open problem even for single-user discrete memoryless channels.  The vast
majority of existing works have focused on achievability results via random coding.

The most notable early works are by Hui
\cite{Hui} and Csisz\'{a}r and K\"{o}rner \cite{Csiszar1}, who independently
derived the achievable rate known as the LM rate, using random codes in which 
each codeword has a constant or nearly-constant composition.  A generalization
to infinite and continuous alphabets was given by Ganti \emph{et al.} \cite{MMRevisited} using cost-constrained coding techniques, relying
on a Lagrange dual formulation of the LM rate that first appeared in \cite{Merhav}.
In general, the LM rate can be strictly smaller than the mismatched capacity 
\cite{MacMM,Csiszar2}.
Motivated by the lack of converse results, the concept of ensemble tightness
has been addressed in \cite{Merhav,MMRevisited,JournalSU}, where it has been shown
that, for any DMC, the LM rate is the best rate possible for the constant-composition
and cost-constrained random-coding ensembles.
In \cite{Csiszar2}, Csisz\'{a}r and Narayan showed that
better achievable rates can be obtained by applying the LM rate to
the second-order product channel, and similarly for higher-order products.
Random-coding error exponents for mismatched decoding were given in 
\cite{Compound,Variations,JournalSU}, and ensemble tightness
was addressed in \cite{JournalSU}.  

The mismatched multiple-access channel (MAC) was 
considered by Lapidoth \cite{MacMM}, who 
obtained an achievable rate region and showed the surprising fact that 
the single-user LM rate can be improved by treating the single-user 
channel as a MAC. Thus, as well as being of independent interest, 
network information theory problems with mismatched decoding
can also provide valuable insight into the single-user
mismatched decoding problem. In recent work that  
developed independently of ours, Somekh-Baruch \cite{MMSomekh} 
gave error exponents and rate regions for the cognitive MAC (i.e.~the 
MAC where one user knows both messages and the other only knows its own) using two 
multiuser coding schemes: superposition coding and random binning.  When 
applied to single-user mismatched channels, these yield achievable rates
that can improve on those by Lapidoth when certain auxiliary variables
are fixed.  

In this paper, we build on the work of \cite{MacMM} and study multiuser 
coding techniques for channels with mismatched
decoding.  Our main contributions are as follows:
\begin{enumerate}
    \item We develop a variety of tools for studying multiuser random coding ensembles
    in mismatched decoding settings.  Broadly speaking, our techniques permit the
    derivations of ensemble-tight error exponents for channels with finite input
    and output alphabets, as well as generalizations to continuous alphabets based on Lagrange duality analogous to those for the single-user setting mentioned above.
    \item By applying our techniques to the mismatched MAC, we provide an alternative
    derivation of Lapidoth's rate region \cite{MacMM} that also yields the ensemble-tight
    error exponent, and the appropriate generalization to continuous alphabets.
    By specializing to the case of ML decoding, we prove the ensemble tightness
    of the exponent given in \cite{MACExponent4} for constant-composition random coding,
    which was previously unknown.
    \item For the single-user channel, we introduce a refined version of 
    superposition coding that yields rates at 
    least as high as the standard version \cite{MMSomekh,Thesis} for any choice 
    of parameters, with strict improvements possible when the input 
    distribution is fixed. 
\end{enumerate}
To avoid overlap with \cite{MMSomekh}, we have omitted 
the parts of our work that appeared therein; 
however, these can also be found in \cite{Thesis}.

For mismatched DMCs, the results of this paper and various previous works
can be summarized by the following list of  random-coding constructions, in decreasing order of achievable rate:
\begin{enumerate}
  \item Refined superposition coding (Theorems \ref{thm:RSC_Main} and \ref{thm:RSC_Dual}),
  \item Standard superposition coding (Theorems \ref{thm:SC_Rate_SC} and \ref{thm:SC_Rate_Dual}; see \cite{MMSomekh,Thesis}),
  \item Expurgated parallel coding \cite{MacMM},
  \item Constant-composition or cost-constrained coding with independent codewords (LM Rate \cite{Hui,Csiszar1,MMRevisited}),
  \item i.i.d.~coding with independent codewords (generalized mutual information \cite{Compound}). 
\end{enumerate}
The gap between 1) and 2) can be strict for a given input distribution;
no examples are known where the gap between 2) and 3) is strict; and the gaps between
the remaining three can be strict even for an optimized input distribution. 
Numerical examples are provided in Section \ref{sec:MU_NUMERICAL}.


\subsection{System Setup} \label{sub:MU_SETUP}

Throughout the paper, we consider both the mismatched single-user channel and
the mismatched multiple-access channel.  Here we provide a description
of each. 

\subsubsection{Mismatched Single-User Channel} \label{sub:SU_SETUP}

The input and output alphabets are denoted by $\Xc$ and $\Yc$
respectively, and the channel transition law is denoted by $W(y|x)$,
thus yielding an $n$-letter transition law given by
\begin{equation}
    W^{n}(\yv|\xv)\defeq\prod_{i=1}^{n}W(y_{i}|x_{i}).
\end{equation}
If $\Xc$ and $\Yc$ are finite, the channel is referred
to as a discrete memoryless channel (DMC).
We consider length-$n$ block coding, in which a codebook
$\Cc=\{\xv^{(1)},\dotsc,\xv^{(M)}\}$
is known at both the encoder and decoder. The encoder takes as input
a message $m$ uniformly distributed on the set $\{1,\dotsc,M\}$,
and transmits the corresponding codeword $\xv^{(m)}$.
The decoder receives the vector $\yv$ at the output of
the channel, and forms the estimate 
\begin{equation}
    \hat{m}=\argmax_{j\in\{1,\dotsc,M\}} q^{n}(\xv^{(j)},\yv),\label{eq:SU_DecodingRule}
\end{equation}
where $n$ is the length of each codeword, and
$q^{n}(\xv,\yv)\defeq\prod_{i=1}^{n}q(x_{i},y_{i})$.
The function $q(x,y)$ is called the \emph{decoding metric}, and is 
assumed to be non-negative. In the case of a tie, a codeword achieving 
the maximum in \eqref{eq:SU_DecodingRule} is selected uniformly at random.
In the case that $q(x,y)=W(y|x)$, the decoding rule in \eqref{eq:SU_DecodingRule} 
is that of optimal maximum-likelihood (ML) decoding. 

A rate $R$ is said to be achievable if, for all $\delta > 0$, there
exists a sequence of codebooks $\Cc_n$ with at least $\exp(n(R-\delta))$ codewords
of length $n$ such that $\lim_{n\to\infty} p_{e}(\Cc_{n}) = 0$ under the decoding metric $q$. The mismatched 
capacity of a given channel and metric is defined to be the supremum of all achievable rates. 

An error exponent $E(R)$ is said to be achievable if there exists a 
sequence of codebooks $\Cc_{n}$ with at least $\exp(nR)$ codewords of length $n$ such that
\begin{equation}
    \liminf_{n\to\infty}-\frac{1}{n}\log p_{e}(\Cc_{n})\ge E(R). \label{eq:SU:AchievableExp}
\end{equation}
We let $\pebar(n,M)$ denote the average error probability
with respect to a given random-coding ensemble that will be clear
from the context. A random-coding error exponent $E_{r}(R)$ is
said to exhibit ensemble tightness if
\begin{equation}
\lim_{n\to\infty}-\frac{1}{n}\log\pebar(n,e^{nR})=E_{r}(R).\label{eq:SU_EnsTight}
\end{equation}
For all of the cases of interest in this paper, the limit will exist.

With these definitions, the above-mentioned LM rate is given as follows 
for an arbitrary input distribution $Q$:
\begin{equation}
    \LM(Q)\defeq\min_{\substack{\Ptilde_{XY}\,:\,\Ptilde_{X}=Q,\Ptilde_{Y}=P_{Y}\\
        \EE_{\Ptilde}[\log q(X,Y)]\ge\EE_{P}[\log q(X,Y)]}}I_{\Ptilde}(X;Y),\label{eq:SU_PrimalLM}
\end{equation}
where $P_{XY} = Q\times W$. This rate can equivalently be expressed as \cite{Merhav} 
\begin{equation}
    \LM(Q)=\sup_{s\ge0,a(\cdot)}\EE\left[\log\frac{q(X,Y)^{s}e^{a(X)}}{\EE[q(\Xbar,Y)^{s}e^{a(\Xbar)}\,|\, Y]}\right],\label{eq:SU_DualLM}
\end{equation}
where $(X,Y,\Xbar) \sim Q(x)W(y|x)Q(\xbar)$.
In the terminology of \cite{MMRevisited}, \eqref{eq:SU_PrimalLM}
is the primal expression and \eqref{eq:SU_DualLM} is the dual expression. 

\subsubsection{Mismatched Multiple-Access Channel}  \label{sub:MAC_SETUP}

We also consider a 2-user memoryless MAC $W(y|x_{1},x_{2})$ with input
alphabets $\Xc_{1}$ and $\Xc_{2}$ and output
alphabet $\Yc$.  In the case that each alphabet is finite, the MAC is
referred to as a discrete memoryless MAC (DM-MAC). 
The decoding metric is denoted by $q(x_{1},x_{2},y)$, and
we write $W^{n}(\yv|\xv_{1},\xv_{2})\defeq\prod_{i=1}^{n}W(y_{i}|x_{1,i},x_{2,i})$
and $q^{n}(\xv_{1},\xv_{2},\yv)\defeq\prod_{i=1}^{n}q(x_{1,i},x_{2,i},y_{i})$.

Encoder $\nu=1,2$ takes as input a message $m_{\nu}$ uniformly
distributed on the set $\{1,\dotsc,M_{\nu}\}$,
and transmits the corresponding codeword $\xv_{\nu}^{(m_{\nu})}$
from the codebook $\Cc_{\nu}=\{\xv_{\nu}^{(1)},\dotsc,\xv_{\nu}^{(M_{\nu})}\}$.
Given the output sequence $\yv$,
the decoder forms an estimate $(\hat{m}_{1},\hat{m}_{2})$ of the message pair, given by 
\begin{equation}
    (\hat{m}_{1},\hat{m}_{2})=\argmax_{(i,j) \in\{1,\dotsc,M_{1}\} \times \{1,\dotsc,M_{2}\}}q^{n}(\xv_{1}^{(i)},\xv_{2}^{(j)},\yv).\label{eq:MAC_Metric}
\end{equation}
We assume that ties are resolved uniformly at random. Similarly to the single-user case,
optimal ML decoding is recovered by setting $q(x_1,x_2,y)=W(y|x_1,x_2)$.

An error is said to have
occurred if the estimate $(\hat{m}_{1},\hat{m}_{2})$ differs from
$(m_{1},m_{2})$. The error probability for a given pair of codebooks $(\Cc_{1},\Cc_{2})$
is denoted by $p_e(\Cc_{1},\Cc_{2})$, and the error probability for a given 
random-coding ensemble is denoted by $\pebar(n,M_{1},M_{2})$.  We define
achievable rate pairs, error exponents, and ensemble tightness analogously to
the single-user setting.  

\subsection{Notation} \label{sub:MAC_NOTATION}

We use bold symbols for vectors (e.g.~$\xv_1$, $\yv$), and denote the corresponding
$i$-th entry using a non-bold symbol with a subscript (e.g.~$x_{1,i}$, $y_i$).
All logarithms have base $e$. Moreover, all rates are in units of nats
except in the examples, where bits are used. We define $[c]^{+}=\max\{0,c\}$,
and denote the indicator function by $\openone\{\cdot\}$.

The symbol $\sim$ means ``distributed as''.
The set of all probability distributions on an alphabet, say $\Xc$,
is denoted by $\Pc(\Xc)$, and the set of all empirical
distributions on a vector in $\Xc^{n}$ (i.e.~types \cite[Ch. 2]{CsiszarBook}, \cite{GallagerCC}) 
is denoted by $\Pc_{n}(\Xc)$.  Similar notations $\Pc(\Yc|\Xc)$
and $\Pc_n(\Yc|\Xc)$ are used for conditional distributions, with the
latter adopting the convention that the empirical distribution of $\yv$
given $\xv$ is uniform for values of $x$ that do not appear in $\xv$.
For a given $Q\in\Pc_{n}(\Xc)$,
the type class $T^{n}(Q)$ is defined to be the set of all sequences
in $\Xc^{n}$ with type $Q$.  For a given joint type $P_{XY}\in\Pc_{n}(\Xc\times\Yc)$  
and sequence $\xv \in T^{n}(P_{X})$, the conditional type
class $T_{\xv}^{n}(P_{XY})$ is defined to be the set of all sequences
$\yv$ such that $(\xv,\yv) \in T^{n}(P_{XY})$. 

The probability of an event is denoted by $\PP[\cdot]$. The
marginals of a joint  distribution $P_{XY}(x,y)$ are denoted by $P_{X}(x)$
and $P_{Y}(y)$. We write $P_{X}=\Ptilde_{X}$ to denote element-wise
equality between two probability distributions on the same alphabet.
Expectation with respect to a joint distribution $P_{XY}(x,y)$ is
denoted by $\EE_{P}[\cdot]$, or simply $\EE[\cdot]$ when the associated 
probability distribution is understood from the context.  Similarly, mutual 
information with respect to $P_{XY}$ is written as $I_{P}(X;Y)$, or simply 
$I(X;Y)$. Given a 
distribution $Q(x)$ and conditional distribution $W(y|x)$, we write $Q\times W$
to denote the joint distribution defined by $Q(x)W(y|x)$.

For two positive sequences $f_{n}$ and $g_{n}$, we write $f_{n}\doteq g_{n}$
if $\lim_{n\to\infty}\frac{1}{n}\log\frac{f_{n}}{g_{n}}=0$, $f_{n}\,\dot{\le}\,g_{n}$
if $\limsup_{n\to\infty}\frac{1}{n}\log\frac{f_{n}}{g_{n}}\le0$, and analogously for $\dot{\ge}$. We make use of the standard asymptotic notations $O(\cdot)$, 
$o(\cdot)$ and $\Omega(\cdot)$.  When studying the MAC, we index the users as
$\nu=1,2$, and let $\nu^c$ denote the unique index differing from $\nu$.


\section{Multiple-Access Channel} \label{sec:MU_MAC}

In this section, we study the mismatched multiple-access channel introduced
in Section \ref{sub:MU_SETUP}.  We consider random coding, in which each
codeword of user $\nu={1,2}$ is generated independently according to some
distribution $P_{\Xv_{\nu}}$. We let $\Xv_{\nu}^{(i)}$ be the random variable 
corresponding to the $i$-th codeword of user $\nu$, yielding
\begin{multline}
    \Big(\{\Xv_{1}^{(i)}\}_{i=1}^{M_{1}},\{\Xv_{2}^{(j)}\}_{i=1}^{M_{2}}\Big)\sim\prod_{i=1}^{M_{1}}P_{\Xv_{1}}(\xv_{1}^{(i)})\prod_{j=1}^{M_{2}}P_{\Xv_{2}}(\xv_{2}^{(j)}). \label{eq:MAC_Distr1}
\end{multline}
We assume without loss of generality that message $(1,1)$
is transmitted, and write $\Xv_{1}$ and $\Xv_{2}$ in place of
$\Xv_{1}^{(1)}$ and $\Xv_{2}^{(1)}$. We write $\Xvbar_{1}$ and $\Xvbar_{2}$
to denote arbitrary codewords that are generated independently of $\Xv_{1}$ and $\Xv_{2}$. 
The random sequence at the output of the channel is denoted by $\Yv$.
It follows that
\begin{multline}
    (\Xv_{1},\Xv_{2},\Yv,\Xvbar_{1},\Xvbar_{2})\sim P_{\Xv_{1}}(\xv_{1})P_{\Xv_{2}}(\xv_{2})W^{n}(\yv|\xv_{1},\xv_{2}) \\ \times P_{\Xv_{1}}(\xvbar_{1})P_{\Xv_{2}}(\xvbar_{2}). \label{eq:MAC_VecDistr}
\end{multline}
For clarity of exposition, we focus
primarily on the case that there is no time-sharing (e.g.~see \cite{MACExponent4}).  
In Section \ref{sub:MAC_TIME_SHARING}, we discuss some of the corresponding
results with time-sharing.

We study the random-coding error probability by considering the following events:
\begin{tabbing}
    ~{\emph{(Type 1)}} ~~~ \= $\displaystyle \frac{q^n(\Xv_1^{(i)},\Xv_2,\Yv)}{q^n(\Xv_1,\Xv_2,\Yv)} \ge 1$ for some $i \ne 1$; \\
    ~{\emph{(Type 2)}} ~~~ \> $\displaystyle \frac{q^n(\Xv_1,\Xv_2^{(j)},\Yv)}{q^n(\Xv_1,\Xv_2,\Yv)} \ge 1$ for some $j \ne 1$; \\
    ~{\emph{(Type 12)}} ~~~ \> $\displaystyle \frac{q^n(\Xv_1^{(i)},\Xv_2^{(j)},\Yv)}{q^n(\Xv_1,\Xv_2,\Yv)} \ge 1$ for some $i \ne 1$, $j \ne 1$.
\end{tabbing}
We refer to these as error events, though they
do not necessarily imply decoder errors when the inequalities 
hold with equality, since we have assumed that the decoder resolves
ties uniformly at random.

The probabilities of the error events are denoted by 
$\peibar(n,M_{1})$, $\peiibar(n,M_{2})$ and $\peiiibar(n,M_{1},M_{2})$,
and the overall random-coding error probability is denoted by  
$\pebar(n,M_{1},M_{2})$.  Since breaking ties as errors increases
the error probability by at most a factor of two \cite{TwoChannels}, we have
\begin{equation}
    \frac{1}{2}\max\{\peibar,\peiibar,\peiiibar\}\le \pebar \le \peibar+\peiibar+\peiiibar.\label{eq:MAC_ErrorProbs}
\end{equation} 

\subsection{Exponents and Rates for the DM-MAC} \label{sub:MAC_DISCRETE}

In this subsection, we study the DM-MAC using the constant-composition ensemble. 
For $\nu=1,2$, we fix $Q_{\nu}\in\Pc(\Xc_{\nu})$ and let $P_{\Xv_{\nu}}$
be the uniform distribution on $T^{n}(Q_{\nu,n})$,
where $Q_{\nu,n}\in\Pc_{n}(\Xc_{\nu})$ is a type with the same support
as $Q_{\nu}$ such that $\max_{x_{\nu}}|Q_{\nu,n}(x_{\nu}) - Q_{\nu}(x_{\nu})| \le \frac{1}{n}$.
Thus,  
\begin{equation}
    P_{\Xv_{\nu}}(\xv_{\nu})=\frac{1}{|T^{n}(Q_{\nu,n})|}\openone\big\{\xv_{\nu}\in T^{n}(Q_{\nu,n})\big\}. \label{eq:MAC_Q_X1}
\end{equation}
Our analysis is based on the method of types \cite[Ch.~2]{CsiszarBook}.
Throughout the section, we write $f(\Qv)$ to denote a quantity
$f$ that depends on $Q_{1}$ and $Q_{2}$. Similarly, we write $f(\Qv_{n})$
to denote a quantity that depends on $Q_{1,n}$ and $Q_{2,n}$.

\subsubsection{Error Exponents} \label{sub:MAC_DM_EXPONENT}

The error exponents and achievable rates are expressed in terms
of the following sets ($\nu=1,2$):
\begin{multline}
    \SetScc(\Qv)\defeq\Big\{ P_{X_{1}X_{2}Y}\in\Pc(\Xc_{1}\times\Xc_{2}\times\Yc)\,:\,\\ P_{X_{1}}=Q_{1},\, P_{X_{2}}=Q_{2}\Big\}\label{eq:MAC_SetS}
\end{multline}
\vspace*{-3ex}
\begin{align}
    \SetTnucc(P_{X_{1}X_{2}Y})\defeq\bigg\{\Ptilde_{X_{1}X_{2}Y}\in\Pc(\Xc_{1}\times\Xc_{2}\times\Yc)\,: & \nonumber \\
    \Ptilde_{X_{\nu}}=P_{X_{\nu}},\Ptilde_{X_{\nu^c}Y}=P_{X_{\nu^c}Y}, & \nonumber \\
    \EE_{\Ptilde}[\log q(X_{1},X_{2},Y)] \ge\EE_{P}[\log q(X_{1},X_{2},Y)]\bigg\} \label{eq:MAC_SetT1}
\end{align}
\vspace*{-6ex}
\begin{align}
    \SetTiiicc(P_{X_{1}X_{2}Y})\defeq\bigg\{\Ptilde_{X_{1}X_{2}Y}\in\Pc(\Xc_{1}\times\Xc_{2}\times\Yc)\,: & \nonumber \\
    \Ptilde_{X_{1}}=P_{X_{1}},\Ptilde_{X_{2}}=P_{X_{2}},\Ptilde_{Y}=P_{Y}, \nonumber & \\
    \EE_{\Ptilde}[\log q(X_{1},X_{2},Y)]\ge\EE_{P}[\log q(X_{1},X_{2},Y)]\bigg\}, \label{eq:MAC_SetT12}
\end{align}
where we recall that for $\nu=1,2$,  $\nu^c$ denotes the unique element differing from $\nu$.

\begin{thm} \label{thm:MAC_Primal}
    For any mismatched DM-MAC, for the constant-composition 
    ensemble in \eqref{eq:MAC_Q_X1} with input distributions $Q_1$ and $Q_2$, the ensemble-tight error exponents are given as follows for $\nu=1,2$:
    \begin{gather} 
        \lim_{n\to\infty} -\frac{1}{n}\log\penubar(n,e^{nR_{\nu}}) = \Enurcc(\Qv,R_{\nu}) \label{eq:MAC_pe1_exp} \\
        \lim_{n\to\infty} -\frac{1}{n}\log\peiiibar(n,e^{nR_{1}},e^{nR_{2}}) = \Eiiircc(\Qv,R_{1},R_{2}), \label{eq:MAC_pe12_exp} 
    \end{gather}
    where
    \begin{align}
        & \Enurcc(\Qv,R_{\nu}) \defeq \min_{P_{X_{1}X_{2}Y}\in\SetScc(\Qv)}\min_{\Ptilde_{X_{1}X_{2}Y}\in\SetTnucc(P_{X_{1}X_{2}Y})} \nonumber \\
            & D(P_{X_{1}X_{2}Y}\|Q_{1}\times Q_{2}\times W) 
         +\big[I_{\Ptilde}(X_{\nu};X_{\nu^c},Y)-R_{\nu}\big]^{+} \label{eq:MAC_Er1_LM}
     \end{align}
     \vspace*{-5ex}
      \begin{align}
        &\Eiiircc(\Qv,R_{1},R_{2}) \defeq \min_{P_{X_{1}X_{2}Y}\in\SetScc(\Qv)}\min_{\Ptilde_{X_{1}X_{2}Y}\in\SetTiiicc(P_{X_{1}X_{2}Y})} \nonumber \\ 
        & D(P_{X_{1}X_{2}Y}\|Q_{1}\times Q_{2}\times W)  + \Big[\max\Big\{I_{\Ptilde}(X_{1};Y)-R_{1}, \nonumber \\
        & I_{\Ptilde}(X_{2};Y)-R_{2},D\big(\Ptilde_{X_{1}X_{2}Y}\|Q_{1}\times Q_{2}\times P_{Y}\big)-R_{1}-R_{2}\Big\}\Big]^{+}.\label{eq:MAC_Er12_LM}
    \end{align}
\end{thm}
\begin{proof}
    The random-coding error probabilities $\peibar$ and $\peiibar$ can be 
    handled similarly to the single-user setting \cite{JournalSU}.
    Furthermore, equivalent error exponents to \eqref{eq:MAC_Er1_LM} ($\nu=1,2$)
    were given in \cite{MACExponent5}. We therefore focus 
    on $\peiiibar$, which requires a more careful analysis. We first 
    rewrite
    \begin{align}
        &\peiiibar \nonumber \\ &=\EE\Bigg[\PP\Bigg[\bigcup_{i\ne1,j\ne1}\bigg\{\frac{q^{n}(\Xv_{1}^{(i)},\Xv_{2}^{(j)},\Yv)}{q^{n}(\Xv_{1},\Xv_{2},\Yv)}\ge1\bigg\}\bigg|\Xv_{1},\Xv_{2},\Yv\Bigg]\Bigg].\label{eq:MAC_pe12_ExactIsh}
    \end{align}
    in terms of the possible joint types of $(\Xv_{1},\Xv_{2},\Yv)$
    and $(\Xv_{1}^{(i)},\Xv_{2}^{(j)},\Yv)$.  To this end, we define 
    \begin{align}
        \SetSncc(\Qv_{n}) &\defeq\Big\{ P_{X_{1}X_{2}Y}\in\Pc_{n}(\Xc_{1}\times\Xc_{2}\times\Yc)\,:\, \nonumber \\
        & \hspace{10ex} P_{X_{1}}=Q_{1,n},\, P_{X_{2}}=Q_{2,n}\Big\}\label{eq:MAC_SetSn} \\
        \SetTiiincc(P_{X_{1}X_{2}Y}) &\defeq\SetTiiicc(P_{X_{1}X_{2}Y})\,\cap\,\Pc_{n}(\Xc_{1}\times\Xc_{2}\times\Yc).\label{eq:MAC_SetT12n}                    
    \end{align}
    Roughly speaking, $\SetSncc$ is the set of possible joint types of 
    $(\Xv_{1},\Xv_{2},\Yv)$, and $\SetTiiincc(P_{X_{1}X_{2}Y})$ is the set
    of types of $(\Xv_{1}^{(i)},\Xv_{2}^{(j)},\Yv)$ that lead to decoding errors 
    when $(\Xv_{1},\Xv_{2},\Yv)\in T^{n}(P_{X_{1}X_{2}Y})$. The constraints on 
    $P_{X_{\nu}}$ and $\Ptilde_{X_{\nu}}$ arise from the fact that we are using 
    constant-composition random coding, and the constraint
    $\EE_{\Ptilde}[\log q(X_{1},X_{2},Y)]\ge\EE_{P}[\log q(X_{1},X_{2},Y)]$
    holds if and only if $q^{n}(\xvbar_{1},\xvbar_{2},\yv)\ge q^{n}(\xv_{1},\xv_{2},\yv)$
    for $(\xv_{1},\xv_{2},\yv)\in T^{n}(P_{X_{1}X_{2}Y})$
    and $(\xvbar_{1},\xvbar_{2},\yv)\in T^{n}(\Ptilde_{X_{1}X_{2}Y})$.
    Fixing $P_{X_{1}X_{2}Y}\in\SetSncc(\Qv_{n})$ and letting $(\xv_{1},\xv_{2},\yv)$
    be an arbitrary triplet of sequences such that $(\xv_{1},\xv_{2},\yv)\in T^{n}(P_{X_{1}X_{2}Y})$,
    it follows that the event in \eqref{eq:MAC_pe12_ExactIsh} can be written as 
    \begin{equation}
        \bigcup_{i\ne1,j\ne1}\bigcup_{\Ptilde_{X_{1}X_{2}Y}\in\SetTiiincc}\bigg\{(\Xv_{1}^{(i)},\Xv_{2}^{(j)},\Yv)\in T^{n}(\Ptilde_{X_{1}X_{2}Y})\bigg\}.\label{eq:MAC_Event2}
    \end{equation}
    Expanding the probability and expectation in \eqref{eq:MAC_pe12_ExactIsh} in 
    terms of types, substituting \eqref{eq:MAC_Event2}, and interchanging 
    the order of the unions, we obtain
    \begin{align}
        &\peiiibar = \nonumber \\ &\sum_{P_{X_{1}X_{2}Y}\in\SetSncc(\Qv_{n})}\PP\big[(\Xv_{1},\Xv_{2},\Yv)\in T^{n}(P_{X_{1}X_{2}Y})\big] \nonumber \\ 
                  & \quad\times \PP\Bigg[\bigcup_{\Ptilde_{X_{1}X_{2}Y}\in\SetTiiincc(P_{X_{1}X_{2}Y})}\bigcup_{i\ne1,j\ne1} \nonumber \\
                  &\hspace{15ex}\bigg\{(\Xv_{1}^{(i)},\Xv_{2}^{(j)},\yv) \in T^{n}(\Ptilde_{X_{1}X_{2}Y})\bigg\}\Bigg] \\
                  &\doteq \max_{P_{X_{1}X_{2}Y}\in\SetSncc(\Qv_{n})}\PP\big[(\Xv_{1},\Xv_{2},\Yv)\in T^{n}(P_{X_{1}X_{2}Y})\big] \nonumber \\ 
                  & \quad\times \max_{\Ptilde_{X_{1}X_{2}Y}\in\SetTiiincc(P_{X_{1}X_{2}Y})}\PP\Bigg[\bigcup_{i\ne1,j\ne1} \nonumber \\
                  & \hspace{15ex}\bigg\{(\Xv_{1}^{(i)},\Xv_{2}^{(j)},\yv)\in T^{n}(\Ptilde_{X_{1}X_{2}Y})\bigg\}\Bigg], \label{eq:MAC_ExpProof2}
    \end{align}
    where $\yv$ is an arbitrary element of $T^{n}(P_{Y})$ (hence depending \emph{implicitly} on $P_{X_1X_2Y}$), and \eqref{eq:MAC_ExpProof2} follows from the 
    union bound and since the number of joint types is polynomial in $n$.
    
    By a standard property of types \cite[Ch.~2]{CsiszarBook}, the exponent 
    of the first probability in \eqref{eq:MAC_ExpProof2} is 
    given by $D(P_{X_{1}X_{2}Y} \| Q_1 \times Q_2 \times W)$, so it only remains to 
    determine the exponential behavior of the second probability.  To this end, we make
    use of Lemma \ref{lem:MU_MatchingInd} in Appendix \ref{sub:MU_BOUNDS} with $Z_{1}(i)=\Xv_{1}^{(i)}$, 
    $Z_{2}(j)=\Xv_{2}^{(j)}$, $\Ac = T_{\yv}^{n}(\Ptilde_{X_{1}X_{2}Y})$,
    $\Ac_{1} = T_{\yv}^{n}(\Ptilde_{X_{1}Y})$ and $\Ac_{2} = T_{\yv}^{n}(\Ptilde_{X_{2}Y})$.
    Using \eqref{eq:MU_MatchingUB}--\eqref{eq:MU_MatchingLB}
    and standard properties of types \cite[Ch.~2]{CsiszarBook}, it follows
    that the second probability in \eqref{eq:MAC_ExpProof2} has an exponent of 
    \begin{multline}
        \Big[\max\Big\{I_{\Ptilde}(X_{1};Y)-R_{1},I_{\Ptilde}(X_{2};Y)-R_{2}, \\ D\big(\Ptilde_{X_{1}X_{2}Y}\|Q_{1}\times Q_{2}\times P_{Y}\big)-R_{1}-R_{2}\Big\}\Big]^{+}. \label{eq:MAC_pE_exp}
    \end{multline}
    Upon substituting \eqref{eq:MAC_pE_exp} into \eqref{eq:MAC_ExpProof2}, it
    only remains to replace the sets $\SetSncc$ and $\SetTiiincc$ by $\SetScc$ 
    and $\SetTiiicc$ respectively.  This is seen to be valid since the underlying objective function
    is continuous in $\Ptilde_{X_1X_2Y}$, and since any joint distribution has a corresponding
    joint type which is within $\frac{1}{n}$ in each value of the probability mass function.
    See the discussion around \cite[Eq. (30)]{DyachkovCC} for the analogous continuity argument in the single-user setting.
\end{proof}
Theorem \ref{thm:MAC_Primal} and \eqref{eq:MAC_ErrorProbs} reveal that
the overall ensemble-tight error exponent is given by
\begin{multline}
    \Ercc(\Qv,R_{1},R_{2}) \defeq\min\Big\{ \Eircc(\Qv,R_{1}), \\ \Eiircc(\Qv,R_{2}),\Eiiircc(\Qv,R_{1},R_{2})\Big\}. \label{eq:MAC_Er_LM}
\end{multline}

The proof of Theorem \ref{thm:MAC_Primal} made use of the refined union bound
given in Lemma \ref{lem:MU_MatchingInd}.  If we had instead used the standard
truncated union bound in \eqref{eq:MU_TruncatedBound}, we would have obtained 
the weaker type-12 exponent
\begin{multline}
    \Erccprime(\Qv,R_{1},R_{2})\defeq\min_{P_{X_{1}X_{2}Y}\in\SetScc(\Qv)}\min_{\Ptilde_{X_{1}X_{2}Y}\in\SetTiiicc(P_{X_{1}X_{2}Y})}\\
    D(P_{X_{1}X_{2}Y}\|Q_{1}\times Q_{2}\times W)+\big[D(\Ptilde_{X_{1}X_{2}Y}\|Q_{1}\times Q_{2}\times P_{Y}) \\ -(R_{1}+R_{2})\big]^{+},\label{eq:MAC_Er12'}
\end{multline}
which coincides with an achievable exponent given in \cite{MACExponent5}.
 
\subsubsection{Achievable Rate Region} \label{sub:MAC_RATE}

The following theorem is a direct consequence of Theorem \ref{thm:MAC_Primal}, 
and provides an alternative proof of Lapidoth's ensemble-tight
achievable rate region \cite{MacMM}.
\begin{thm} \label{thm:MAC_Rate} 
    The overall error exponent $\Ercc(\Qv,R_{1},R_{2})$ in \eqref{eq:MAC_Er_LM}
    is positive for all rate pairs $(R_{1},R_{2})$ in the interior of
    $\RegLM(\Qv)$, defined to be the set of all rate pairs $(R_{1},R_{2})$ satisfying the following for $\nu=1,2$:
    \begin{align}
    & \quad R_{\nu}        \le\min_{\Ptilde_{X_{1}X_{2}Y}\in\SetTnucc(Q_{1}\times Q_{2}\times W)}I_{\Ptilde}(X_{\nu};X_{\nu^c},Y) \label{eq:MAC_R1_LM} \\
        & R_{1}+R_{2}  \le\min_{\substack{\Ptilde_{X_{1}X_{2}Y}\in\SetTiiicc(Q_{1}\times Q_{2}\times W) \\
        I_{\Ptilde}(X_{1};Y)\le R_{1},\, I_{\Ptilde}(X_{2};Y)\le R_{2}}} \nonumber \\ & \hspace{22ex}D(\Ptilde_{X_{1}X_{2}Y}\|Q_{1}\times Q_{2}\times P_{Y}).\label{eq:MAC_R12_LM}
    \end{align}
\end{thm}
\begin{proof}
    The conditions in \eqref{eq:MAC_R1_LM}--\eqref{eq:MAC_R12_LM} are
    obtained from \eqref{eq:MAC_Er1_LM}--\eqref{eq:MAC_Er12_LM}
    respectively. Focusing on \eqref{eq:MAC_R12_LM}, we see that the
    objective in \eqref{eq:MAC_Er12_LM} is always positive when $D(P_{X_{1}X_{2}Y}\|Q_{1}\times Q_{2}\times W)>0$,
    $I_{\Ptilde}(X_{1};Y)>R_{1}$ or $I_{\Ptilde}(X_{2};Y)>R_{2}$.
    Moreover, by a similar argument to \cite[Lemma 1]{Csiszar2}, the
    right-hand side of \eqref{eq:MAC_Er12_LM}, with only the second
    minimization kept, is continuous as a function of $P_{X_1X_2Y}$ when
    restricted to distributions with the same support as $Q_1 \times Q_2 \times W$.
    Hence, we may substitute $Q_1 \times Q_2 \times W$
    for $P_{X_1X_2Y}$ (thus forcing the first divergence to zero) and 
    introduce the constraints $I_{\Ptilde}(X_{1};Y)\le R_{1}$
    and $I_{\Ptilde}(X_{2};Y)\le R_{2}$ to obtain the condition in \eqref{eq:MAC_R12_LM}.
\end{proof}

Using a time-sharing argument \cite{NetworkBook,MacMM} (see
also Section \ref{sub:MAC_TIME_SHARING}), it follows from Theorem
\ref{thm:MAC_Rate} that we can achieve any rate pair in the convex hull of
$\bigcup_{\Qv}\RegLM(\Qv)$, where the union is over all distributions $Q_{1}$ and $Q_{2}$ on
$\Xc_{1}$ and $\Xc_{2}$ respectively.

Using a similar argument to the proof of Theorem \ref{thm:MAC_Rate},
we see that \eqref{eq:MAC_Er12'} yields the rate condition
\begin{equation}
    R_{1}+R_{2}\le\min_{\Ptilde_{X_{1}X_{2}Y}\in\SetTiiicc(Q_{1}\times Q_{2}\times W)}D(\Ptilde_{X_{1}X_{2}Y}\|Q_{1}\times Q_{2}\times P_{Y}).\label{eq:MAC_R12'}
\end{equation}
In Section \ref{sub:MAC_NUMERICAL}, we compare \eqref{eq:MAC_Er12_LM} and \eqref{eq:MAC_R12_LM} with the weaker
expressions in \eqref{eq:MAC_Er12'} and \eqref{eq:MAC_R12'}.

\subsection{Exponents and Rates for General Alphabets} \label{sub:MAC_GENERAL} 

In this section, we present equivalent dual expressions for the  
rates given in Theorem \ref{thm:MAC_Rate}, and extend them to the 
memoryless MAC with general alphabets. 
While we focus on rates for brevity,
dual expressions and continuous-alphabet generalizations 
for the exponents in Theorem \ref{thm:MAC_Primal} can
be obtained similarly; see \cite[Sec.~4.2]{Thesis} for details.

We use  the cost-constrained ensemble \cite{Variations,JournalSU},
defined as follows. We fix $Q_{1}\in\Pc(\Xc_{1})$ and $Q_{2}\in\Pc(\Xc_{2})$, and choose 
\begin{equation}
    P_{\Xv_{\nu}}(\xv_{\nu})=\frac{1}{\mu_{\nu,n}}\prod_{i=1}^{n}Q_{\nu}(x_{\nu,i})\openone\big\{\xv_{\nu}\in\Dc_{\nu,n}\big\} \label{eq:CNT_EnsemebleCost}
\end{equation}
for $\nu=1,2$, where $\mu_{\nu,n}$ is a normalizing constant, and
\begin{multline}
    \Dc_{\nu,n}\defeq\Bigg\{ \xv_{\nu}\,:\,\left|\frac{1}{n}\sum_{i=1}^{n}a_{\nu,l}(x_{\nu,i})-\phi_{\nu,l}\right|\le\frac{\delta}{n},\,\\ l=1,\dotsc,L_{\nu}\Bigg\}, \label{eq:CNT_Domain}
\end{multline}
where $\{a_{\nu,l}\}_{l=1}^{L_{\nu}}$ are auxiliary cost functions, $\delta$
is a positive constant, and $\phi_{\nu,l}\defeq\EE_{Q_{\nu}}[a_{\nu,l}(X_{\nu})]$. 
Thus, the codewords for user $\nu$ are 
constrained to satisfy $L_{\nu}$ cost constraints in which the empirical 
mean of $a_{\nu,l}(\cdot)$ is close to the true mean.  We allow each 
of the parameters to be optimized, including the cost functions.
The case $L_{\nu}=0$ should be understood as
corresponding to the case that $\Dc_{\nu,n}$ contains all $\xv_\nu$ 
sequences, thus recovering the i.i.d.~distribution studied in \cite{MACExponent2}.
In the case of finite input alphabets, the constant-composition ensemble 
can also be recovered by setting $L_{\nu} = |\Xc_{\nu}|$ and letting each 
auxiliary cost function be the indicator function of its argument equaling 
a given input symbol \cite{JournalSU}.

The cost-constrained ensemble has primarily been used with $L_{\nu}=1$ \cite{Gallager,Variations},
but the inclusion of multiple cost functions has proven beneficial in
the mismatched single-user setting \cite{JournalSU,PaperExpurg}. 
We will see that the use of multiple costs is beneficial for both
the matched and mismatched MAC.  We note that \emph{system} costs
(as opposed to the \emph{auxiliary} costs used here) can easily
be handled (e.g.~see \cite[Sec. VII]{JournalSU}, \cite{PaperExpurg}),
but in this paper we assume for simplicity that the channel is unconstrained.

The following proposition from \cite{JournalSU} will be useful.
\begin{prop} \label{prop:MAC_SubExp} 
    \emph{\cite[Prop. 1]{JournalSU}} For $\nu=1,2$, fix the input distribution $Q_{\nu}$
    along with $L_{\nu}$ and the auxiliary cost functions $\{a_{\nu,l}\}_{l=1}^{L_{\nu}}$.
    Then $\mu_{\nu,n}=\Omega(n^{-L_{\nu}/2})$ provided that $\EE_{Q_{\nu}}[a_{\nu,l}(X_{\nu})^2]<\infty$
    for $l=1,\dotsc,L_{\nu}$.
\end{prop}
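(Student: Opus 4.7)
The quantity $\mu_{\nu,n}$ is precisely $\PP[\Xv_\nu \in \Dc_{\nu,n}]$ when $\Xv_\nu \sim \prod_{i=1}^n Q_\nu$. Setting $Y_i \defeq (a_{\nu,1}(X_{\nu,i}) - \phi_{\nu,1}, \ldots, a_{\nu,L_\nu}(X_{\nu,i}) - \phi_{\nu,L_\nu}) \in \RR^{L_\nu}$ and $T_n \defeq \sum_{i=1}^n Y_i$, the membership condition multiplies out to $\|T_n\|_\infty \le \delta$, so
\[
\mu_{\nu,n} = \PP\bigl[T_n \in [-\delta,\delta]^{L_\nu}\bigr].
\]
The plan is to read this as a multivariate local central limit statement: $T_n$ is a sum of i.i.d.\ centered random vectors whose covariance matrix $\Sigma$ is finite by the second-moment hypothesis, so its typical scale in each coordinate is $\sqrt n$, and the probability of lying in a fixed box of side $2\delta$ should be of order $(1/\sqrt n)^{L_\nu} = n^{-L_\nu/2}$.

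I would carry this out via Fourier inversion. Assume first that $\Sigma$ is non-singular. Choose a non-negative test function $f \le \openone_{[-\delta,\delta]^{L_\nu}}$ whose Fourier transform $\hat f$ is non-negative and absolutely integrable; the tensor product $f(x) = \prod_{l=1}^{L_\nu}\max(0,1-|x_l|/\delta)$ works, its Fourier transform being a product of $\mathrm{sinc}^2$ kernels. Then
\[
\mu_{\nu,n} \ge \EE[f(T_n)] = \frac{1}{(2\pi)^{L_\nu}}\int_{\RR^{L_\nu}} \hat f(\xi)\, \varphi_Y(\xi)^n\, d\xi,
\]
where $\varphi_Y$ is the characteristic function of $Y_1$. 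The change of variable $\xi = t/\sqrt n$ produces the Jacobian $n^{-L_\nu/2}$, while the expansion $\varphi_Y(\xi) = 1 - \tfrac12\xi^T\Sigma\xi + o(|\xi|^2)$ yields the pointwise limit $\varphi_Y(t/\sqrt n)^n \to e^{-\frac12 t^T\Sigma t}$. A dominated-convergence step then gives $\mu_{\nu,n} \ge (c+o(1))\, n^{-L_\nu/2}$ for some $c>0$ depending on $\Sigma$ and $\delta$.

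If $\Sigma$ is singular with rank $L' < L_\nu$, then there exist $L_\nu - L'$ nontrivial linear combinations $\sum_l c_l (a_{\nu,l}(X_\nu) - \phi_{\nu,l})$ that vanish $Q_\nu$-a.s., and the corresponding linear combinations of the coordinates of $T_n$ are therefore identically zero. The associated constraints in $\Dc_{\nu,n}$ are automatically satisfied, so the problem reduces to the non-singular case in dimension $L'$, giving $\mu_{\nu,n} \ge \Omega(n^{-L'/2}) = \Omega(n^{-L_\nu/2})$.

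The main obstacle is that only finite second moments are assumed, so $|\varphi_Y(\xi)|$ need not decay at infinity -- $Y_1$ may be neither smooth nor strictly non-lattice. Accordingly, the Fourier integral must be split into a small neighborhood of the origin, where the quadratic expansion yields $|\varphi_Y(\xi)|^n \le e^{-cn|\xi|^2}$ and dominates the answer, and its complement, where the fast decay of $\hat f$ together with $|\varphi_Y| \le 1$ confines the remaining contribution to lower order. If the $\mathrm{sinc}^2$ decay proves insufficient, a smoother test function such as a Gaussian bump can be substituted to obtain absolutely integrable bounds throughout.
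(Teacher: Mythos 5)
The paper does not prove this proposition; it is quoted verbatim from \cite[Prop.~1]{JournalSU}, so there is no in-paper argument to compare against, and your proposal must be judged on its own.

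Your Fourier-inversion framework is the right shape (a minorant $f\le\openone_{[-\delta,\delta]^{L_\nu}}$, inversion, rescaling $\xi=t/\sqrt{n}$, Gaussian limit near the origin), and the reduction when $\Sigma$ is singular is essentially correct in spirit, though the claim that the redundant constraints are ``automatically satisfied'' is not literally true --- a vanishing combination $\sum_l c_l(a_{\nu,l}-\phi_{\nu,l})=0$ makes the corresponding coordinate of $T_n$ a fixed linear combination of the others, which lies in $[-\delta,\delta]$ only if the $c_l$'s are small; the correct move is to intersect the box with the affine subspace supporting $T_n$, note it still contains a neighborhood of the origin, and apply the nondegenerate argument in that lower dimension. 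The genuine gap, however, is the treatment of the Fourier tail. Your claim that ``$|\varphi_Y|\le 1$ confines the remaining contribution to lower order'' is false whenever Cram\'er's condition fails. If some nontrivial combination of the $a_{\nu,l}(X_\nu)$ is lattice-valued, then $|\varphi_Y(\xi)|=1$ on a dual lattice of nonzero frequencies, and the tail integral $\int_{|\xi|>\epsilon}\hat f(\xi)\varphi_Y(\xi)^n\,d\xi$ is also $\Theta(n^{-L_\nu/2})$, not smaller --- it simply encodes the lattice oscillations that the near-origin piece smooths over. In that regime the near-origin and tail contributions are comparable and partially cancel, and whether the sum is bounded below by $c\,n^{-L_\nu/2}$ or in fact vanishes for infinitely many $n$ depends on whether the box $[-\delta,\delta]^{L_\nu}$ is wide enough to capture a lattice point of $T_n$ near the origin for every $n$. (A one-dimensional example: if $a(X)-\phi$ takes the two values $\{-1/2,\,3/2\}$, then $T_n\in -n/2+2\mathbb{Z}$, so with $\delta<1$ one has $\mu_{\nu,n}=0$ unless $n\equiv 0\pmod 4$, contradicting $\Omega(n^{-1/2})$.) The proposition is implicitly relying on $\delta$ being chosen large enough relative to the (finitely many, fixed) cost functions --- the paper's definition of the ensemble leaves $\delta$ free precisely so that this can be arranged --- and your proof never makes that choice or explains why it suffices. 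Switching $f$ from a triangular bump to a Gaussian bump does not repair this: a Gaussian isn't a minorant of the indicator, and more fundamentally the difficulty is in $\varphi_Y^n$, not in $\hat f$. To close the gap you need either (i) to invoke a non-lattice assumption and appeal to Stone's local CLT (which holds under finite second moments), or (ii) to fix $\delta$ larger than the span of any lattice component and carry out the lattice local CLT carefully, summing the Gaussian mass over lattice points inside the box.
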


The main result of this subsection is the following theorem.

\begin{thm} \label{thm:MAC_DualRate}
    The region $\RegLM(\Qv)$
    in \eqref{eq:MAC_R1_LM}--\eqref{eq:MAC_R12_LM} can be expressed
    as the set of rate pairs $(R_{1},R_{2})$ satisfying 
    \begin{align}
        R_{1} &\le \sup_{s\ge0,a_{1}(\cdot)}\EE\left[\log\frac{q(X_{1},X_{2},Y)^{s}e^{a_{1}(X_{1})}}{\EE\big[q(\Xbar_{1},X_{2},Y)^{s}e^{a_{1}(\Xbar_{1})}\,|\, X_{2},Y\big]}\right]\label{eq:MAC_R1_DualLM} \\
        R_{2} &\le \sup_{s\ge0,a_{2}(\cdot)}\EE\left[\log\frac{q(X_{1},X_{2},Y)^{s}e^{a_{2}(X_{2})}}{\EE\big[q(X_{1},\Xbar_{2},Y)^{s}e^{a_{2}(\Xbar_{2})}\,|\, X_{1},Y\big]}\right],\label{eq:MAC_R2_DualLM}
    \end{align}
    and at least one of
    \begin{align}
        & R_{1} \le \sup_{\rho_{2}\in[0,1],s\ge0,a_{1}(\cdot),a_{2}(\cdot)} - \rho_{2}R_{2} \nonumber \\
        &\hspace*{-0.7ex}+\EE\Bigg[\log\frac{\big(q(X_{1},X_{2},Y)^{s}e^{a_{2}(X_{2})}\big)^{\rho_{2}}e^{a_{1}(X_{1})}}{\EE\Big[\Big(\EE\big[q(\Xbar_{1},\Xbar_{2},Y)^{s}e^{a_{2}(\Xbar_{2})}\,\big|\,\Xbar_{1}\big]\Big)^{\rho_{2}}e^{a_{1}(\Xbar_{1})}\big|Y\Big]}\Bigg] \label{eq:MAC_R12_1_DualLM} \\
        & R_{2} \le \sup_{\rho_{1}\in[0,1],s\ge0,a_{1}(\cdot),a_{2}(\cdot)} - \rho_{1}R_{1} \nonumber \\ &\hspace*{-0.7ex}+\EE\Bigg[\log\frac{\big(q(X_{1},X_{2},Y)^{s}e^{a_{1}(X_{1})}\big)^{\rho_{1}}e^{a_{2}(X_{2})}}{\EE\Big[\Big(\EE\big[q(\Xbar_{1},\Xbar_{2},Y)^{s}e^{a_{1}(\Xbar_{1})}\,\big|\,\Xbar_{2}\big]\Big)^{\rho_{1}}e^{a_{2}(\Xbar_{2})}\big|Y\Big]}\Bigg], \label{eq:MAC_R12_2_DualLM}
    \end{align}
    where $(X_{1},X_{2},Y,\Xbar_{1},\Xbar_{2})$ is distributed as $Q_{1}(x_{1})Q_{2}(x_{2})W(y|x_{1},x_{2})Q_{1}(\xbar_{1})Q_{2}(\xbar_{2})$.
    
    Moreover, this region is achievable for any memoryless MAC (possibly
    having infinite or continuous alphabets) and any pair $(Q_1,Q_2)$, where
    each supremum is subject to $\EE_{Q_\nu}[a_{\nu}(X_{\nu})^2]<\infty$ ($\nu=1,2$).
    Any point in the region can be achieved using 
    cost-constrained coding with $L_{1}=L_{2}=3$.
\end{thm}
\begin{proof}
    The equivalence of this rate region to \eqref{eq:MAC_R1_LM}--\eqref{eq:MAC_R12_LM} 
    is proved in Appendix \ref{sub:MAC_PROOFS}.  Here we prove the second claim of the
    theorem by providing a direct derivation.

    The key initial step is to obtain the following non-asymptotic bound
    on the type-12 error event, holding for any codeword distributions $P_{\Xv_{1}}$ and $P_{\Xv_{2}}$:
        \begin{gather}
            \peiiibar(n,M_{1},M_{2})\le\min_{\nu=1,2} \rcuiiinu(n,M_{1},M_{2}), \label{eq:MAC_e12_Bound}
        \end{gather}
    where for $\nu=1,2$ we define
        \begin{align}
            & \rcuiiinu(n,M_{1},M_{2})\defeq \nonumber \\
            &~~~ \EE\Bigg[\min\Bigg\{ 1,(M_{\nu}-1)\EE\Bigg[\min\bigg\{1,(M_{\nu^c}-1) \nonumber \\ &~~~\PP\bigg[\frac{q^{n}(\Xvbar_{1},\Xvbar_{2},\Yv)}{q^{n}(\Xv_{1},\Xv_{2},\Yv)}\ge1\,\bigg|\,\Xvbar_{\nu}\bigg]\bigg\}\,\bigg|\,\Xv_{1},\Xv_{2},\Yv\Bigg]\Bigg\} \Bigg]. \label{eq:MAC_RCU12_1}
        \end{align}
    To prove this, we first write
        \begin{align}
            &\peiiibar \nonumber \\ &=\PP\Bigg[\bigcup_{i\ne1,j\ne1}\bigg\{\frac{q^{n}(\Xv_{1}^{(i)},\Xv_{2}^{(j)},\Yv)}{q^{n}(\Xv_{1},\Xv_{2},\Yv)}\ge1\bigg\}\Bigg]\label{eq:MAC_pe12_Exact1} \\
            & =\EE\Bigg[\PP\Bigg[\bigcup_{i\ne1,j\ne1}\bigg\{\frac{q^{n}(\Xv_{1}^{(i)},\Xv_{2}^{(j)},\Yv)}{q^{n}(\Xv_{1},\Xv_{2},\Yv)}\ge1\bigg\}\Bigg|\Xv_{1},\Xv_{2},\Yv\Bigg]\Bigg].\label{eq:MAC_pe12_Exact2} 
        \end{align}
    We obtain the above-mentioned bounds by applying Lemma 
    \ref{lem:MU_UpperInd} in Appendix \ref{sub:MU_BOUNDS} to the union in \eqref{eq:MAC_pe12_Exact2} (with
    $Z_{1}(i)=\Xv_{1}^{(i)}$ and $Z_{2}(j)=\Xv_{2}^{(j)}$), and then writing $\min\{1,\alpha,\beta\} \le \min\{1,\alpha\}$ and $\min\{1,\alpha,\beta\} \le \min\{1,\beta\}$.

    Define $Q_{\nu}^{n}(\xv_{\nu})\defeq\prod_{i=1}^{n}Q_{\nu}(x_{\nu,i})$
    for $\nu=1,2$. Expanding \eqref{eq:MAC_RCU12_1} and applying Markov's
    inequality and $\min\{1,\alpha\}\le\alpha^{\rho}$ ($0\le\rho\le1$),
    we obtain\footnote{In the case of continuous alphabets, the summations should be replaced by integrals as necessary.}
    \begin{multline}
        \rcuiiia(n,M_{1}) \\
            \le \sum_{\xv_{1},\xv_{2},\yv}P_{\Xv_{1}}(\xv_{1})P_{\Xv_{2}}(\xv_{2})W^{n}(\yv|\xv_{1},\xv_{2}) \Bigg(M_{1}\sum_{\xvbar_{1}}P_{\Xv_{1}}(\xvbar_{1}) \\ \times
            \bigg(M_{2}\frac{\sum_{\xvbar_{2}}P_{\Xv_{2}}(\xvbar_{2})q^{n}(\xvbar_{1},\xvbar_{2},\yv)^{s}}{q^{n}(\xv_{1},\xv_{2},\yv)^{s}}\bigg)^{\rho_{2}}\Bigg)^{\rho_{1}}\label{eq:CNT_CostDer2}    
    \end{multline}
    for any $\rho_{1}\in[0,1]$, $\rho_{2}\in[0,1]$ and $s\ge0$. For
    $\nu=1,2$, we let $a_{\nu}(x)$ be one of the three
    cost functions in the ensemble, and we define $a_{\nu}^{n}(\xv_{\nu})\defeq\sum_{i=1}^{n}a_{\nu}(x_{\nu,i})$
    and $\phi_{\nu}\defeq\EE_{Q_{\nu}}[a_{\nu}(X_{\nu})]$.
    In accordance with the theorem statement, we assume that 
    $\EE_{Q_{\nu}}[a_{\nu}(X_{\nu})]^2 < \infty$, so that
    Proposition \ref{prop:MAC_SubExp} holds.
    Using the bounds on the cost functions in \eqref{eq:CNT_Domain}, we can
    weaken \eqref{eq:CNT_CostDer2} to 
    \begin{multline}
        \rcuiiia(n,M_{1})\le  e^{2\delta(\rho_{1}+\rho_{1}\rho_{2}+1)} \\ \times \sum_{\xv_{1},\xv_{2},\yv}P_{\Xv_{1}}(\xv_{1})P_{\Xv_{2}}(\xv_{2})
        W^{n}(\yv|\xv_{1},\xv_{2})\Bigg(M_{1}\sum_{\xvbar_{1}}P_{\Xv_{1}}(\xvbar_{1}) \\
        \times\bigg(M_{2}\frac{\sum_{\xvbar_{2}}P_{\Xv_{2}}(\xvbar_{2})q^{n}(\xvbar_{1},\xvbar_{2},\yv)^{s}e^{a_{2}^{n}(\xvbar_{2})}}{q^{n}(\xv_{1},\xv_{2},\yv)^{s}e^{a_{2}^{n}(\xv_{2})}}\bigg)^{\rho_{2}}\frac{e^{a_{1}^{n}(\xvbar_{1})}}{e^{a_{1}^{n}(\xv_{1})}}\Bigg)^{\rho_{1}}.\label{eq:CNT_CostDer3} 
    \end{multline}
    We upper bound \eqref{eq:CNT_CostDer3} by substituting \eqref{eq:CNT_EnsemebleCost}
    and replacing the summations over $\Dc_{\nu,n}$ by summations
    over all sequences on $\Xc_{\nu}^{n}$. Writing the
    resulting terms (e.g.~ $W^{n}(\yv|\xv_{1},\xv_{2})$)
    as a product from $1$ to $n$ and taking the supremum over $(s,\rho_{1},\rho_{2})$
    and the cost functions, we obtain a bound whose exponent is
    \begin{equation}
        \max_{\rho_{1}\in[0,1],\rho_{2}\in[0,1]}\Eiiiazcost(\Qv,\rho_{1},\rho_{2})-\rho_{1}(R_{1}+\rho_{2}R_{2}),
    \end{equation}
    where 
    \begin{multline}
        \Eiiiazcost(\Qv,\rho_{1},\rho_{2})\defeq\sup_{s\ge0,a_{1}(\cdot),a_{2}(\cdot)}
        \\ -\log\EE\Bigg[\bigg(\EE\bigg[\bigg(\frac{\EE\big[q(\Xbar_{1},\Xbar_{2},Y)^{s}e^{a_{2}(\Xbar_{2})}\,|\,\Xbar_{1}\big]}{q(X_{1},X_{2},Y)^{s}e^{a_{2}(X_{2})}}\bigg)^{\rho_{2}} \\ \times\frac{e^{a_{1}(\Xbar_{1})}}{e^{a_{1}(X_{1})}}\,\bigg|\, X_{1},X_{2},Y\bigg]\bigg)^{\rho_{1}}\Bigg]\label{eq:MAC_E0_12_1_LM}
    \end{multline}
    We obtain the condition in \eqref{eq:MAC_R12_1_DualLM} by taking the derivative
    of $\Eiiiazcc$ at zero, analogously to the proof of Theorem \ref{thm:MAC_DualRate}.
    We obtain \eqref{eq:MAC_R12_2_DualLM} analogously by starting with $\rcuiiib$ 
    in place of $\rcuiiia$, and we obtain \eqref{eq:MAC_R1_DualLM}--\eqref{eq:MAC_R2_DualLM}
    via a simpler analysis following the standard single-user setting \cite{JournalSU}.
    
    Finally, we note that $L_{1}=L_{2}=3$ suffices due to the fact that the
    cost functions used in deriving \eqref{eq:MAC_R12_1_DualLM}--\eqref{eq:MAC_R12_2_DualLM}
    may coincide, since the theorem statement only requires one of the two to hold.
\end{proof}

Theorem \ref{thm:MAC_DualRate} extends Lapidoth's MAC rate region to general
alphabets, analogously to the extension of the single-user LM rate to
general alphabet by Ganti \emph{et al.}~\cite{MMRevisited}.  Compared to 
the single-user setting, the extension is non-trivial, requiring refined
union bounds, as well as a technique for handling the two additional in constraints
in \eqref{eq:MAC_R12_LM} one at a time, thus leading to two type-12
conditions in \eqref{eq:MAC_R12_1_DualLM}--\eqref{eq:MAC_R12_2_DualLM}.

\subsection{Matched MAC Error Exponent} \label{sec:MAC_MATCHED}

Here we apply our results to the setting of ML decoding,
where $q(x_{1},x_{2},y)=W(y|x_{1},x_{2})$.  The best known
exponent for the constant-composition 
ensemble was derived by Liu and Hughes \cite{MACExponent4},
and was shown to yield a strict improvement over Gallager's 
exponent for the i.i.d.~ensemble \cite{MACExponent2} even
after the optimization of the input distributions.

We have seen that for a general decoding metric, the overall error exponent $\Ercc$ given in \eqref{eq:MAC_Er_LM}
may be reduced when $\Erccprime$ in \eqref{eq:MAC_Er12'} is used in place of $\Eiiircc$.
The following result shows that the resulting expressions are in fact
identical in the matched case.

\begin{thm} \label{thm:MAC_NaiveML}
    Under ML decoding (i.e.~$q(x_{1},x_{2},y)=W(y|x_{1},x_{2})$), we have
    for any input distributions $(Q_{1},Q_{2})$ and rates $(R_{1},R_{2})$ that
    \begin{multline}
        \min\big\{ \Eircc(\Qv,R_{1}),\Eiircc(\Qv,R_{2}),\Eiiircc(\Qv,R_{1},R_{2})\big\} \\ =\min\big\{ \Eircc(\Qv,R_{1}),\Eiircc(\Qv,R_{2}),\Erccprime(\Qv,R_{1},R_{2})\big\}. \label{eq:MAC_NaiveML}
    \end{multline} 
    Thus, both the left-hand side and right-hand side of \eqref{eq:MAC_NaiveML} equal the overall ensemble-tight error exponent.
\end{thm}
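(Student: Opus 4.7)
The plan is to prove the non-trivial direction by a case analysis that leverages Proposition~\ref{prop:MAC_ExpML} to reduce each of $\Eircc$, $\Eiircc$, and $\Erccprime$ under ML to a single minimization over $P_{X_{1}X_{2}Y} \in \SetScc(\Qv)$. First I would observe that from \eqref{eq:MAC_Er12_LM} and \eqref{eq:MAC_Er12'}, since $\max\{a,b,c\} \ge c$, we have $\Eiiircc(\Qv,R_1,R_2) \ge \Erccprime(\Qv,R_1,R_2)$ pointwise, which immediately yields the easy direction
\[
    \min\big\{\Eircc,\Eiircc,\Eiiircc\big\} \ge \min\big\{\Eircc,\Eiircc,\Erccprime\big\}.
\]

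For the reverse inequality, the only non-trivial case is when the right-hand side equals $\Erccprime$ and $\Erccprime < \min\{\Eircc,\Eiircc\}$; in every other case, the right-hand side equals $\min\{\Eircc,\Eiircc\}$, which trivially dominates the left-hand side. Assume this strict inequality, and let $P^*$ attain the minimum in the ML expression \eqref{eq:MAC_Er12Naive_ML} (use a minimizing sequence if necessary). I would then argue that
\[
    I_{P^*}(X_1;X_2,Y) \ge R_1 \quad \text{and} \quad I_{P^*}(X_2;X_1,Y) \ge R_2.
\]
Indeed, if the first failed, then the expression in \eqref{eq:MAC_Er1_ML} evaluated at $P^*$ would give $\Eircc \le D(P^*\|Q_1 \times Q_2 \times W) \le \Erccprime$, contradicting $\Erccprime < \Eircc$; the second follows symmetrically from \eqref{eq:MAC_Er2_ML}.

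The concluding step is an algebraic identity: since $P^*_{X_\nu} = Q_\nu$,
\[
    D(P^*\|Q_1 \times Q_2 \times P^*_Y) - R_1 - R_2 = \big(I_{P^*}(X_1;Y) - R_1\big) + \big(I_{P^*}(X_2;X_1,Y) - R_2\big),
\]
and symmetrically with the roles of $1$ and $2$ interchanged. Under the two inequalities just established, both representations show that the maximum appearing in the objective of \eqref{eq:MAC_Er12_LM} at $P = \Ptilde = P^*$ is attained by the $D(\cdot) - R_1 - R_2$ term. Noting that $\Ptilde = P^*$ lies in $\SetTiiicc(P^*)$ (it satisfies all marginal constraints and the metric constraint with equality), plugging this choice into \eqref{eq:MAC_Er12_LM} gives
\[
    \Eiiircc \le D(P^*\|Q_1 \times Q_2 \times W) + \big[D(P^*\|Q_1 \times Q_2 \times P^*_Y) - R_1 - R_2\big]^{+} = \Erccprime,
\]
so together with the reverse inequality we obtain $\Eiiircc = \Erccprime$ whenever $\Erccprime < \min\{\Eircc, \Eiircc\}$, which finishes the proof.

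The main obstacle, and the crux of the argument, is identifying the implication ``$\Erccprime < \min\{\Eircc,\Eiircc\}$ forces $I_{P^*}(X_1;X_2,Y) \ge R_1$ and $I_{P^*}(X_2;X_1,Y) \ge R_2$'' — once this dichotomy is seen, the additive decomposition of $D(P^*\|Q_1 \times Q_2 \times P^*_Y)$ into mutual information terms makes the maximum in the $\Eiiircc$ objective collapse exactly onto the $\Erccprime$ term. A secondary technicality is handling the case in which the infimum in $\Erccprime$ is not attained, which is resolved by a standard limiting argument along a minimizing sequence $P^{(k)}$.
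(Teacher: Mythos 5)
Your proof is correct. The easy direction (LHS $\ge$ RHS, from $\Eiiircc \ge \Erccprime$) matches the paper, and your argument for the reverse inequality is sound: the reduction to the case $\Erccprime < \min\{\Eircc,\Eiircc\}$ is valid, the deduction that $I_{P^*}(X_1;X_2,Y)\ge R_1$ and $I_{P^*}(X_2;X_1,Y)\ge R_2$ at the $\Erccprime$-minimizer follows correctly from Proposition~\ref{prop:MAC_ExpML} (otherwise $\Eircc$ or $\Eiircc$ would be at most $D(P^*\|Q_1\times Q_2\times W)\le\Erccprime$), and the two symmetric forms of the decomposition $D(P^*\|Q_1\times Q_2\times P^*_Y)-R_1-R_2=(I_{P^*}(X_1;Y)-R_1)+(I_{P^*}(X_2;X_1,Y)-R_2)=(I_{P^*}(X_2;Y)-R_2)+(I_{P^*}(X_1;X_2,Y)-R_1)$ indeed force the $D$-term to dominate the max in the $\Eiiircc$ objective at $\Ptilde=P^*$, yielding $\Eiiircc\le\Erccprime$.

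Your route is organized differently from the paper's, although the key ingredients overlap. The paper introduces the intermediate quantity $F_{12}(\Qv,R_1,R_2)$ obtained by restricting $\Ptilde=P$ in \eqref{eq:MAC_Er12_LM}, compares the objectives of $F_{12}$ and $\Erccprime$ pointwise in $P$, and argues that whenever the $F_{12}$ objective strictly exceeds the $\Erccprime$ objective the active term in the inner max forces one of the conditional mutual informations $I_P(X_\nu;X_{\nu'},Y)$ below $R_\nu$, which in turn caps the corresponding $E_{r,\nu}^{\mathrm{cc}}$. You instead work directly at the minimizer $P^*$ of $\Erccprime$, convert the case hypothesis into lower bounds on both $I_{P^*}(X_1;X_2,Y)$ and $I_{P^*}(X_2;X_1,Y)$, and then show the three-term max collapses to the $D$-term, concluding $\Eiiircc=\Erccprime$ in that case. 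Both proofs use the same decomposition identity, the ML-simplified exponents, and the admissibility of $\Ptilde=P$; your version avoids the $F_{12}$ intermediary and is a clean contrapositive of the paper's per-$P$ dichotomy, and it makes the logic by which ``one of $E_1,E_2$ is small'' enters the argument somewhat more transparent (it follows that $E_\nu\le\Erccprime$ directly, rather than $E_\nu \le F_{12}$). As you note, the attainment of the infimum in $\Erccprime$ over the compact set $\SetScc(\Qv)$ is a minor technicality (lower semicontinuity of the objective on a compact simplex suffices), so the limiting-sequence caveat is unnecessary in the DMC setting but harmless.
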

\begin{proof}
    See Appendix \ref{sub:MAC_PROOFS}.
\end{proof}
While it is possible that $\Eiiircc>\Erccprime$
under ML decoding, Theorem \ref{thm:MAC_NaiveML} shows that this
never occurs in the region where $\Eiiircc$ achieves the minimum
in \eqref{eq:MAC_Er_LM}. Thus, combining Theorem \ref{thm:MAC_NaiveML}
and Theorem  \ref{thm:MAC_Primal}, we conclude that the exponent
given in \cite{MACExponent4} is ensemble-tight for the constant-composition
ensemble under ML decoding.

In \cite[Sec.~4.2.4]{Thesis}, \cite{PaperITA}, we show that the error exponent of \cite{MACExponent4}
admits a dual form resembling the i.i.d.~exponent of Gallager \cite{MACExponent2},
but with additional optimization parameters $a_1(\cdot)$ and $a_2(\cdot)$
that are functions of the input alphabets $\Xc_1$ and $\Xv_2$.  
As usual, this dual form can also 
be derived directly via the cost-constrained ensemble, with the analysis
remaining valid for infinite and continuous alphabets.

\subsection{Time-Sharing} \label{sub:MAC_TIME_SHARING}

Thus far, we have focused on the standard random coding ensemble 
described by \eqref{eq:MAC_Distr1}, where the codewords are  
independent.  It is well-known that even in the matched case, the
union of the resulting achievable rate regions over all $(Q_1,Q_2)$
may be non-convex, and time-sharing is needed to achieve the rest
of the capacity region \cite{MACNonConvex}.  There are two distinct ways of doing so:
(i) With explicit time-sharing, one splits the block of length $n$ into
two or more smaller blocks, and uses separate codebooks within each
block; (ii) With coded time-sharing, one still generates a single 
codebook, but the codewords are \emph{conditionally} independent
given some time-sharing sequence $\Uv$ on a time-sharing alphabet $\Uc$.
In particular, in the case of constant-composition random coding, one
may let $\Uv$ be uniform on a type class corresponding to $Q_U \in \Pc(\Uc)$, 
and let each $\Xv_{\nu}$ be uniform on a conditional type class corresponding
to $Q_{\nu} \in \Pc(\Xc_{\nu}\,|\,\Uc)$. 

While both of these schemes yield the entire capacity region in the matched
case \cite[Ch.~4]{NetworkBook}, the coded time-sharing approach is generally
preferable in terms of exponents \cite{MACExponent4}.  Intuitively,
this is because explicit time-sharing shortens the effective block length,
thus diminishing the exponent.  

Surprisingly, however, explicit time-sharing can outperform 
coded time-sharing in the mismatched case, even in terms of the 
achievable rate region.  This is most easily understood via the
dual-domain expressions, and for concreteness we consider the
case $|\Uc|=2$ with $Q_U=(\lambda,1-\lambda)$.  Let $I_{1}(\Qv,s)$
denote the right-hand side of \eqref{eq:MAC_R1_DualLM} with a fixed value 
of $s$ in place of the supremum.  Using explicit time-sharing with two different
input distribution pairs $\Qv^{(1)}$ and $\Qv^{(2)}$, the condition corresponding 
to \eqref{eq:MAC_R1_DualLM} is given by
\begin{equation}
    R_{1}\le\lambda\sup_{s\ge0}I_{1}(\Qv^{(1)},s)+\big(1-\lambda\big)\sup_{s\ge0}I_{1}(\Qv^{(2)},s), \label{eq:TS_ExplicitR}
\end{equation}
whereas coded time-sharing only permits
\begin{equation}
    R_{1}\le\sup_{s\ge0}\left(\lambda I_{1}(\Qv^{(1)},s)+\big(1-\lambda\big) I_{1}(\Qv^{(2)},s)\right). \label{eq:TS_CodedR}
\end{equation}
These are obtained using similar arguments to the case without time-sharing; see \cite[Sec.~4.2.5]{Thesis} for
further details.  Similar observations apply for the other rate conditions,
including the parameters $\rho_1$ and $\rho_2$ in \eqref{eq:MAC_R12_1_DualLM}--\eqref{eq:MAC_R12_2_DualLM}.

It is evident from \eqref{eq:TS_ExplicitR} (and the other analogous rate conditions)
that explicit time-sharing between two points can be used to obtain any pair $(R_1,R_2)$
on the line connecting two achievable pairs corresponding to $\Qv^{(1)}$ and $\Qv^{(2)}$.
On the other hand, the same is only true for coded time-sharing if there
exists a single parameter $s$ simultaneously maximizing both terms
in the objective function of \eqref{eq:TS_CodedR} (and similarly for the
other rate conditions), which is not the case in general.

Building on this insight, in the following section, we compare two forms of
superposition coding for single-user channels.  The standard version can be viewed as analogous to coded time-sharing, whereas the refined version can be viewed as analogous to explicit time-sharing.  As a result, the latter can lead to higher achievable rates.

\section{Superposition Coding} \label{sec:MU_SC}

In this section, we turn to the single-user mismatched channel introduced
in Section \ref{sub:SU_SETUP}, and consider multiuser coding schemes that
can improve on standard schemes with independent codewords.  Some
numerical examples are given in Section \ref{sec:MU_COMPARISONS}.

\subsection{Standard Superposition Coding}

We first discuss a standard form of superposition coding that has had
extensive application in degraded broadcast channels \cite{BCDegraded2,BCDegMsg1,BCExp1} and other network information theory problems \cite{NetworkBook}.
This ensemble was studied in the context of mismatched decoding in 
\cite{MMSomekh,Thesis}, so we do not repeat the details here.

The parameters of the ensemble are an auxiliary alphabet $\Uc$, an auxiliary 
codeword distribution $P_{\Uv}$, and a conditional codeword distribution $P_{\Xv|\Uv}$. 
We fix two rates $R_{0}$ and $R_{1}$.  An auxiliary codebook $\{\Uv^{(i)}\}_{i=1}^{M_{0}}$ 
with $M_{0}\defeq\lfloor e^{nR_{0}} \rfloor$ codewords is generated at random, with each auxiliary codeword
independently distributed according to $P_{\Uv}$.  For each $i=1,\dotsc,M_{0}$, a codebook
$\{\Xv^{(i,j)}\}_{j=1}^{M_{1}}$ with $M_{1} \defeq \lfloor e^{nR_{1}} \rfloor$ codewords is
generated at random, with each codeword conditionally independently distributed according 
to $P_{\Xv|\Uv}$.  The message $m$ at the input to the encoder is indexed as
$(m_{0},m_{1})$, and for any such pair, the corresponding codeword
is $\Xv^{(m_{0},m_{1})}$.

The following achievable rate for DMCs is obtained using constant-composition coding with some
input distribution $Q_{UX} \in \Pc(\Uc \times \Xc)$, in which $P_{\Uv}$ is the uniform 
distribution on a type class corresponding to $Q_U$, and $P_{\Xv|\Uv}$ is the uniform 
distribution on a conditional type class corresponding to $Q_{X|U}$. We define the sets
\begin{equation}
	\SetScc(Q_{UX}) \defeq \big\{ P_{UXY}\in\Pc(\Uc\times\Xc\times\Yc)\,:\, P_{UX}=Q_{UX}\big\} \label{eq:SC_SetS}
\end{equation}
\begin{align}
    \SetTocc(P_{UXY}) & \defeq \Big\{\Ptilde_{UXY}\in\Pc(\Uc\times\Xc\times\Yc)\,:\,
                                \Ptilde_{UX}=P_{UX},\,\nonumber \\
                                & \hspace{-5ex} \Ptilde_{Y}=P_{Y}, \EE_{\Ptilde}[\log q(X,Y)]\ge\EE_{P}[\log q(X,Y)]\Big\}\label{eq:SC_SetT0} \\
    \SetTicc(P_{UXY}) & \defeq \Big\{\Ptilde_{UXY}\in\Pc(\Uc\times\Xc\times\Yc)\,:\,
                               \Ptilde_{UX}=P_{UX}, \nonumber \\
                               &\hspace{-8ex}\Ptilde_{UY}=P_{UY},\,\EE_{\Ptilde}[\log q(X,Y)]\ge\EE_{P}[\log q(X,Y)]\Big\}. \label{eq:SC_SetT1}  
\end{align}

\begin{thm} \emph{\cite{MMSomekh,Thesis}} \label{thm:SC_Rate_SC}
    Suppose that $W$ is a DMC.  For any finite auxiliary alphabet 
    $\Uc$, and input distribution $Q_{UX} \in \Pc(\Uc \times \Xc)$, the rate
    \begin{equation}
        R = R_0 + R_1
    \end{equation} 
    is achievable provided that $(R_0,R_1)$ satisfy
    \begin{align}
        R_{1}       & \le\min_{\Ptilde_{UXY}\in\SetTicc(Q_{UX}\times W)}I_{\Ptilde}(X;Y|U) \label{eq:SC_R1_CC} \\
        R_{0}+R_{1} & \le \min_{\substack{\Ptilde_{UXY}\in\SetTocc(Q_{UX}\times W)\\I_{\Ptilde}(U;Y)\le R_{0}}}I_{\Ptilde}(U,X;Y). \label{eq:SC_Rsum_CC}
    \end{align}
\end{thm}

This rate is also known to be tight with respect to the ensemble average \cite{MMSomekh,Thesis}.
It is known to be at least as high as Lapidoth's expurgated parallel coding
rate \cite{MacMM}, though it is not known whether the improvement can be strict.

Using similar steps to those in the previous section, one can obtain
the following equivalent dual form, which also remains valid in the case
of continuous alphabets \cite[Sec.~5.2.2]{Thesis}.

\begin{thm} \emph{\cite{Thesis}} \label{thm:SC_Rate_Dual}
    The achievable rate conditions in \eqref{eq:SC_R1_CC}--\eqref{eq:SC_Rsum_CC} 
    can be expressed as
    \begin{align}
        R_{1}&\le\sup_{s\ge0,a(\cdot,\cdot)}\EE\left[\log\frac{q(X,Y)^{s}e^{a(U,X)}}{\EE[q(\Xtilde,Y)^{s}e^{a(U,\Xtilde)}\,|\, U,Y]}\right]\label{eq:SC_R1_Dual} \\
        R_{0} &\le \sup_{\rho_{1}\in[0,1],s\ge0,a(\cdot,\cdot)} - \rho_{1}R_{1} \nonumber \\ &+\EE\left[\log\frac{\big(q(X,Y)^{s}e^{a(U,X)}\big)^{\rho_{1}}}{\EE\Big[\Big(\EE\big[q(\Xbar,Y)^{s}e^{a(\Ubar,\Xbar)}\,\big|\,\Ubar\big]\Big)^{\rho_{1}}\,\big|\,Y\Big]}\right] , \label{eq:SC_Rsum_Dual}
    \end{align}
    where $(U,X,Y,\Xtilde,\Ubar,\Xbar)$ is distributed as $Q_{UX}(u,x)W(y|x)Q_{X|U}(\xtilde|u)Q_{UX}(\ubar,\xbar)$.
\end{thm}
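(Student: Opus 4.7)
The plan is to parallel the proofs of Theorems \ref{thm:MAC_DualExp} and \ref{thm:MAC_DualRate}, with Lemma \ref{lem:MU_DualityLemma} as the principal tool. Each of the two conditions \eqref{eq:SC_R1_Dual} and \eqref{eq:SC_Rsum_Dual} is handled separately. For \eqref{eq:SC_R1_Dual}, the primal \eqref{eq:SC_R1_CC} is a conditional LM rate: since $\Ptilde_{UX}=Q_{UX}$ and $\Ptilde_{UY}=P_{UY}$ fix the joint distributions of $(U,X)$ and $(U,Y)$, the optimization over $\Ptilde_{UXY}$ reduces to a choice of conditional kernels $\Ptilde_{XY|U=u}$ for each $u$, with the single joint metric constraint coupling these kernels. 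I would apply Lemma \ref{lem:MU_DualityLemma} conditionally on $U=u$, with $Z_1=X$, $Z_2=Y$, $P_{Z_1}=Q_{X|U}(\cdot|u)$, $P_{Z_2}=P_{Y|U}(\cdot|u)$, $f(z_1,z_2)=q(z_1,z_2)$, $g\equiv 1$, and $\mu_1(z_1)=a(u,z_1)$, using a single Lagrange multiplier $s$ for the joint metric constraint across $u$. Averaging against $Q_U$ and collecting terms yields \eqref{eq:SC_R1_Dual}.

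For \eqref{eq:SC_Rsum_Dual}, I would first establish the equivalence
\begin{equation}
\min_{\substack{\Ptilde_{UXY}\in\SetTocc(Q_{UX}\times W)\\I_{\Ptilde}(U;Y)\le R_{0}}}I_{\Ptilde}(U,X;Y) \ge R_0+R_1 \;\iff\; \min_{\Ptilde_{UXY}\in\SetTocc(Q_{UX}\times W)} I_{\Ptilde}(U;Y)+\big[I_{\Ptilde}(X;Y|U)-R_1\big]^{+} \ge R_0
\end{equation}
by a direct case analysis on whether $I_{\Ptilde}(X;Y|U)\ge R_1$ or not. Substituting the identity $[a]^{+}=\max_{\rho_1\in[0,1]}\rho_1 a$ into the right-hand form and applying Fan's minimax theorem (justified by the convexity of $I_{\Ptilde}(U;Y)$ and $I_{\Ptilde}(X;Y|U)$ in $\Ptilde$ under the marginal constraints of $\SetTocc$, together with linearity in $\rho_1$), the condition becomes
\begin{equation}
R_0 \le \sup_{\rho_1\in[0,1]} \min_{\Ptilde_{UXY}\in\SetTocc(Q_{UX}\times W)} I_{\Ptilde}(U;Y)+\rho_1 I_{\Ptilde}(X;Y|U) - \rho_1 R_1.
\end{equation}
For each fixed $\rho_1$, the inner minimization is a convex program to which Lagrange duality applies in the spirit of Lemma \ref{lem:MU_DualityLemma}: the constraint $\Ptilde_{UX}=Q_{UX}$ introduces a function $a(u,x)$, the constraint $\Ptilde_Y=P_Y$ introduces averaging against $P_Y$, and the metric constraint contributes the parameter $s$. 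The nested conditional expectations in \eqref{eq:SC_Rsum_Dual} (an inner $\EE[\cdot\,|\,\Ubar]$ and an outer $\EE[\cdot\,|\,Y]$) arise because $\Xbar$ is generated conditionally on $\Ubar$ while only the marginal of $Y$ is pinned down, mirroring the structure of $\Eiiiazcc$ in \eqref{eq:MAC_E0_12_1_LM}.

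The main obstacle is the second part, since the objective $I_{\Ptilde}(U;Y)+\rho_1 I_{\Ptilde}(X;Y|U)$ is not a pure mutual information, so Lemma \ref{lem:MU_DualityLemma} cannot be applied verbatim to the reformulated minimum. A clean alternative, modelled on the derivation of Theorem \ref{thm:MAC_DualRate} from Theorem \ref{thm:MAC_DualExp}, is to first produce a dual expression for $\Eorcc$ in \eqref{eq:SC_Er0_CC} analogous to \eqref{eq:MAC_Er12_1_DualLM}, and then obtain \eqref{eq:SC_Rsum_Dual} by fixing $(s,\rho_1,a)$, setting the auxiliary cost function to zero, and letting the outer Gallager parameter $\rho$ tend to zero. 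The equivalence between the primal and dual formulations is then completed by the Lagrange-duality argument sketched above, with $Q_{UX}\times W$ playing the role taken by $P_{X_1X_2Y}$ in the proof of Theorem \ref{thm:MAC_DualExp}.
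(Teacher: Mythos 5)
Your proposal is correct and follows essentially the same route as the paper: the paper's proof likewise refers the reader to the MAC analysis (Theorems \ref{thm:MAC_DualExp}--\ref{thm:MAC_DualRate} and Steps 1--2 of Appendix \ref{sub:MAC_DUAL_PROOF}, with $Q_{UX}\times W$ in place of $P_{X_1X_2Y}$), which applies $[\alpha]^+=\max_{\rho}\rho\alpha$ and Fan's minimax theorem to reduce to a composite convex objective, and then handles that objective by nested applications of Lemma \ref{lem:MU_DualityLemma} rather than a single one. The obstacle you flag --- that $I_{\Ptilde}(U;Y)+\rho_1 I_{\Ptilde}(X;Y|U)$ is not a pure mutual information --- is indeed the point where the naive application fails, and the workaround you propose (dualize the exponent $\Eorcc$ first and then let the outer Gallager parameter tend to zero, or equivalently perform the sequential Lagrange dualizations of Appendix \ref{sub:MAC_DUAL_PROOF}) is exactly the paper's mechanism.
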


We observe that superposition coding has some similarity to the coded time-sharing
ensemble discussed in Section \ref{sub:MAC_TIME_SHARING}, in that both involve
generating codewords $\xv$ conditionally on auxiliary sequences $\uv$ according
to the uniform distribution on a type class.  We saw in Section \ref{sub:MAC_TIME_SHARING}
that better rates are in fact achieved by explicit time-sharing, in which one 
splits the block length into sub-blocks and codes individually on each one.
We now apply this approach to superposition coding, yielding
a refined ensemble that can lead to higher achievable rates
than the standard version.  

\subsection{Refined Superposition Coding} \label{sec:MU_RSC}

The ensemble is defined as follows. We fix a finite alphabet
$\Uc$, an input distribution $Q_{U}\in\Pc(\Uc)$ and the rates $R_{0}$
and $\{R_{1u}\}_{u\in\Uc}$. We write $M_{0} \defeq \lfloor e^{nR_{0}} \rfloor$
and $M_{1u} \defeq \lfloor e^{nR_{1u}} \rfloor$. We let $P_{\Uv}(\uv)$
be the uniform distribution on the type class $T^{n}(Q_{U,n})$,
where $Q_{U,n}$ is a type with the same support as $Q_U$ such that
$\max_u|Q_{U,n}(u) - Q_U(u)| \le \frac{1}{n}$.  We set
\begin{equation} 
    P_{\Uv}(\uv) =\frac{1}{|T^{n}(Q_{U,n})|}\openone\Big\{\uv\in T^{n}(Q_{U,n})\Big\} \label{eq:RSC_PU}
\end{equation}
and generate the length-$n$ auxiliary  
codewords $\{\Uv^{(i)}\}_{i=1}^{M_{0}}$ independently according to $P_{\Uv}$.  The 
difference compared to standard superposition coding is that the codewords are not generated conditionally
independently given $\Uv^{(i)}$.  Instead, we generate a number of partial codewords,
and construct the length-$n$ codeword by placing the entries of a partial codeword in the
indices where $\Uv$ takes a particular value.

More precisely, for each $u\in\Uc$, we define
\begin{equation}
    n_{u} \defeq Q_{U,n}(u)n\label{eq:RSC_BC_n1}
\end{equation}
and fix a partial codeword distribution $P_{\Xv_{u}}\in\Pc(\Xc^{n_{u}})$.
For each $i=1,\dotsc,M_{0}$ and $u\in\Uc$, we generate the length-$n_{u}$ 
partial codewords $\{\Xv_{u}^{(i,j_{u})}\}_{j_{u}=1}^{M_{1u}}$
independently according to $P_{\Xv_{u}}$. For example, when $\Uc=\{1,2\}$ we have
\begin{multline}
    \bigg\{\Big(\Uv^{(i)},\big\{\Xv_{1}^{(i,j_{1})}\big\}_{j_{1}=1}^{M_{11}},\big\{\Xv_{2}^{(i,j_{2})}\big\}_{j_{2}=1}^{M_{12}}\Big)\bigg\}_{i=1}^{M_{0}}  \\
    \sim\prod_{i=1}^{M_{0}}\bigg(P_{\Uv}(\uv^{(i)})\prod_{j_{1}=1}^{M_{11}}P_{\Xv_{1}}(\xv_{1}^{(i,j_{1})})\prod_{j_{2}=1}^{M_{12}}P_{\Xv_{2}}(\xv_{2}^{(i,j_{2})})\bigg). \label{eq:RSC_Distr}
\end{multline}
The message $m$ at the encoder is indexed as $(m_{0},m_{11},\dotsc,m_{1|\Uc|})$.
To transmit a given message, we treat $\Uv^{(m_{0})}$
as a time-sharing sequence; at the indices where $\Uv^{(m_{0})}$
equals $u$, we transmit the symbols of $\Xv_{u}^{(m_{0},m_{1u})}$.
There are $M=M_{0}\prod_{u}M_{1u}$ codewords, and hence the rate is
$R=R_{0}+\sum_{u}Q_{U,n}(u)R_{1u}$.  An example of the construction
of the codeword $\xv$ from the auxiliary sequence $\uv$ and partial codewords
$\xv_{1}$, $\xv_{2}$ and $\xv_{3}$ is shown in Figure \ref{fig:RSC_Codewords},
where we have $\Uc=\{1,2,3\}$ and $\Xc=\{a,b,c\}$.

\begin{figure}
    \begin{centering}
        \includegraphics[width=\columnwidth]{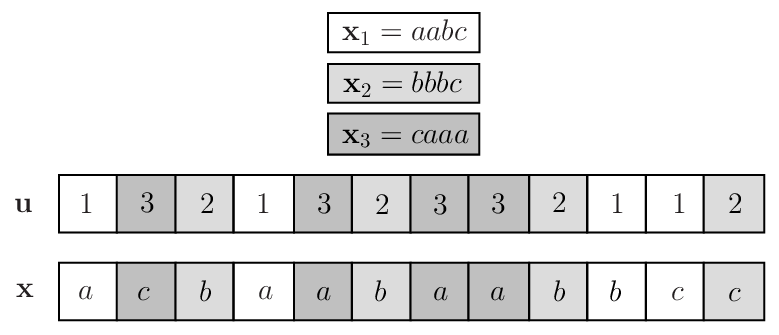}
        \par
    \end{centering}
    
    \caption{The construction of the codeword from the auxiliary sequence $\uv$
             and the partial codewords $\xv_{1}$, $\xv_{2}$ and $\xv_{3}$ for refined
             SC.  Here we have $\Uc = \{1,2,3\}$, $\Xc = \{a,b,c\}$, $n_1 = n_2 = n_3 = 4$, and $n=12$. }
    \label{fig:RSC_Codewords}
\end{figure}

While our main result is stated for an arbitrary finite alphabet $\Uc$, the analysis
will be presented for $\Uc=\{1,2\}$ for clarity.  We proceed by presenting several
definitions for this specific choice. We let $\Xi(\uv,\xv_{1},\xv_{2})$
denote the function for constructing the length-$n$ codeword from
the auxiliary sequence and partial codewords, and we write
\begin{equation}
    \Xv^{(i,j_{1},j_{2})}\defeq\Xi(\Uv^{(i)},\Xv_{1}^{(i,j_{1})},\Xv_{2}^{(i,j_{2})}). \label{eq:RSC_Xfunc}
\end{equation}
We let $\yv_{u}(\uv)$ denote the subsequence of $\yv$ corresponding to the indices
where $\uv$ equals $u$, and similarly for $\Yvu(\uv)$.

We assume without loss of generality that $(m_{0},m_{1},m_{2})=(1,1,1)$.
We let $\Uv$, $\Xv_{1}$, $\Xv_{2}$ and $\Xv$ be the codewords corresponding
to $(1,1,1)$, yielding $\Xv = \Xi(\Uv,\Xv_{1},\Xv_{2})$. We let $\Uvbar$, $\Xvbar_{1}$
and $\Xvbar_{2}$ be the codewords corresponding
to an arbitrary message with $m_{0}\ne1$. For the index $i$ corresponding
to $\Uvbar$, we write $\Xvbar_{1}^{(j_{1})}$,
$\Xvbar_{2}^{(j_{2})}$ and $\Xvbar^{(j_{1},j_{2})}$
in place of $\Xv_{1}^{(i,j_{1})}$, $\Xv_{2}^{(i,j_{2})}$
and $\Xv^{(i,j_{1},j_{2})}$ respectively. It follows that
$\Xvbar^{(j_{1},j_{2})}=\Xi(\Uvbar,\Xvbar_{1}^{(j_{1})},\Xvbar_{2}^{(j_{2})})$.

Upon receiving a realization $\yv$ of the output sequence $\Yv$, 
the decoder forms the estimate
\begin{align}
    & (\hat{m}_{0},\hat{m}_{1},\hat{m}_{2}) \nonumber \\
        &~~= \argmax_{(i,j_{1},j_{2})}q^{n}(\xv^{(i,j_{1},j_{2})},\yv) \\
        &~~= \argmax_{(i,j_{1},j_{2})}q^{n_{1}}\big(\xv_{1}^{(i,j_{1})},\yv_{1}(\uv^{(i)})\big)q^{n_{2}}\big(\xv_{2}^{(i,j_{2})},\yv_{2}(\uv^{(i)})\big),\label{eq:RSC_SplitMetric}
\end{align}
where the objective in \eqref{eq:RSC_SplitMetric} follows by separating
the indices where $u=1$ from those where $u=2$. By writing the objective
in this form, we see that for any given $i$, the pair
$(j_{1},j_{2})$ with the highest metric is the one for which $j_{1}$
maximizes $q^{n_{1}}(\xv_{1}^{(i,j_{1})},\yv_{1}(\uv^{(i)}))$
and $j_{2}$ maximizes $q^{n_{2}}(\xv_{2}^{(i,j_{2})},\yv_{2}(\uv^{(i)}))$.
We thus consider three error events:
\begin{tabbing}
    ~~{\emph{(Type 0)}}~~~ \= $\displaystyle \frac{q^n(\Xv^{(i,j_1,j_2)},\Yv)}{q^n(\Xv,\Yv)} \ge 1$ for some $i \ne 1$, $j_1$, $j_2$; \\
    ~~{\emph{(Type 1)}}~~~ \> $\displaystyle \frac{q^{n_1}(\Xvbar_1^{(1,j_1)},\Yvi(\Uv))}{q^{n_1}(\Xv_1,\Yvi(\Uv))} \ge 1$ for some $j_1 \ne 1$; \\
    ~~{\emph{(Type 2)}}~~~ \> $\displaystyle \frac{q^{n_2}(\Xvbar_2^{(1,j_2)},\Yvii(\Uv))}{q^{n_2}(\Xv_2,\Yvii(\Uv))} \ge 1$ for some $j_2 \ne 1$.
\end{tabbing} 
The corresponding probabilities are denoted by $\peobar(n,M_{0},M_{11},M_{12})$,
$\peibar(n,M_{11})$ and $\peiibar(n,M_{12})$ respectively.  Analogously to
\eqref{eq:MAC_ErrorProbs}, the overall random-coding error
probability $\pebar(n,M_{0},M_{11},M_{12})$ satisfies
\begin{equation}
    \frac{1}{2}\max\{\peobar,\peibar,\peiibar\} \le \pebar \le \peobar + \peibar + \peiibar.
\end{equation}
While our analysis of the error probability will yield non-asymptotic bounds and 
error exponents as intermediate steps, we focus on the resulting achievable rates
for clarity.

\subsection{Rates for DMCs} \label{sec:RSC_RatesDMC}

In this subsection, we assume that the channel is a DMC. 
We fix a joint distribution $Q_{UX}$, and let $Q_{UX,n}$ be a
corresponding type with $\max_{u,x}|Q_{UX,n}(u,x) - Q_{UX}(u,x)| \le \frac{1}{n}$.
We let $P_{\Xv_{u}}$ be the uniform distribution on the type class 
$T^{n_{u}}\big(Q_{X|U,n}(\cdot|u)\big)$, yielding
\begin{multline}
    \hspace*{-2ex}P_{\Xv_{u}}(\xv_{u}) \\ =\frac{1}{\big|T^{n_{u}}\big(Q_{X|U,n}(\cdot|u)\big)\big|}\openone\Big\{\xv_{u}\in T^{n_{u}}\big(Q_{X|U,n}(\cdot|u)\big)\Big\}. \label{eq:RSC_PX}
\end{multline}
Combining this with \eqref{eq:RSC_PU}, we have by symmetry
that each pair $(\Uv^{(i)},\Xv^{(i,j_1,j_2)})$ is uniformly distributed on $T^n(Q_{UX})$.

The main result of this section is stated in the following theorem,
which makes use of the LM rate defined in \eqref{eq:SU_PrimalLM} and 
the set $\SetTocc$ defined in \eqref{eq:SC_SetT0}.

\begin{thm} \label{thm:RSC_Main}
    For any finite set $\Uc$ and input distribution
    $Q_{UX}$, the rate
    \begin{equation}
        R=R_{0}+\sum_{u}Q_{U}(u)R_{1u} \label{eq:RSC_Main_R}
    \end{equation}
    is achievable provided that $R_{0}$ and $\{R_{1u}\}_{u=1}^{|\Uc|}$ satisfy
    \begin{equation}
        R_{1u} \le \LM\big(Q_{X|U}(\cdot|u)\big), \quad u\in\Uc \label{eq:RSC_R1u}
    \end{equation}
    \begin{multline}
        R_{0}\le\min_{\Ptilde_{UXY}\in\mathcal{T}_{0}(Q_{UX}\times W)}I_{\Ptilde}(U;Y) + \\~~ \bigg[\max_{\Kc\subseteq\Uc,\Kc\ne\emptyset}\sum_{u\in\Kc}Q_{U}(u)\Big(I_{\Ptilde}(X;Y|U=u)-R_{1u}\Big)\bigg]^{+}.\label{eq:RSC_R0}  
    \end{multline}
\end{thm}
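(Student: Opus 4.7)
The plan is to bound the random-coding error probabilities $\peobar$ and $\{\peibar\}_{u\in\Uc}$ separately, showing that \eqref{eq:RSC_R1u} forces each $\peibar\to 0$ and \eqref{eq:RSC_R0} forces $\peobar\to 0$.

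For the type-$u$ errors, observe that the decoder in \eqref{eq:RSC_SplitMetric} factorizes across the $|\Uc|$ blocks of lengths $\{n_u\}$, and conditional on $\Uv$ the partial codewords $\{\Xv_u^{(1,j_u)}\}_{j_u=1}^{M_{1u}}$ are i.i.d.\ uniform on $T^{n_u}(Q_{u,n_u})$, with effective subchannel $W(y|x)$ on the corresponding block. Thus each type-$u$ error probability reduces to that of single-user constant-composition coding of length $n_u$ with input distribution $Q_{X|U=u}$, and vanishes exponentially whenever $R_{1u}<\LM(Q_{X|U=u})$ by the standard LM rate analysis \cite{Hui,Csiszar1}.

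For the type-$0$ error I would proceed in two nested stages. First, apply a truncated union bound over the $M_0-1$ competing auxiliary indices $i\ne 1$ to get
\begin{equation*}
    \peobar \le \EE\bigg[\min\Big\{1,(M_0-1)\,\PP\Big[\bigcup_{j_1,\ldots,j_{|\Uc|}}\big\{q^n(\Xvbar^{(j_1,\ldots,j_{|\Uc|})},\Yv)\ge q^n(\Xv,\Yv)\big\}\,\Big|\,\Uv,\Xv,\Yv\Big]\Big\}\bigg],
\end{equation*}
and expand via joint types: condition on $\tilde{P}_{UY}$ for $(\Uvbar,\Yv)$ (with $\tilde{P}_U=Q_U$, $\tilde{P}_Y=P_Y$) and on the conditional type of each partial codeword $\Xvbar_u^{(j_u)}$ against $\Yv_u(\Uvbar)$. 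The competing codeword beats the true one exactly when the induced $\tilde{P}_{UXY}$ lies in $\SetTocc(P_{UXY})$ as in \eqref{eq:SC_SetT0}, and the probability that $\Uvbar$ has the prescribed joint type with $\Yv$ is $\doteq\exp(-nI_{\tilde{P}}(U;Y))$. Given $\Uvbar$ of that type, the inner union over $(j_1,\ldots,j_{|\Uc|})$ fits the template of Lemma \ref{lem:MU_MatchingInd_K} with $K=|\Uc|$: the families $\{\Xvbar_u^{(j_u)}\}_{j_u=1}^{M_{1u}}$ are independent across $u$ and pairwise independent within $u$, and the symmetry of the constant-composition sub-ensembles supplies hypothesis (ii). Applying the standard conditional-type count blockwise gives $\PP[(\Xvbar_u)_{u\in\Kc}\in\Ac_\Kc]\doteq\exp\big(-n\sum_{u\in\Kc}Q_U(u)I_{\tilde{P}}(X;Y|U=u)\big)$ for each nonempty $\Kc\subseteq\Uc$; multiplying by $\prod_{u\in\Kc}M_{1u}$ and using $\min\{1,\min_\Kc e^{-nc_\Kc}\}=e^{-n[\max_\Kc c_\Kc]^+}$ collapses the inner probability to exponent $-\big[\max_\Kc\sum_{u\in\Kc}Q_U(u)(I_{\tilde{P}}(X;Y|U=u)-R_{1u})\big]^+$. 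Combining with the outer factor $M_0 e^{-nI_{\tilde{P}}(U;Y)}$ and optimizing over the worst-case $\tilde{P}_{UXY}\in\SetTocc(Q_{UX}\times W)$ yields precisely \eqref{eq:RSC_R0}.

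The main obstacle is the joint-type bookkeeping: the partial codewords for distinct $u$ are independent, but they are assembled with $\Uvbar$ via $\Xvbar=\Xi(\Uvbar,\Xvbar_1,\ldots,\Xvbar_{|\Uc|})$ into a single codeword that is compared with $(\Xv,\Yv)$ through the single-letter metric, so one must correctly identify the sets $\Ac$ and $\Ac_\Kc$ of Lemma \ref{lem:MU_MatchingInd_K} after conditioning on $\Uvbar$ and $\Yv$, and verify the uniformity hypothesis (ii) in that conditional setting.
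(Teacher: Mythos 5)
Your proposal is correct and follows essentially the same route as the paper: single-user constant-composition analysis for the type-$u$ errors, and for the type-$0$ error a truncated union bound over the auxiliary index followed by a joint-type expansion and an application of the $K$-ary union lemma (Lemma~\ref{lem:MU_MatchingInd_K}), with the blockwise conditional-type counts producing the $\max_{\Kc}$ exponent. The only cosmetic difference is that you argue directly for general $|\Uc|$ via Lemma~\ref{lem:MU_MatchingInd_K}, whereas the paper spells out the $|\Uc|=2$ case with Lemma~\ref{lem:MU_MatchingInd} and notes the general case follows identically.
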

\begin{proof}
    As mentioned above, the proof is presented only for $\Uc=\{1,2\}$; the same 
    arguments apply in the general case.
    Observe that the type-1 error event corresponds to the error event for the standard
    constant-composition ensemble with rate $R_{11}$, length $n_{1}=nQ_{U}(1)$,
    input distribution $Q_{X|U}(\cdot|1)$, and ties treated as errors. A similar statement holds for
    the type-2 error probability $\peiibar$, and the analysis
    for these error events is identical to the LM rate derivation
    \cite{Hui,Csiszar1}, yielding \eqref{eq:RSC_R1u}.
    
    The error probability for the type-0 event is given by
    \begin{equation}
        \peobar = \PP\Bigg[\bigcup_{i\ne1}\bigcup_{j_{1},j_{2}}\bigg\{\frac{q^{n}(\Xv^{(i,j_{1},j_{2})},\Yv)}{q^{n}(\Xv,\Yv)}\ge1\bigg\}\Bigg],\label{eq:RSC_RefProof1}
    \end{equation}
    where $(\Yv|\Xv=\xv)\sim W^{n}(\cdot|\xv)$. 
    Writing the probability as an expectation given $(\Uv,\Xv,\Yv)$
    and applying the truncated union bound, we obtain
    \begin{multline}
        \peobar = c_{0}\EE\Bigg[\min\Bigg\{1,(M_{0}-1) \\ \times \EE\Bigg[\PP\bigg[\bigcup_{j_{1},j_{2}}\bigg\{\frac{q^{n}(\Xvbar^{(j_{1},j_{2})},\Yv)}{q^{n}(\Xv,\Yv)}\ge1\bigg\}\,\bigg|\,\Uvbar\bigg]\,\Bigg|\,\Uv,\Xv,\Yv\Bigg]\Bigg\}\Bigg], \label{eq:RSC_RefProof2}                                                                                                                                                                                                                                                                                                                           
    \end{multline}
    where $c_{0} \in [\frac{1}{2},1]$, since for independent events the truncated
    union bound is tight to within a factor of $\frac{1}{2}$ \cite[Lemma A.2]{ShulmanThesis}.  
    We have written the probability of the union over $j_{1}$ and $j_{2}$ as an expectation given $\Uvbar$. 
    
    Let the joint types of $(\Uv,\Xv,\Yv)$ and $(\Uvbar,\Xvbar^{(j_{1},j_{2})},\Yv)$
    be denoted by $P_{UXY}$ and $\Ptilde_{UXY}$ respectively. We claim that 
    \begin{equation}
        \frac{q^{n}(\Xvbar^{(j_{1},j_{2})},\Yv)}{q^{n}(\Xv,\Yv)}\ge1\label{eq:RSC_RefProof3}
    \end{equation}
    can be written as
    \begin{equation}
        \Ptilde_{UXY}\in\SetToncc(P_{UXY})\defeq\SetTocc(P_{UXY})\cap\Pc_{n}(\Uc\times\Xc\times\Yc),
    \end{equation}
    where $\SetTocc$ is defined in \eqref{eq:SC_SetT0}.
    The constraint $\Ptilde_{UX}=P_{UX}$ follows from the construction
    of the random coding ensemble, $\Ptilde_{Y}=P_{Y}$ follows
    since $(\Uv,\Xv,\Yv)$ and $(\Uvbar,\Xvbar^{(j_{1},j_{2})},\Yv)$
    share the same $\Yv$ sequence, and $\EE_{\Ptilde}[\log q(X,Y)]\ge\EE_{P}[\log q(X,Y)]$
    coincides with the condition in \eqref{eq:RSC_RefProof3}. Thus, expanding
    \eqref{eq:RSC_RefProof2} in terms of types, we obtain
    \begin{multline}
        \hspace*{-2ex}\peobar = c_{0} \sum_{P_{UXY}}\PP\Big[\big(\Uv,\Xv,\Yv\big)\in T^{n}(P_{UXY})\Big] \\ \times\min\Bigg\{1,
        (M_{0}-1)\sum_{\Ptilde_{UXY}\in\SetToncc(P_{UXY})}\PP\Big[\big(\Uvbar,\yv\big)\in T^{n}(\Ptilde_{UY})\Big] \\
        \times\PP\bigg[\bigcup_{j_{1},j_{2}}\Big\{\big(\uvbar,\Xvbar^{(j_{1},j_{2})},\yv\big)\in T^{n}(\Ptilde_{UXY})\Big\}\bigg]\Bigg\},\label{eq:RSC_RefProof5} 
    \end{multline}
    where we write $(\uvbar,\yv)$ to denote an arbitrary pair such that 
    $\yv\in T^{n}(P_{Y})$ and $(\uvbar,\yv)\in T^{n}(\Ptilde_{UY})$; note that these sequences \emph{implicitly} depend on $P_{UXY}$ and $\Ptilde_{UXY}$.
    
    Similarly to the discussion following \eqref{eq:RSC_SplitMetric},
    we observe that $\big(\uvbar,\Xvbar^{(j_{1},j_{2})},\yv\big)\in T^{n}(\Ptilde_{UXY})$
    if and only if $\big(\Xvbar_{u}^{(j_{u})},\yv_{u}(\uvbar)\big)\in T^{n_{u}}(\Ptilde_{XY|U}(\cdot,\cdot|u))$
    for $u=1,2$. Thus, applying Lemma \ref{lem:MU_MatchingInd}
    in Appendix \ref{sub:MU_BOUNDS} with $Z_{1}(j_{1})=\Xv_{1}^{(j_{1})}$, 
    $Z_{2}(j_{2})=\Xv_{2}^{(j_{2})}$, $\Ac_{1} = T_{\yv_{1}(\uvbar)}^{n_{1}}(\Ptilde_{XY|U}(\cdot,\cdot|1))$, 
    $\Ac_{2} = T_{\yv_{2}(\uvbar)}^{n_{2}}(\Ptilde_{XY|U}(\cdot,\cdot|2))$, and $\Ac = \big\{ (\Xv_1,\Xv_2) \,:\, \Xv_{u}\in T_{\yv_{u}(\uvbar)}^{n_{u}}(\Ptilde_{XY|U}(\cdot,\cdot|u)), \,u=1,2 \big\}$,
    we obtain
    \begin{align}
        &\PP\bigg[\bigcup_{j_{1},j_{2}}\Big\{\big(\uvbar,\Xvbar^{(j_{1},j_{2})},\yv\big)\in T^{n}(\Ptilde_{UXY})\Big\}\bigg] = \\ & \zeta_{0}^{\prime} \min\bigg\{1, \nonumber \min_{u=1,2}M_{1u}\PP\Big[\big(\Xvbar_{u},\yv_{u}(\uvbar)\big)\in T^{n_{u}}\big(\Ptilde_{XY|U}(\cdot,\cdot|u)\big)\Big], \nonumber \\
        &  M_{11}M_{12}\PP\Big[\bigcap_{u=1,2}\Big\{\big(\Xvbar_{u},\yv_{u}(\uvbar)\big)\in T^{n_{u}}\big(\Ptilde_{XY|U}(\cdot,\cdot|u)\big)\Big\}\Big]\bigg\}, \label{eq:RSC_RefProof6} 
    \end{align}
    where $\zeta_{0}^{\prime}\in[\frac{1}{4},1]$.  This is a minimization of
    four terms corresponding to the four subsets of $\{1,2\}$.
    
    Substituting \eqref{eq:RSC_RefProof6} into \eqref{eq:RSC_RefProof5} and
    applying standard properties of types \cite[Ch.~2]{CsiszarBook}, we obtain 
    \begin{multline}
           \lim_{n\to\infty}-\frac{1}{n}\log\peobar = \min_{P_{UXY}\,:\,P_{UX}=Q_{UX}} \\  \min_{\Ptilde_{UXY}\in \SetTocc(P_{UXY})} D(P_{UXY}\|Q_{UX}\times W) +\bigg[I_{\Ptilde}(U;Y) + \\
           \bigg[\max_{\Kc\subseteq\Uc,\Kc\ne\emptyset}\sum_{u\in\Kc}Q_{U}(u)\Big(I_{\Ptilde}(X;Y|U=u)-R_{1u}\Big)\bigg]^{+}-R_{0}\bigg]^{+}, \label{eq:RSC_Exponent} 
    \end{multline}
    where we have replaced the minimizations over types by minimizations 
    over all distributions in the same way as the proof of Theorem \ref{thm:MAC_Primal}.
    By a similar argument to \cite[Lemma 1]{Csiszar1}, the right-hand side
    of \eqref{eq:RSC_Exponent}, with only the second minimization kept, is
    continuous as a function of $P_{UXY}$ when restricted to distributions
    whose support is the same as that of $Q_{UX} \times W$.  It follows that the right-hand
    side of \eqref{eq:RSC_Exponent} is positive whenever \eqref{eq:RSC_R0}
    holds with strict inequality.
\end{proof}
 
The proof of Theorem \ref{thm:RSC_Main} gives an exponentially tight analysis
yielding the exponent in \eqref{eq:RSC_Exponent}. This does not 
prove that the resulting rate is ensemble-tight, since a subexponential decay of the error probability
to zero is possible in principle.  However, the changes required to prove the
tightness of the rate are minimal.  We saw that each condition in \eqref{eq:RSC_Main_R} corresponds
to an error event with independent constant-composition codewords and a reduced
block length, and hence it follows from existing analyses \cite{Merhav,MacMM}
that $\peibar \to 1$ when $R_{11}$ fails this condition, and analogously for $\peiibar$ and $R_{12}$.
To see that $\peobar \to 1$ when \eqref{eq:RSC_R0} fails, we let $\Ec_i$ be
the event that $q^n(\Xv^{(i,j_1,j_2)},\Yv) \ge q^n(\Xv,\Yv)$ for some $(j_1,j_2)$,
let $I_0(P_{UXY})$ denote the right-hand side of \eqref{eq:RSC_R0} with 
$P_{UXY}$ in place of $Q_1 \times Q_2 \times W$, and write 
\begin{align}
    \peobar &= \PP\Big[ \bigcup_{i\ne1} \Ec_i \Big] \label{eq:RSC_Tightness1} \\
            &= \sum_{P_{UXY}} \PP[(\Uv,\Xv,\Yv) \in T^n(P_{UXY})] \nonumber \\
            & \hspace{8ex}\times\Big(1- \big(1 - \PP[\Ec_2\,|\,P_{UXY}])^{M_0 - 1} \Big) \label{eq:RSC_Tightness2} \\
            &\ge \sum_{P_{UXY}} \PP[(\Uv,\Xv,\Yv) \in T^n(P_{UXY})] \nonumber \\
            &\hspace{8ex}\times \Big(1- \big(1 - p_0(n)e^{-nI_0(P_{UXY})} )^{M_0 - 1} \Big), \label{eq:RSC_Tightness3} 
\end{align}
where \eqref{eq:RSC_Tightness2} follows since the events $\Ec_i$ are conditionally
i.i.d.~given that $(\Uv,\Xv,\Yv)$ has a given joint type $P_{UXY}$, and \eqref{eq:RSC_Tightness3}
holds for some subexponential factor $p_0(n)$ by \eqref{eq:RSC_Exponent}.
Next, we observe from the law of large numbers that the joint type of $(\Uv,\Xv,\Yv)$ 
approaches $Q_1 \times Q_2 \times W$ with high probability as $n\to\infty$.  Moreover, by the same argument
as that of the LM rate \cite[Lemma 1]{Csiszar2}, $I_0(P_{UXY})$ is continuous
in $P_{UXY}$.  Combining these observations, we readily obtain from \eqref{eq:RSC_Tightness3}
that $\peobar \to 1$ if $R_0 > I_0(Q_1 \times Q_2 \times W)$, as desired.

\subsection{Comparison to Standard Superposition Coding}

In this subsection, we show that the conditions in \eqref{eq:RSC_R1u}--\eqref{eq:RSC_R0}
can be weakened to \eqref{eq:SC_R1_CC}--\eqref{eq:SC_Rsum_CC} upon identifying 
\begin{equation}
    R_{1}=\sum_{u}Q_{U}(u)R_{1u}. \label{eq:RSC_Weaken0}
\end{equation}

\begin{prop} \label{prop:SC_Comparison2}
    For any finite auxiliary alphabet $\Uc$ and input distribution $Q_{UX}$, 
    the rate $\max_{R_{0},R_{11},\dotsc,R_{1|\Uc|}}R_{0}+\sum_{u}Q_{U}(u)R_{1u}$ 
    resulting from Theorem \ref{thm:RSC_Main} is at least as high as the rate 
    $\max_{R_{0},R_{1}}R_{0}+R_{1}$ resulting from Theorem \ref{thm:SC_Rate_SC}.
\end{prop}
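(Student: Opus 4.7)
The plan is to exhibit, for every $(R_0,R_1)$ feasible under the standard SC constraints \eqref{eq:SC_R1_CC}--\eqref{eq:SC_Rsum_CC}, a tuple $(R_0',\{R_{1u}\}_{u\in\Uc})$ feasible under the refined SC constraints \eqref{eq:RSC_R1u}--\eqref{eq:RSC_R0} whose sum rate $R_0' + \sum_u Q_U(u)R_{1u}$ equals $R_0+R_1$. I will carry this out by rewriting both sum-rate conditions in a common primal form, transferring rates appropriately between ensembles, and invoking Jensen's inequality applied to $[\,\cdot\,]^+$.

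First I would reformulate the two sum-rate constraints. For the refined SC condition \eqref{eq:RSC_R0}, write $a_u(\Ptilde) \defeq I_{\Ptilde}(X;Y|U=u) - R_{1u}$. Composing the outer $[\,\cdot\,]^+$ with the inner maximum over nonempty $\Kc$ amounts to allowing $\Kc=\emptyset$ as well, and the optimizing $\Kc$ is $\{u : a_u(\Ptilde) > 0\}$, so \eqref{eq:RSC_R0} is equivalent to
\begin{equation*}
    R_0 \le \min_{\Ptilde_{UXY} \in \SetTocc(Q_{UX}\times W)} I_{\Ptilde}(U;Y) + \sum_u Q_U(u)\bigl[I_{\Ptilde}(X;Y|U=u) - R_{1u}\bigr]^+.
\end{equation*}
For the standard SC condition \eqref{eq:SC_Rsum_CC}, using $I_{\Ptilde}(U,X;Y) = I_{\Ptilde}(U;Y) + I_{\Ptilde}(X;Y|U)$ and a case split on whether $I_{\Ptilde}(U;Y)\le R_0$ or $I_{\Ptilde}(U;Y) > R_0$, one checks that \eqref{eq:SC_Rsum_CC} is equivalent to
\begin{equation*}
    R_0 \le \min_{\Ptilde_{UXY} \in \SetTocc(Q_{UX}\times W)} I_{\Ptilde}(U;Y) + \bigl[I_{\Ptilde}(X;Y|U) - R_1\bigr]^+.
\end{equation*}

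Next I would split the SC single-user rate $R_1$ between the $u$'s. The per-$u$ LM constraints in \eqref{eq:RSC_R1u} impose the metric inequality $\EE_{\Ptilde}[\log q(X,Y)\mid U=u] \ge \EE_P[\log q(X,Y)\mid U=u]$ for each $u$, whereas the SC constraint \eqref{eq:SC_R1_CC} only imposes the $Q_U$-averaged version of this inequality. The LM minimizations therefore take place over a smaller feasible set for each $u$, yielding
$$\textstyle\sum_u Q_U(u)\LM(Q_{X|U=u}) \ge \min_{\Ptilde \in \SetTicc(Q_{UX}\times W)} I_{\Ptilde}(X;Y|U) \ge R_1.$$
Hence there exist $R_{1u} \in [0,\LM(Q_{X|U=u})]$ with $\sum_u Q_U(u) R_{1u} = R_1$, e.g.\ by rescaling the vector $(\LM(Q_{X|U=u}))_u$ by a common factor in $[0,1]$. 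With this choice, \eqref{eq:RSC_R1u} holds.

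Finally I would set $R_0' = R_0$ and verify feasibility via Jensen. For any reals $\{a_u\}$, convexity of $[\,\cdot\,]^+$ gives $\sum_u Q_U(u)[a_u]^+ \ge \bigl[\sum_u Q_U(u) a_u\bigr]^+$. Applying this with $a_u = I_{\Ptilde}(X;Y|U=u) - R_{1u}$, whose $Q_U$-average is $I_{\Ptilde}(X;Y|U) - R_1$, shows that the reformulated refined SC bound is pointwise (in $\Ptilde$) at least the reformulated standard SC bound, so the minimum over $\Ptilde \in \SetTocc$ of the former dominates that of the latter. Therefore $R_0' = R_0$ satisfies the refined SC condition, and the total refined SC rate is $R_0 + \sum_u Q_U(u) R_{1u} = R_0 + R_1$, establishing the proposition.

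The main delicate step is the first reformulation of \eqref{eq:SC_Rsum_CC}, because $R_0$ appears both as the quantity being bounded and inside the auxiliary constraint $I_{\Ptilde}(U;Y) \le R_0$; everything else then reduces to one application of Jensen's inequality and a routine feasibility check on the per-$u$ rate split.
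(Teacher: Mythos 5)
Your proof is correct and reaches the conclusion by essentially the same route as the paper, which takes $\Kc=\Uc$ inside the maximum to lower-bound the right-hand side of \eqref{eq:RSC_R0} by $I_{\Ptilde}(U;Y)+[I_{\Ptilde}(X;Y|U)-R_1]^{+}$ and then shows \eqref{eq:SC_Rsum_CC} implies $R_0$ does not exceed this quantity; your exact identity $[\max_{\Kc\ne\emptyset}\sum_{u\in\Kc}Q_U(u)a_u]^{+}=\sum_u Q_U(u)[a_u]^{+}$ followed by Jensen produces the same intermediate bound with a slightly more transparent justification, and your treatment of the $R_{1u}$ split via $\sum_u Q_U(u)\LM(Q_{X|U=u})\ge\min_{\Ptilde\in\SetTicc}I_{\Ptilde}(X;Y|U)$ mirrors the paper's. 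One minor imprecision: your reformulation of \eqref{eq:SC_Rsum_CC} as $R_0\le\min_{\Ptilde}I_{\Ptilde}(U;Y)+[I_{\Ptilde}(X;Y|U)-R_1]^{+}$ is not an \emph{equivalence} (a $\Ptilde$ with $I_{\Ptilde}(U;Y)=R_0$ and $I_{\Ptilde}(X;Y|U)<R_1$ would satisfy your condition while violating \eqref{eq:SC_Rsum_CC}); only the forward implication holds, but since that is the only direction your argument uses, the proof is unaffected.
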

\begin{proof}
    We begin by weakening \eqref{eq:RSC_R0} to \eqref{eq:SC_Rsum_CC}.  We lower bound the 
    right-hand side of \eqref{eq:RSC_R0} by replacing the maximum over $\Kc$ by the  
    particular choice $\Kc=\Uc$, yielding
    \begin{equation}
        R_{0}\le\min_{\Ptilde_{UXY}\in\SetTocc(Q_{UX}\times W)}I_{\Ptilde}(U;Y) + \Big[I_{\Ptilde}(X;Y|U)-R_{1}\Big]^{+}, \label{eq:RSC_Weaken1}
    \end{equation}
    where we have used \eqref{eq:RSC_Weaken0} and the definition of conditional mutual information.  
    We can weaken \eqref{eq:RSC_Weaken1} to \eqref{eq:SC_Rsum_CC} using the chain rule for mutual
    information, and noting that \eqref{eq:RSC_Weaken1} is always satisfied when the minimizing
    $\Ptilde_{UXY}$ satisfies $I_{\Ptilde}(U;Y) > R_{0}$.
    
    Next, we show that highest value of $R_{1}$ permitted by the $|\Uc|$ conditions in \eqref{eq:RSC_R1u}, 
    denoted by $R_{1}^{*}$, can be lower bounded by the right-hand side of \eqref{eq:SC_R1_CC}. 
    From \eqref{eq:RSC_Weaken0} and \eqref{eq:RSC_R1u}, we have
    \begin{equation}
        R_{1}^{*} = \sum_{u}Q_{U}(u)I_{\Ptilde^{*}}(X;Y|U=u), \label{eq:RSC_Weaken2}
    \end{equation}
    where $\Ptilde_{XY|U}^{*}(\cdot,\cdot|u)$ is the distribution that achieves the minimum in \eqref{eq:SU_PrimalLM} 
    under $Q_{X|U}(\cdot|u)$.  Defining the joint distribution $\Ptilde_{UXY}^{*}$ accordingly with $\Ptilde_{U}^{*}=Q_{U}$,
    we can write \eqref{eq:RSC_Weaken2} as
    \begin{equation}
        R_{1}^{*} = I_{\Ptilde^{*}}(X;Y|U).
    \end{equation}
    Therefore, we can lower bound $R_{1}^{*}$ by the right-hand side of \eqref{eq:SC_R1_CC} provided that
    $\Ptilde_{UXY}^{*} \in \SetTicc(Q_{UX}\times W)$.  The constraints $\Ptilde_{UX}^{*}=Q_{UX}$ and
    $\Ptilde_{UY}^{*}=P_{UY}$ in \eqref{eq:SC_SetT1} are satisfied since we have chosen 
    $\Ptilde_{U}^{*}=Q_{U}$, and since the constraints in \eqref{eq:SU_PrimalLM} imply 
    $\Ptilde_{X|U}^{*}(\cdot|u)=Q_{X|U}(\cdot|u)$ and $\Ptilde_{Y|U}^{*}(\cdot|u)=P_{Y|U}(\cdot|u)$ for all $u\in\Uc$.  The constraint 
    $\EE_{\Ptilde^{*}}[\log q(X,Y)]\ge\EE_{P}[\log q(X,Y)]$ is satisfied since, from 
    \eqref{eq:SU_PrimalLM}, we have $\EE_{\Ptilde^{*}}[\log q(X,Y)\,|\,U=u]\ge\EE_{P}[\log q(X,Y)\,|\,U=u]$ for all $u\in\Uc$.
\end{proof} 

Intuitively, one can think of the gain of the refined superposition coding
ensemble as being due to a stronger
dependence among the codewords.  For standard SC, the codewords $\{\Xv^{(i,j)}\}_{j=1}^{M_{1}}$ are
conditionally independent given $\Uv^{(i)}$, whereas for refined superposition coding
this is generally not the case.
The additional structure leads to further constraints in the minimizations, and 
maxima over more terms in the objective functions, both leading to higher overall rates.

It should be noted, however, that the exponents for  standard superposition coding
may be higher, particularly at 
low to moderate rates.  In particular, we noted in the proof of Theorem \ref{thm:RSC_Main} that
the type-1 and type-2 error events are equivalent to a single-user channel, but the corresponding
block lengths are only $n_{1}$ and $n_{2}$.  Thus, if either $Q_{U}(1)$ or $Q_{U}(2)$ is close to zero,
the corresponding exponent is small.

Finally, we recall that the standard superposition coding rate is at least as high as Lapidoth's expurgated parallel coding rate \cite{MMSomekh}, though no
example of strict improvement is known.

\subsection{Dual Expressions and General Alphabets}

In this subsection, we present a dual expression for the rate given in 
Theorem \ref{thm:RSC_Main} in the case that $|\Uc|=2$, as well as extending
the result to general alphabets $\Xc$ and $\Yc$.

With $\Uc=\{1,2\}$, the condition in \eqref{eq:RSC_R0} is given by
\begin{multline}
    R_{0}\le\min_{\Ptilde_{UXY}\in\SetTocc(Q_{UX}\times W)}I_{\Ptilde}(U;Y) \\ + \Big[\max\Big\{Q_{U}(1)\big(I_{\Ptilde}(X;Y|U=1)-R_{11}\big),
    \\ Q_{U}(2)\big(I_{\Ptilde}(X;Y|U=2)-R_{12}\big), I_{\Ptilde}(X;Y|U)-R_{1}\Big\}\Big]^{+},\label{eq:RSC_AltR0}
\end{multline}
where 
\begin{equation}
    R_{1}\defeq\sum_{u}Q_{U}(u)R_{1u}. \label{eq:RSC_R1Expr}
\end{equation}

Since the right-hand side of \eqref{eq:RSC_R1u} is the LM rate, we can  
use the dual expression in \eqref{eq:SU_DualLM}.  The main result of this
subsection gives a dual expression for \eqref{eq:RSC_AltR0}, and extends its
validity to memoryless MACs with infinite or continuous alphabets.

We again use cost-constrained random coding.
We consider the ensemble given in \eqref{eq:RSC_Distr}, with $P_{\Xv_{u}}$ given by
\begin{equation}
    P_{\Xv_{u}}(\xv_{u}) = \frac{1}{\mu_{u,n_{u}}}\prod_{i=1}^{n_{u}}Q_{X|U}(x_{u,i}|u_i)\openone\big\{\xv_{u}\in \Dc_{u,n_{u}}\big\}, \label{eq:RSC_DistrCost}
\end{equation}
where
\begin{multline}
    \Dc_{u,n_{u}} \defeq\Bigg\{ \xv_{u}:\left|\frac{1}{n_{u}}\sum_{i=1}^{n_{u}}a_{u,l}(x_{u,i})-\phi_{u,l}\right|\le\frac{\delta}{n_{u}}, \\ l=1,\dotsc,L_{u}\Bigg\} \label{eq:RSC_SetD}
\end{multline}
\begin{equation}
    \phi_{u,l}\defeq\EE_{Q_u}\big[a_{u,l}(X_u)\,|\,U=u\big], \label{eq:RSC_Phi}
\end{equation}
and where $\mu_{u,n_{u}}$, $\{a_{u,l}\}$ and $\delta$ are defined analogously to \eqref{eq:CNT_Domain},
and $n_{u}$ is defined in \eqref{eq:RSC_BC_n1}. 

\begin{thm} \label{thm:RSC_Dual}
    The condition in \eqref{eq:RSC_AltR0} holds if and only if the following holds for at least one of $u=1,2$:
    \begin{align}
        & R_{0} \le \sup_{s\ge0,\rho_{1}\in[0,1],\rho_{2}\in[0,1],a(\cdot,\cdot)} - \sum_{u'=1,2}\rho_u(u')Q_{U}(u')R_{1u'} \nonumber \\
        &+\EE\left[\log\frac{\big(q(X,Y)^{s_u(U)}e^{a(U,X)}\big)^{\rho_u(U)}}{\EE\Big[\Big(\EE\big[q(\Xbar,Y)^{s_u(\Ubar)}e^{a(\Ubar,\Xbar)}\,\big|\,\Ubar\big]\Big)^{\rho_u(\Ubar)}\,\Big|\,Y\Big]}\right]  \label{eq:RSC_DualR0_1}
    \end{align}
    where 
    \begin{gather}
        \rho_1(1)=\rho_{1},~ \rho_1(2)=\rho_{1}\rho_{2},~ s_1(1)=\rho_{2}s,~ s_1(2)=s \label{eq:RSC_DualVars1} \\
        \rho_2(1)=\rho_{1}\rho_{2},~ \rho_2(2)=\rho_{2},~ s_2(1)=s,~ s_2(2)=\rho_{1}s \label{eq:RSC_DualVars2}
    \end{gather} 
    and $(U,X,Y,\Ubar,\Xbar)\sim Q_{UX}(u,x)W(y|x)Q_{UX}(\ubar,\xbar)$.
    
    Moreover, for any mismatched memoryless channel (possibly having infinite 
    or continuous alphabets) and input 
    distribution $Q_{UX}$ $(\Uc={1,2})$, the rate 
    $R=R_{0}+\sum_{u=1,2}Q_{U}(u)R_{1u}$ is achievable for any triplet 
    $(R_{0},R_{11},R_{12})$ satisfying \eqref{eq:RSC_R1u} (with $\LM$ 
    defined in \eqref{eq:SU_DualLM}) and 
    \eqref{eq:RSC_DualR0_1} for at least one of $u=1,2$.  The supremum in
    \eqref{eq:SU_DualLM} is subject to $\EE_{Q}[a(X)^2]<\infty$, and 
    that in \eqref{eq:RSC_DualR0_1}
    is subject to $\EE_Q[a(U,X)^2]<\infty$.  Furthermore, the rate is achievable
    using cost-constrained coding in \eqref{eq:RSC_DistrCost} with $L_{1}=L_{2}=2$.
\end{thm}
\begin{proof}
    Both the proof of the primal-dual equivalence is 
    and the direct derivation of \eqref{eq:RSC_DualR0_1}
    are given in Appendix \ref{sub:RSC_PROOFS}.  The choice $L_{1}=L_{2}=2$ suffices since for $u=1,2$, one cost is required for \eqref{eq:RSC_R1u}
    and another for \eqref{eq:RSC_DualR0_1}.  
    It suffices to let the cost functions for 
    \eqref{eq:RSC_DualR0_1} with $u=1$ and $u=2$ coincide, since the 
    theorem only requires that one of the two hold.
\end{proof}

The condition in \eqref{eq:RSC_DualR0_1} bears a strong 
resemblance to the standard superposition coding
condition in \eqref{eq:SC_Rsum_Dual}; the latter
can be recovered by setting $\rho_{2}=1$ in the condition with $u=1$, or 
or $\rho_{1}=1$ in the condition with $u=2$.

\section{Numerical Examples} \label{sec:MU_COMPARISONS}

\subsection{Error Exponent for the Multiple-Access Channel} \label{sub:MAC_NUMERICAL}

We revisit the parallel BSC example given by Lapidoth \cite{MacMM},
consisting of binary inputs $\Xc_1=\Xc_2=\{0,1\}$ and a pair of binary
outputs $\Yc = \{0,1\}^2$.  The output is given by $Y = (Y_1,Y_2)$, where
for $\nu=1,2$, $Y_\nu$ is generated by passing $X_\nu$ through a 
binary symmetric channel (BSC) with some crossover probability $\delta_\nu < 0.5$.
The mismatched decoder assumes that both crossover probabilities are 
equal to $\delta<0.5$. 
The decoder assumes that both crossover
probabilities are equal. The corresponding
decoding rule is equivalent to minimizing sum of $t_{1}$ and $t_{2}$,
where $t_{\nu}$ is the number of bit flips from the input sequence
$\xv_{\nu}$ to the output sequence $\yv_{\nu}$. As noted in \cite{MacMM}, 
this decision rule is in fact equivalent to ML. 

We let both $Q_{1}$ and $Q_{2}$ be equiprobable on $\{0,1\}$.
With this choice, it was shown in \cite{MacMM} that the right-hand
side of \eqref{eq:MAC_R12'} is no greater than 
\begin{equation}
    2\left(1-H_{2}\left(\frac{\delta_{1}+\delta_{2}}{2}\right)\right)\,\,\mathrm{bits/use}, \label{eq:MAC_Rworse2}
\end{equation}
where $H_{2}(\cdot)$ is the binary entropy function in bits.
In fact, this is the same rate that would be obtained by considering
the corresponding \emph{single-user} channel with $X=(X_1,X_2)$, and
applying the LM rate with a uniform distribution on the quaternary
input alphabet \cite{MacMM}.

On the other hand, the refined condition in \eqref{eq:MAC_R12_LM}
can be used to prove the achievability of \emph{any} $(R_{1},R_{2})$
within the rectangle with corners $(0,0)$ and $(C_{1},C_{2})$, where
$C_{\nu}\defeq1-H_{2}(\delta_{\nu})$ \cite{MacMM}.  This implies that
the mismatched capacity region coincides with the (matched) capacity region.

We evaluate the error exponents using the optimization software YALMIP
\cite{YALMIP}. Figure \ref{fig:MAC_Exponents} plots each of the
exponents as a function of $\alpha$, where the rate pair is
$(R_{1},R_{2})=(\alpha C_{1},\alpha C_{2})$. While the overall
error exponent $\Ercc(\Qv,R_{1},R_{2})$ in \eqref{eq:MAC_Er_LM}
is unchanged at low to moderate values of $\alpha$ when $\Erccprime$ in \eqref{eq:MAC_Er12'}
is used in place of $\Eiiircc$, this is not true for high values
of $\alpha$. Furthermore, consistent with the preceding discussion,
$\Erccprime$ is non-zero only for $\alpha<0.865$,
whereas $\Eiiircc$ is positive for all $\alpha<1$.
The fact that $\Eiiircc$
and $\Erccprime$ coincide at low values of $\alpha$
is consistent with \cite[Cor. 5]{MACExponent5}, which states
that $\Erccprime$ is ensemble-tight at low rates.

\begin{figure}
    \begin{centering}
        \includegraphics[width=0.4\paperwidth]{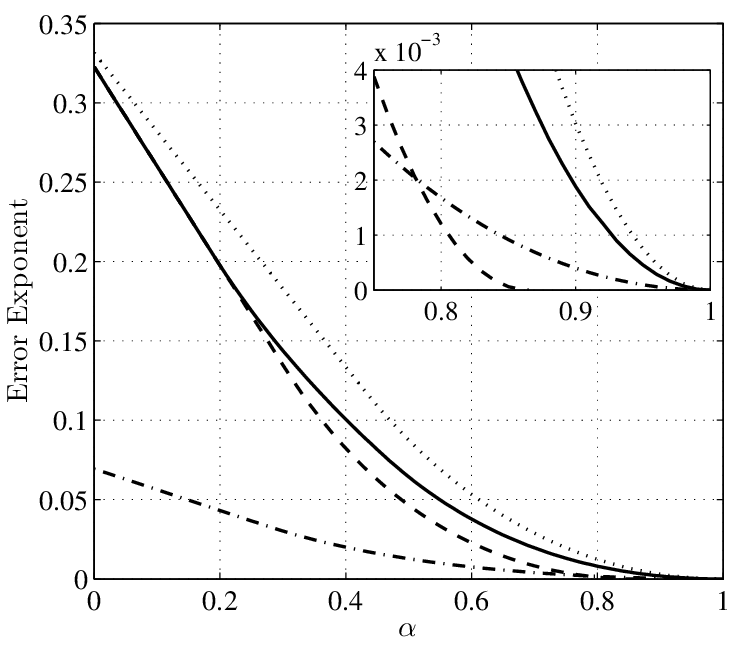}
        \par
    \end{centering}
    
    \caption{Error exponents $\Eircc$ (dotted), $\Eiircc$
        (dash-dot), $\Eiiircc$ (solid) and $\Erccprime$
        (dashed) for the parallel channel 
        using $\delta_{1}=0.05$, $\delta_{2}=0.25$ and equiprobable input
        distributions on $\{0,1\}$. The rate pair is given by $(R_1,R_2)=(\alpha C_1,\alpha C_2).$ \label{fig:MAC_Exponents}}
\end{figure}

\subsection{Achievable Rates for Single-User Channels} \label{sec:MU_NUMERICAL}

In this subsection, we provide examples comparing the two versions of superposition coding and the LM rate.  
We do not explicitly give values for Lapidoth's rate \cite{MacMM}, 
since for each example given, we found it to coincide with the 
superposition coding rate (see Theorem \ref{thm:SC_Rate_SC}).

\subsubsection{Sum Channel}

We first consider a sum-channel analog of the parallel-channel example given in Section \ref{sub:MAC_NUMERICAL}.
Given two channels $(W_{1},W_{2})$ respectively defined on the alphabets
$(\Xc_{1},\Yc_{1})$ and $(\Xc_{2},\Yc_{2})$,
the sum channel is defined to be the channel $W(y|x)$ with
$|\Xc|=|\Xc_{1}|+|\Xc_{2}|$ and $|\Yc|=|\Yc_{1}|+|\Yc_{2}|$
such that one of the two subchannels is used on each transmission
\cite{Shannon}. One can similarly combine two metrics $q_{1}(x_{1},y_{1})$
and $q_{2}(x_{2},y_{2})$ to form a sum metric $q(x,y)$.  Assuming without loss
of generality that $\Xc_{1}$ and $\Xc_{2}$ are disjoint and $\Yc_{1}$ 
and $\Yc_{2}$ are disjoint, we have
\begin{equation}
    q(x,y) =
        \begin{cases}
            q_1(x_{1},y_{1}) & x_{1}\in\Xc_{1}\text{ and }y_{1}\in\Yc_{1} \\
            q_2(x_{2},y_{2}) & x_{2}\in\Xc_{2}\text{ and }y_{2}\in\Yc_{2} \\
            0 & \text{otherwise},
        \end{cases}
\end{equation}
and similarly for $W(y|x)$. Let $\hat{Q}_{1}$ and $\hat{Q}_{2}$ be
the distributions that maximize the LM rate in \eqref{eq:SU_PrimalLM} 
on the respective subchannels. We set $\Uc=\{1,2\}$, $Q_{X|U}(\cdot|1)=(\hat{Q}_{1},\boldsymbol{0})$
and $Q_{X|U}(\cdot|2)=(\boldsymbol{0},\hat{Q}_{2})$, where $\boldsymbol{0}$
denotes the zero vector. We leave $Q_{U}$ to be specified.

Combining the constraints $\Ptilde_{UX}=Q_{UX}$ and $\EE_{\Ptilde}[\log q(X,Y)]\ge\EE_{P}[\log q(X,Y)]$
in \eqref{eq:SC_SetT0}, we find that the minimizing
$\Ptilde_{UXY}$ in \eqref{eq:RSC_R0} only has
non-zero values for $(u,x,y)$ such that (i) $u=1$, $x\in\Xc_{1}$
and $y\in\Yc_{1}$, or (ii) $u=2$, $x\in\Xc_{2}$
and $y\in\Yc_{2}$.  It follows that $U$ is a deterministic
function of $Y$ under the minimizing $\Ptilde_{UXY}$, and hence
\begin{equation}
    I_{\Ptilde}(U;Y)=H(Q_{U})-H_{\Ptilde}(U|Y)=H(Q_{U}).
\end{equation}
Therefore, the right-hand side of \eqref{eq:RSC_R0} is lower bounded
by $H(Q_{U})$. Using \eqref{eq:RSC_Main_R}, it follows that we can
achieve the rate
\begin{multline}
    H(Q_U) + Q_U(1)I_{1}^{\mathrm{LM}}(\hat{Q}_{1}) + Q_U(2)I_{2}^{\mathrm{LM}}(\hat{Q}_{2}) \\ = \log\big(e^{I_{1}^{\mathrm{LM}}(\hat{Q}_{1})}+e^{I_{2}^{\mathrm{LM}}(\hat{Q}_{2})}\big) \label{eq:SC_SumChRate}
\end{multline}
where $I_{\nu}^{\mathrm{LM}}$ is the LM rate for subchannel $\nu$, 
and the equality follows by optimizing $Q_U$ in the same
way as \cite[Sec. 16]{Shannon}, yielding $Q_U(1) = \frac{e^{I_{1}^{\mathrm{LM}}(\hat{Q}_{1})}}{e^{I_{1}^{\mathrm{LM}}(\hat{Q}_{1})}+e^{I_{2}^{\mathrm{LM}}(\hat{Q}_{2})}}$.  
Using similar arguments to \cite{MacMM}, it can be shown that the 
LM rate with an optimized input distribution can be strictly 
less than \eqref{eq:SC_SumChRate} even for simple examples (e.g.~binary symmetric subchannels).

\subsubsection{Zero Undetected Error Capacity}

It was shown by Csisz\'{a}r and Narayan \cite{Csiszar2} that two special cases of the
mismatched capacity are the zero-undetected erasures capacity \cite{ZUEC} and the 
zero-error capacity \cite{ShannonZero}.  Here we consider the zero-undetected erasures capacity, defined to 
be the highest achievable rate in the case that the decoder is required to know with
certainty whether or not an error has occurred.  For any DMC, the zero-undetected erasures capacity is equal to
the mismatched capacity under the decoding metric $q(x,y)=\openone\{W(y|x)>0\}$ \cite{Csiszar2}. 

We consider an example from \cite{ZUEC2}, where $\Xc=\Yc=\{0,1,2\}$, and
the channel is described by the entries of
\begin{align} 
    \Wv & = \left[\begin{array}{ccc}
        0.75 & 0.25 & 0 \\
        0 & 0.75 & 0.25 \\
        0.25 & 0 & 0.75 \\
    \end{array}\right]
\end{align}
where $x$ indexes the rows and $y$ indexes the columns.

Using an exhaustive search to three decimal places, we found the optimized LM rate to be
$R_{\mathrm{LM}}^{*}=0.599$ bits/use, using the input distribution
$Q=(0.449,0.551,0)$.  It was stated
in \cite{ZUEC2} that the rate obtained by considering the second-order
product of the channel and metric (see \cite{Csiszar2})
is equal to $R_{\mathrm{LM2}}^{*}=0.616$ bits/use.  Using local optimization
techniques, we verified that this rate is achieved with $Q=(0,0.250,0,0.319,0,0,0,0.181,0.250)$,  
where the order of the inputs is $(0,0),(0,1),(0,2),(1,0),\dotsc,(2,2)$.  

The global optimization of \eqref{eq:SC_R1_CC}--\eqref{eq:SC_Rsum_CC}
over $\Uc$ and $Q_{UX}$ appears to be difficult. Setting $|\Uc|=2$
and applying local optimization techniques using a number of starting
points, we obtained an achievable rate of $R_{\mathrm{sc}}^{*}=0.695$ bits/use,
with $Q_{U}=(0.645,0.355)$, $Q_{X|U}(\cdot|1)=(0.3,0.7,0)$ and $Q_{X|U}(\cdot|2)=(0,0,1)$.
Thus, superposition coding not only yields an improvement over the single-letter 
LM rate, but also over the two-letter version.  Note that since the decoding
metric is the erasures-only metric, applying the LM rate
to the $k$-th order product channel achieves the mismatched capacity
in the limit as $k\to\infty$ \cite{Csiszar2}; however, in this example, 
a significant gap remains for $k = 2$.

\subsubsection{A Case where Refined Superposition Coding Outperforms Standard Superposition Coding}

Here we consider the channel and decoding metric described by the entries
of
\begin{align}
    \Wv & = \left[\begin{array}{cccc}
        0.99 & 0.01 & 0 & 0\\
        0.01 & 0.99 & 0 & 0\\
        0.1~ & 0.1~ & 0.7 & \,0.1~\\
        0.1 & 0.1 & 0.1 & \,0.7~
    \end{array}\right] \label{eq:Example_W} \\
    \qv &= \left[\begin{array}{cccc}
        1 & 0.5 & 0 & 0\\
        0.5 & 1 & 0 & 0\\
        0.05 & 0.15 & 1 & 0.05\\
        0.15 & 0.05 & 0.5 & 1
    \end{array}\right].\label{eq:Example_q}
\end{align}
We have intentionally chosen a highly asymmetric channel and metric,
since such examples often yield larger gaps between the various
achievable rates. Using an exhaustive search to three decimal places, we found the optimized 
LM rate to be $R_{\mathrm{LM}}^{*}=1.111$ bits/use,
which is achieved by the input distribution $Q_{X}^{*}=(0.403,0.418,0,0.179)$.

Setting $|\Uc|=2$ and applying local optimization techniques using a number of starting
points, we obtained an achievable rate of $R_{\mathrm{rsc}}^{*}=1.313$ bits/use,
with $Q_{U}=(0.698,0.302)$, $Q_{X|U}(\cdot|1)=(0.5,0.5,0,0)$ and $Q_{X|U}(\cdot|u)=(0,0,0.528,0.472)$.
We denote the corresponding input distribution by $Q_{UX}^{(1)}$.

Applying similar techniques to the standard superposition coding rate, we obtained an achievable
rate of $R_{\mathrm{sc}}^{*}=1.236$ bits/use, with $Q_{U}=(0.830,0.170)$,
$Q_{X|U}(\cdot|1)=(0.435,0.450,0.115,0)$ and $Q_{X|U}(\cdot|2)=(0,0,0,1)$. We
denote the corresponding input distribution by $Q_{UX}^{(2)}$.

The achievable rates for this example are summarized in Table \ref{tab:Rates},
where $Q_{UX}^{(\mathrm{LM})}$ denotes the distribution in which
$U$ is deterministic and the $X$-marginal maximizes the LM rate.
While the achievable rate of Theorem \ref{thm:RSC_Main} coincides with
that of Theorem \ref{thm:SC_Rate_SC} under $Q_{UX}^{(2)}$,
the former is significantly higher under $Q_{UX}^{(1)}$. Both types
of superposition coding yield a strict improvement over the LM rate.

Our parameters may not be globally optimal, and thus we cannot conclude
from this example that refined superposition coding yields a strict improvement
over standard superposition coding (and hence over Lapidoth's rate \cite{MacMM}) after optimizing 
$\Uc$ and $Q_{UX}$. However, 
improvements for a fixed set of random-coding parameters are still of 
interest due to the fact that global optimizations are prohibitively complex in general.

\begin{center}
    \begin{table}[t]   
        \begin{center}  
            \caption{Achievable rates (bits) for the mismatched channel  \eqref{eq:Example_W}--\eqref{eq:Example_q}.}   
            \label{tab:Rates}   
            \begin{tabular}{@{}ccc@{}}     
                \toprule  Input Distribution & Refined SC & Standard SC \\  
                \midrule         
                $Q_{UX}^{(1)}$               & 1.313      & 1.060       
                \vspace{0.5mm}\\     
                $Q_{UX}^{(2)}$               & 1.236      & 1.236       
                \vspace{0.5mm}\\     
                $Q_{UX}^{(\mathrm{LM})}$     & 1.111      & 1.111 \\
                \bottomrule   
            \end{tabular}     
        \end{center} 
    \end{table}
    \par
\end{center}

\section{Conclusion} \label{sec:MU_CONCLUSION}

We have provided techniques for studying multiuser random-coding ensembles
for channel coding problems with mismatched decoding.  The key initial step
in each case is the application of a refined bound on the probability
of a multiply-indexed union (\emph{cf.}~Appendix \ref{sub:MU_BOUNDS}),
from which one can apply constant-composition coding and the method 
of types to obtain primal expressions and prove ensemble tightness, or
cost-constrained random coding to obtain dual expressions and 
continuous-alphabet generalizations.  We have demonstrated our techniques
on both the mismatched MAC and the single-user channel with refined
superposition coding, with the latter providing a new achievable rate 
at least as good as all previous rates in the literature.

After the initial preparation of this work, the superposition
coding rate from Theorems \ref{thm:SC_Rate_SC}--\ref{thm:SC_Rate_Dual}
was used to find an example for which the LM rate is strictly smaller than the
mismatched capacity for a binary-input DMC \cite{PaperBI}, thus
providing a counter-example to the converse reported in \cite{Balikirsky}.
Another work building on this paper is \cite{TanRelay},
which considers the matched relay channel, and shows that 
the utility of our refined union bounds is not restricted to 
mismatched decoders.  

\appendices
\numberwithin{equation}{section}
\setcounter {equation}{0}

\section{Upper and Lower Bounds on the Probability of a Multiply-Indexed Union} \label{sub:MU_BOUNDS}

Bounds on the random-coding error probability in channel
coding problems are often obtained using 
the truncated union bound, which states that for any set of
events $\{A_{i}\}_{i=1}^{N}$, 
\begin{equation}
    \PP\Big[\bigcup_{i}A_{i}\Big] \le \min\Big\{1,\sum_{i}\PP[A_{i}]\Big\}. \label{eq:MU_TruncatedBound}
\end{equation}
In this paper, we are also interested in lower bounds on the probability of a union,
which are used to prove ensemble tightness results.  In particular, we make use
of de Caen's lower bound \cite{Caen}, which states that
\begin{equation}
    \PP\Big[\bigcup_{i}A_{i}\Big] \ge \sum_{i}\frac{\PP[A_{i}]^{2}}{\sum_{i^{\prime}}\PP[A_{i} \cap A_{i^{\prime}}]}. \label{eq:MU_deCaenBound}
\end{equation}
In the case that the events are pairwise independent and identically distributed,
\eqref{eq:MU_deCaenBound} proves the tightness of  
\eqref{eq:MU_TruncatedBound} to within a factor of $\frac{1}{2}$; 
see the proof of \cite[Thm.~1]{PaperSU}.  

In this section, we provide a number of upper and lower bounds on the probability
of a multiply-indexed union.  In several cases of interest, the upper and lower bounds
coincide to within a constant factor, and generalize the above-mentioned tightness result
of \cite{PaperSU} to certain settings where pairwise independence need not hold.

\begin{lem} \label{lem:MU_UpperInd}
    Let $\{Z_{1}(i)\}_{i=1}^{N_{1}}$ and $\{Z_{2}(j)\}_{j=1}^{N_{2}}$ be independent
    sequences of identically distributed random variables on the alphabets $\Zc_{1}$ and $\Zc_{2}$ 
    respectively, with $Z_1(i) \sim P_{Z_1}$ and $Z_2(j) \sim P_{Z_2}$.  For any set
    $\Ac \subseteq \Zc_{1}\times\Zc_{2}$, we have:
    
    1) A general upper bound is given by
    \begin{align}
        & \PP\bigg[\bigcup_{i,j}\Big\{\big(Z_{1}(i),Z_{2}(j)\big)\in\Ac\Big\}\bigg] \le\min\Bigg\{1, \nonumber \\
        & \quad N_{1}\EE\bigg[\min\Big\{1,N_{2}\PP\big[(Z_{1},Z_{2})\in\Ac\,\big|\, Z_{1}\big]\Big\}\bigg],\nonumber \\
        & \quad N_{2}\EE\bigg[\min\Big\{1,N_{1}\PP\big[(Z_{1},Z_{2})\in\Ac\,\big|\, Z_{2}\big]\Big\}\bigg]\Bigg\}\label{eq:MU_GeneralUB}                     
    \end{align}
    where $(Z_{1},Z_{2})\sim P_{Z_{1}}\times P_{Z_{2}}$.
    
    2) If $\{Z_{1}(i)\}_{i=1}^{N_{1}}$ and $\{Z_{2}(j)\}_{j=1}^{N_{2}}$ are
    pairwise independent, then we have the lower bound
    \begin{align}
        &\PP\bigg[\bigcup_{i,j}\Big\{\big(Z_{1}(i),Z_{2}(j)\big)\in\Ac\Big\}\bigg]\ge\frac{1}{4}\min\Bigg\{1, \nonumber \\
        & \qquad N_{1}\frac{\PP\big[(Z_{1},Z_{2})\in\Ac\big]^{2}}{\PP\big[(Z_{1},Z_{2})\in\Ac\,\cap\,(Z_{1},Z_{2}^{\prime})\in\Ac\big]},\\
        &\qquad N_{2}\frac{\PP\big[(Z_{1},Z_{2})\in\Ac\big]^{2}}{\PP\big[(Z_{1},Z_{2})\in\Ac\,\cap\,(Z_{1}^{\prime},Z_{2})\in\Ac\big]},  \nonumber \\ 
        & \qquad N_{1}N_{2}\PP\big[(Z_{1},Z_{2})\in\Ac\big]\Bigg\}, \label{eq:MU_GeneralLB}
    \end{align}
    where $(Z_{1},Z_{1}^{\prime},Z_{2},Z_{2}^{\prime}) \sim P_{Z_{1}}(z_{1})P_{Z_{1}}(z_{1}^{\prime})P_{Z_{2}}(z_{2}) P_{Z_{2}}(z_{2}^{\prime})$.
\end{lem}
\begin{proof}
    We first prove \eqref{eq:MU_GeneralUB}.  Applying the union bound to the union over $i$ gives
    \begin{align}
        & \PP\bigg[\bigcup_{i,j}\Big\{\big(Z_{1}(i),Z_{2}(j)\big)\in\Ac\Big\}\bigg] \nonumber \\ 
        & \quad\le N_{1}\PP\bigg[\bigcup_{j}\Big\{\big(Z_{1},Z_{2}(j)\big)\in\Ac\Big\}\bigg]\label{eq:GenProof1} \\
        &\quad=N_{1}\EE\bigg[\PP\bigg[\bigcup_{j}\Big\{\big(Z_{1},Z_{2}(j)\big)\in\Ac\Big\}\,\bigg|\, Z_{1}\bigg]\bigg]. 
    \end{align}
    Applying the truncated union bound to the union over $j$, we recover 
    the second term in the outer minimization in \eqref{eq:MU_GeneralUB}.
    The third term is obtained similarly by applying the union bounds
    in the opposite order, and the first term is trivial.
    
    To prove \eqref{eq:MU_GeneralLB}, we make use of de Caen's bound in \eqref{eq:MU_deCaenBound}. Noting by symmetry that each 
    term in the outer summation is equal, and splitting the inner summation according to which
    of the $(i,j)$ indices coincide with $(i^{\prime},j^{\prime})$, we obtain  
    \begin{align}
        &\PP\bigg[\bigcup_{i,j}\Big\{\big(Z_{1}(i),Z_{2}(j)\big)\in\Ac\Big\}\bigg] \nonumber \\ 
        &~~\ge N_{1}N_{2}\PP\big[(Z_{1},Z_{2})\in\Ac\big]^{2} \nonumber\\ &\qquad\times\bigg((N_{1}-1)(N_{2}-1)\PP\big[(Z_{1},Z_{2})\in\Ac\big]^2 \nonumber\\
           &\qquad~~~+ (N_{2}-1)\PP\big[(Z_{1},Z_{2})\in\Ac\,\cap\,(Z_{1},Z_{2}^{\prime})\in\Ac\big]
           \nonumber\\ &\qquad~~~ +(N_{1}-1)\PP\big[(Z_{1},Z_{2})\in\Ac\,\cap\,(Z_{1}^{\prime},Z_{2})\in\Ac\big] \nonumber \\
           &\hspace{24ex}+ \PP\big[(Z_{1},Z_{2})\in\Ac\big]
           \bigg)^{-1}.
    \end{align}
    The lemma follows by upper bounding $N_{\nu} - 1$ by $N_{\nu}$ for $\nu=1,2$, and upper bounding the four terms in the $(\cdot)^{-1}$ by four times the maximum of those terms.
\end{proof}

The following lemma gives conditions under which a weakened version of \eqref{eq:MU_GeneralUB}
matches \eqref{eq:MU_GeneralLB} to within a factor of four.  Recall that $\nu^c$ denotes the 
item in $\{1,2\}$ differing from $\nu$

\begin{lem} \label{lem:MU_MatchingInd}
    Let $\{Z_{1}(i)\}_{i=1}^{N_{1}}$ and $\{Z_{2}(j)\}_{j=1}^{N_{2}}$ be independent
    sequences of identically distributed random variables on the alphabets $\Zc_{1}$ and $\Zc_{2}$ 
    respectively, with $Z_1(i) \sim P_{Z_1}$ and $Z_2(j) \sim P_{Z_2}$.   Fix a set 
    $\Ac \subseteq \Zc_{1}\times\Zc_{2}$, and define 
    \begin{equation}
        \Ac_{\nu} \defeq \Big\{ z_{\nu}\in\Zc_{\nu}\,:\,(z_{1},z_{2})\in\Ac\mathrm{~for~some~}z_{\nu^c}\in\Zc_{\nu^c}\Big\} \label{eq:MU_SetA1} 
    \end{equation}
    for $\nu=1,2$.
    \begin{enumerate}
    \item A general upper bound is given by 
    \begin{align}
        &\hspace*{-2ex}\PP\bigg[\bigcup_{i,j}\Big\{\big(Z_{1}(i),Z_{2}(j)\big)\in\Ac\Big\}\bigg] \le \min\Big\{1,  \nonumber \\ & \hspace*{-2ex}N_{1}\PP\big[Z_{1}\in\Ac_{1}\big],N_{2}\PP\big[Z_{2}\in\Ac_{2}\big],N_{1}N_{2}\PP\big[(Z_{1},Z_{2})\in\Ac\big]\Big\}, \label{eq:MU_MatchingUB}
    \end{align}
    where $(Z_{1},Z_{2})\sim P_{Z_{1}}\times P_{Z_{2}}$.
    \item If (i) $\{Z_{1}(i)\}_{i=1}^{N_{1}}$ are pairwise independent, (ii) $\{Z_{2}(j)\}_{j=1}^{N_{2}}$
    are pairwise independent, (iii) $\PP\big[(z_{1},Z_{2})\in\Ac\big]$
    is the same for all $z_{1}\in\Ac_{1}$, and (iv) $\PP\big[(Z_{1},z_{2})\in\Ac\big]$
    is the same for all $z_{2}\in\Ac_{2}$, then
    \begin{align}
        &\hspace*{-2ex}\PP\bigg[\bigcup_{i,j}\Big\{\big(Z_{1}(i),Z_{2}(j)\big)\in\Ac\Big\}\bigg] \ge \frac{1}{4}\min\Big\{1, \nonumber \\ & \hspace*{-2ex}N_{1}\PP\big[Z_{1}\in\Ac_{1}\big],N_{2}\PP\big[Z_{2}\in\Ac_{2}\big],N_{1}N_{2}\PP\big[(Z_{1},Z_{2})\in\Ac\big]\Big\}. \label{eq:MU_MatchingLB}
    \end{align}
    \end{enumerate}
\end{lem}
\begin{proof}
    We obtain \eqref{eq:MU_MatchingUB} by weakening \eqref{eq:MU_GeneralUB} in multiple ways.
    The second term in \eqref{eq:MU_MatchingUB} follows since the inner probability in the second
    term of \eqref{eq:MU_GeneralUB} is zero whenever $\PP[Z_{1}\notin\Ac]$, and since
    $\min\{1,\zeta\} \le 1$.  The third term in \eqref{eq:MU_MatchingUB} is obtained similarly,
    and the fourth term follows from the fact that $\min\{1,\zeta\} \le \zeta$.
        
    The lower bound in \eqref{eq:MU_MatchingLB} follows from \eqref{eq:MU_GeneralLB},
    and since the additional assumptions in the second part of the lemma statement imply 
    \begin{align}
        & \frac{\PP\big[(Z_{1},Z_{2})\in\Ac\big]^{2}}{\PP\big[(Z_{1},Z_{2})\in\Ac\,\cap\,(Z_{1},Z_{2}^{\prime})\in\Ac\big]} \nonumber \\ & \qquad=\frac{\PP\big[Z_{1}\in\Ac_{1}\big]^{2}\PP\big[(z_{1},Z_{2})\in\Ac\big]^{2}}{\PP\big[Z_{1}\in\Ac_{1}\big]\PP\big[(z_{1},Z_{2})\in\Ac\big]^{2}}, \\
                    &\qquad =\PP\big[Z_{1}\in\Ac_{1}\big]                   
    \end{align}
    where $z_{1}$ is an arbitrary element of $\Ac_{1}$. The
    third term in the minimization in \eqref{eq:MU_GeneralLB} can be handled similarly.
\end{proof}

A generalization of Lemma \ref{lem:MU_MatchingInd} to 
the probability of a union indexed by $K$ values can be found in \cite[Appendix D]{Thesis}.

\section{Equivalent Forms of Convex Optimization Problems} \label{sec:MU_DUALITY}

The achievable rates and error exponents derived in this paper are presented in both primal and
dual forms, analogously to the LM rate in \eqref{eq:SU_PrimalLM}--\eqref{eq:SU_DualLM}.
The corresponding proofs of equivalence are more involved 
than that of the LM rate (see \cite{Merhav}).  Here we provide two lemmas
that are useful in proving the equivalences.  The following lemma
generalizes the result that \eqref{eq:SU_PrimalLM} and \eqref{eq:SU_DualLM}
are equivalent, and is proved using Lagrange duality \cite[Ch.~5]{Convex}.

\begin{lem} \label{lem:MU_DualityLemma}
    Fix the finite alphabets $\Zc_{1}$ and $\Zc_{2}$, the non-negative
    functions $f(z_{1},z_{2})$ and $g(z_{1},z_{2})$, the distributions $P_{Z_{1}}\in\Pc(\Zc_1)$
    and $P_{Z_{2}}\in\Pc(\Zc_2)$, and a constant $\beta$.  Then
    \begin{equation}
        \min_{\substack{\Ptilde_{Z_{1}Z_{2}}\,:\,\Ptilde_{Z_{1}}=P_{Z_{1}},\Ptilde_{Z_{2}}=P_{Z_{2}}, \\ \EE_{\Ptilde}[\log f(Z_{1},Z_{2})]\ge\beta}} I_{\Ptilde}(Z_{1};Z_{2}) - \EE_{\Ptilde}[\log g(Z_{1},Z_{2})] \label{eq:MU_Primal}
    \end{equation}
    is equal to
    \begin{multline}
        \sup_{\lambda\ge0,\mu_{1}(\cdot)} \sum_{z_{1}}P_{Z_{1}}(z_{1})\mu_{1}(z_{1}) + \lambda\beta \\ -\sum_{z_{2}}P_{Z_{2}}(z_{2})\log\sum_{\zbar_{1}}P_{Z_{1}}(\zbar_{1})f(\zbar_{1},z_{2})^{\lambda}g(\zbar_{1},z_{2})e^{\mu_{1}(\zbar_{1})}, \label{eq:MU_Dual}
    \end{multline}
    where the supremum over $\mu_{1}(\cdot)$ is taken over all real-valued functions on $\Zc_{1}$.
\end{lem}
\begin{proof}
    The Lagrangian \cite[Sec. 5.1.1]{Convex} of the optimization problem in \eqref{eq:MU_Primal}
    is given by
    \begin{multline}
        \hspace*{-1.5ex}L = \sum_{z_{1},z_{2}}\Ptilde_{Z_{1}Z_{2}}(z_{1},z_{2})\bigg(\log\frac{\Ptilde(z_{1},z_{2})}{P_{Z_{1}}(z_{1})P_{Z_{2}}(z_{2})} - \log g(z_{1},z_{2}) \\
             - \lambda\log f(z_{1},z_{2})\bigg) 
            + \sum_{z_{1}}\mu_{1}(z_{1})\big(P_{Z_{1}}(z_{1})-\Ptilde_{Z_{1}}(z_{1})\big) \\ + \sum_{z_{2}}\mu_{2}(z_{2})\big(P_{Z_{2}}(z_{2})-\Ptilde_{Z_{2}}(z_{2})\big) + \lambda\beta, \label{eq:MU_DualityPf1}
    \end{multline}
    where $\lambda\ge0$, $\mu_{1}(\cdot)$ and $\mu_{2}(\cdot)$ are Lagrange multipliers.
    Since the objective in \eqref{eq:MU_Primal} is convex and the constraints are affine,  
    the optimal value is equal to $L$ for some choice of $\Ptilde_{Z_{1}Z_{2}}$ and the
    Lagrange multipliers satisfying the Karush-Kuhn-Tucker (KKT) conditions \cite[Sec. 5.5.3]{Convex}.  
    
    We proceed to simplify \eqref{eq:MU_DualityPf1} using the KKT conditions.  Setting
    $\frac{\partial L}{\partial \Ptilde(z_{1},z_{2})} = 0$ yields
    \begin{multline}
        1 + \log\frac{\Ptilde_{Z_{1}Z_{2}}(z_{1},z_{2})}{P_{Z_{1}}(z_{1})P_{Z_{2}}(z_{2})f(z_{1},z_{2})^{\lambda}g(z_{1},z_{2})} \\ - \mu_{1}(z_{1}) - \mu_{2}(z_{2}) = 0. \label{eq:MU_DualityPf2}
    \end{multline}
    Solving for $\Ptilde_{Z_{1}Z_{2}}(z_{1},z_{2})$ applying 
    the constraint $\Ptilde_{Z_{2}}=P_{Z_{2}}$, and then solving
    for $\mu_2(z_2)$, we obtain
    \begin{equation}
        \mu_{2}(z_{2}) = 1 - \log\sum_{\zbar_{1}}P_{Z_{1}}(\zbar_{1})f(\zbar_{1},z_{2})^{\lambda}g(\zbar_{1},z_{2})e^{\mu_{1}(\zbar_{1})}. \label{eq:MU_DualityPf5}
    \end{equation}
    
    Substituting \eqref{eq:MU_DualityPf2} into \eqref{eq:MU_DualityPf1} yields
    \begin{equation}
        L = -1 + \sum_{z_{1}}\mu_{1}(z_{1})P_{Z_{1}}(z_{1}) + \sum_{z_{2}}\mu_{2}(z_{2})P_{Z_{2}}(z_{2}) + \lambda\beta,
    \end{equation}
    and applying \eqref{eq:MU_DualityPf5} yields \eqref{eq:MU_Dual} with the supremum omitted. 
    It follows that \eqref{eq:MU_Dual} is an upper bound to \eqref{eq:MU_Primal}.
    
    To obtain a matching lower bound, we make use of the log-sum inequality \cite[Thm. 2.7.1]{Cover}
    similarly to \cite[Appendix A]{Merhav}.  For any $\Ptilde_{Z_{1}Z_{2}}$ satisfying the constraints in 
    \eqref{eq:MU_Primal}, we can lower bound the objective in \eqref{eq:MU_Primal} as follows: 
    \begin{align}
        & \sum_{z_{1},z_{2}}\Ptilde_{Z_{1}Z_{2}}(z_{1},z_{2})\log\frac{\Ptilde(z_{1},z_{2})}{P_{Z_{1}}(z_{1})P_{Z_{2}}(z_{2})g(z_{1},z_{2})} \\
        & \ge \sum_{z_{1},z_{2}}\Ptilde_{Z_{1}Z_{2}}(z_{1},z_{2})\nonumber \\ 
        &~~~\times\log\frac{\Ptilde(z_{1},z_{2})}{P_{Z_{1}}(z_{1})P_{Z_{2}}(z_{2})f(z_{1},z_{2})^{\lambda}g(z_{1},z_{2})} + \lambda\beta \label{eq:MU_DualityPf8} \\ 
        &= \sum_{z_{1},z_{2}}\Ptilde_{Z_{1}Z_{2}}(z_{1},z_{2}) \nonumber \\ &~~~\times\log\frac{\Ptilde(z_{1},z_{2})}{P_{Z_{1}}(z_{1})P_{Z_{2}}(z_{2})f(z_{1},z_{2})^{\lambda}g(z_{1},z_{2})e^{\mu_{1}(\zbar_{1})}} \nonumber \\
        &~~~+ \sum_{z_{1}} P_{Z_{1}}(z_{1})\mu_{1}(z_{1}) + \lambda\beta, \label{eq:MU_DualityPf9}  
    \end{align}
    where \eqref{eq:MU_DualityPf8} holds for any $\lambda\ge0$ due to the constraint 
    $\EE_{\Ptilde}[\log f(Z_{1},Z_{2})]\ge\beta$, and \eqref{eq:MU_DualityPf9} holds 
    for any $\mu_{1}(\cdot)$ by an expansion of the logarithm.  Applying 
    the log-sum inequality, we can lower bound \eqref{eq:MU_DualityPf9} by
    the objective in \eqref{eq:MU_Dual}.  Since $\lambda\ge0$ and $\mu_{1}(\cdot)$ 
    are arbitrary, the proof is complete.
\end{proof}

When using Lemma \ref{lem:MU_DualityLemma}, we will typically be interested the 
case that either $g(\cdot,\cdot)=1$, or $f(\cdot,\cdot) = 1$ and $\beta = 0$.

The following lemma will allow certain convex optimization problems to be
expressed in a form where, after some manipulations, Lemma \ref{lem:MU_DualityLemma}
can be applied.

\begin{lem} \label{lem:MU_MinMaxLemma}
    Fix a positive integer $d$ and let $\Dc$ be a convex subset of $\RR^d$.
    Let $f(\zv)$, $g(\zv)$, $g_1(\zv)$ and $g_2(\zv)$ be convex 
    functions mapping $\RR^d$ to $\RR$ such that
    \begin{equation}
        g_1(\zv)+g_2(\zv) \le g(\zv) \label{eq:MU_g1g2g}
    \end{equation}
    for all $\zv\in\Dc$. Then
    \begin{equation}
        \min_{\zv\in\Dc} f(\zv) + \Big[\max\big\{g_1(\zv),g_2(\zv),g(\zv)\big\}\Big]^{+} \label{eq:MU_Equiv1}
    \end{equation}
    is equal to
    \begin{multline}
        \max\bigg\{\min_{\zv\in\Dc} f(\zv) + \Big[\max\big\{g_1(\zv),g(\zv)\big\}\Big]^{+}, \\ \min_{\zv\in\Dc} f(\zv) + \Big[\max\big\{g_2(\zv),g(\zv)\big\}\Big]^{+} \bigg\} \label{eq:MU_Equiv2}.
    \end{multline}
\end{lem} 
\begin{proof}
    We define the following functions ($\nu=1,2$):
    \begin{gather}
        \Phi_0(\zv) \defeq f(\zv) + \big[g(\zv)\big]^{+} \label{eq:MAC_Phi0} \\
        \Phi_{\nu}(\zv) \defeq f(\zv) + \big[\max\big\{g_{\nu}(\zv),g(\zv)\big\}\big]^{+}. \label{eq:MAC_Phi1} 
    \end{gather}
    Since $f(\cdot)$, $g(\cdot)$, $g_1(\cdot)$ and $g_2(\cdot)$ are convex
    by assumption, it follows from the composition rules in
    \cite[Sec. 3.2.4]{Convex} that $\Phi_0(\cdot)$, $\Phi_1(\cdot)$ and
    $\Phi_2(\cdot)$ are also convex.
    
    We wish to show that 
    \begin{equation}
        \min_{\zv\in\Dc} \max\big\{\Phi_1(\zv),\Phi_2(\zv)\big\} = \max\bigg\{\min_{\zv\in\Dc} \Phi_1(\zv), \min_{\zv\in\Dc} \Phi_2(\zv)\bigg\}. \label{eq:MU_EquivStatement}
    \end{equation}
    We define the following regions for $\nu=1,2$:
    \begin{gather}
        \Rc_{\nu} = \big\{\zv:\,\Phi_{\nu}(\zv)>\Phi_{0}(\zv)\big\}. \label{eq:MAC_Region1}
    \end{gather}
    The key observation is that $\mathcal{R}_{1}$ and $\mathcal{R}_{2}$
    are disjoint. To see this, we observe from \eqref{eq:MAC_Phi0}--\eqref{eq:MAC_Phi1}
    that any $\zv\in\Rc_{1} \cap \Rc_{2}$ satisfies $g_1(\zv) > g(\zv)$ and 
    $g_2(\zv) > g(\zv)$.  Combined with \eqref{eq:MU_g1g2g}, these imply 
    $g_1(\zv)<0$ and $g_2(\zv)<0$, and it follows from \eqref{eq:MAC_Phi0}--\eqref{eq:MAC_Phi1}
    that $\Phi_0(\zv)=\Phi_1(\zv)=\Phi_2(\zv)$, in contradiction with the 
    assumption that $\zv\in\Rc_{1} \cap \Rc_{2}$.  Thus, $\Rc_{1} \cap \Rc_{2}$ is empty,
    which implies that $g_1(\zv)$ and $g_2(\zv)$ cannot simultaneously be the unique  
    maximizers in \eqref{eq:MAC_Phi1} for both $\nu=1$ and $\nu=2$.  Combining
    this with \eqref{eq:MAC_Phi0}, we obtain
    \begin{equation}
        \Phi_{0}(\zv) = \min\big\{\Phi_{1}(\zv),\Phi_{2}(\zv)\big\}. \label{eq:MAC_Phi0_Alt} \\
    \end{equation}
    
    To prove \eqref{eq:MU_EquivStatement}, we use a proof by contradiction.  Let the left-hand side
    and right-hand side be denoted by $f^{*}$ and $\tilde{f}^{*}$ respectively.  The inequality
    $f^{*}\ge\tilde{f}^{*}$ holds by definition, so we assume that $f^{*} > \tilde{f}^{*}$.  
    Let $\zv_{1}^{*}$ and $\zv_{2}^{*}$ minimize $\Phi_{1}$ and $\Phi_{2}$ respectively on the
    right-hand side of \eqref{eq:MU_EquivStatement}, so that
    \begin{equation}
        \tilde{f}^{*} = \max\big\{\Phi_{1}(\zv_{1}^{*}),\Phi_{2}(\zv_{2}^{*})\big\}.\label{eq:MAC_f2_star}
    \end{equation}
    The assumption $f^{*} > \tilde{f}^{*}$ implies that
    \begin{align}
        \Phi_{2}(\zv_{1}^{*}) &> \Phi_{1}(\zv_{1}^{*}) \label{eq:MAC_Ineq1} \\
        \Phi_{1}(\zv_{2}^{*}) &> \Phi_{2}(\zv_{2}^{*}). \label{eq:MAC_Ineq2}
    \end{align}
    Next, we define
    \begin{align}
        \hat{\Phi}_{\nu}(\lambda)\defeq\Phi_{\nu}\big(\lambda \zv_{1}^{*}+(1-\lambda)\zv_{2}^{*}\big)\label{eq:MAC_Phi0'}
    \end{align}
    for $\lambda\in[0,1]$ and $\nu=0,1,2$. Since any convex function is also convex when 
    restricted to a straight line \cite[Section 3.1.1]{Convex}, it follows 
    that $\hat{\Phi}_{0}$, $\hat{\Phi}_{1}$ and $\hat{\Phi}_{2}$ 
    are convex in $\lambda$. From \eqref{eq:MAC_Ineq1}--\eqref{eq:MAC_Ineq2}, we have 
    \begin{align}
        \hat{\Phi}_{2}(1) &> \hat{\Phi}_{1}(1)\label{eq:MAC_Ineq3} \\
        \hat{\Phi}_{1}(0) &> \hat{\Phi}_{2}(0).\label{eq:MAC_Ineq4}
    \end{align}
    Since $\hat{\Phi}_{1}$ and $\hat{\Phi}_{2}$ are convex, they
    are also continuous (at least in the region that they are finite), 
    and it follows that the two must intersect somewhere
    in $(0,1)$, say at $\lambda^{*}$. Therefore, 
    \begin{flalign}
        \hat{\Phi}_{0}(\lambda^{*}) & =\min\big\{\hat{\Phi}_{1}(\lambda^{*}),\hat{\Phi}_{2}(\lambda^{*})\big\}\label{eq:MAC_Phi0Bound1}\\
        & =\max\big\{\hat{\Phi}_{1}(\lambda^{*}),\hat{\Phi}_{2}(\lambda^{*})\big\}\label{eq:MAC_Phi0Bound2}\\
        & \ge \min_{\zv\in\Dc} \max\big\{\Phi_1(\zv),\Phi_2(\zv)\big\} \label{eq:MAC_Phi0Bound3}\\
        & = f^{*},\label{eq:MAC_Phi0Bound4}
    \end{flalign}
    where \eqref{eq:MAC_Phi0Bound1} follows from \eqref{eq:MAC_Phi0_Alt}. 
    Finally, we have the following contradiction: (i) Combining 
    \eqref{eq:MAC_Phi0Bound4} with the assumption  that $f^{*} > \tilde{f}^{*}$,  we have 
    \begin{equation}
        \hat{\Phi}_{0}(\lambda^{*}) > \tilde{f}^{*} =\max\{\hat{\Phi}_{1}(1),\hat{\Phi}_{2}(0)\}, \label{eq:MAC_PhiBound1} \\        
    \end{equation}
    where the equality follows from \eqref{eq:MAC_f2_star}; 
    (ii) From \eqref{eq:MAC_Phi0_Alt}, we have $\hat{\Phi}_{0}(\lambda)=\min\{\hat{\Phi}_{1}(\lambda),\hat{\Phi}_{2}(\lambda)\}$,
    and it follows from \eqref{eq:MAC_Ineq3}--\eqref{eq:MAC_Ineq4} that
    $\hat{\Phi}_{0}(1)=\hat{\Phi}_{1}(1)$ and $\hat{\Phi}_{0}(0)=\hat{\Phi}_{2}(0)$.
    Using the convexity of $\hat{\Phi}_{0}$ and Jensen's inequality,
    we have 
    \begin{align}
        \hat{\Phi}_{0}(\lambda^{*}) & \le \lambda^{*}\hat{\Phi}_{1}(1)+(1-\lambda^{*})\hat{\Phi}_{2}(0) \\
                                       & \le \max\{\hat{\Phi}_{1}(1),\hat{\Phi}_{2}(0)\}.\label{eq:MAC_PhiBound2}
    \end{align}
\end{proof}

\section{Multiple-Access Channel Proofs} \label{sub:MAC_PROOFS}

\subsection{Preliminary Lemma for Proving Theorem \ref{thm:MAC_DualRate}}

The following lemma expresses \eqref{eq:MAC_R12_LM} in a form that is
more amenable to Lagrange duality techniques.

\begin{lem} \label{lem:MAC_AltRate}
    The achievable rate condition in \eqref{eq:MAC_R12_LM} holds if the following holds for at least one of $\nu=1,2$:
    \begin{multline}
        R_{1}+R_{2} \le \min_{\substack{\Ptilde_{X_{1}X_{2}Y}\in\SetTiiicc(Q_{1}\times Q_{2}\times W)\\I_{\Ptilde}(X_{\nu};Y)\le R_{\nu}}} \\ D(\Ptilde_{X_{1}X_{2}Y}\|Q_{1}\times Q_{2}\times P_{Y})\label{eq:MAC_R12_1_LM} 
    \end{multline}
\end{lem}
\begin{proof}
    We first write the condition in \eqref{eq:MAC_R12_LM} as
    \begin{multline}
        \hspace*{-2.5ex}0 \le \min_{\Ptilde_{X_{1}X_{2}Y}\in\SetTiiicc(Q_{1}\times Q_{2}\times W)} 
        \max\big\{ D(\Ptilde_{X_{1}X_{2}Y}\|Q_{1}\times Q_{2}\times P_{Y}) \\ - (R_1 + R_2), I_{\Ptilde}(X_{1};Y) -  R_{1},\, I_{\Ptilde}(X_{2};Y) - R_{2} \big\}, \label{eq:MAC_Rate_nu_0}
    \end{multline}
    where the equivalence is seen by noting that this condition is always satisfied
    when the minimizer satisfies $I_{\Ptilde}(X_{1};Y) > R_{1}$ or $I_{\Ptilde}(X_{2};Y) > R_{2}$.
    Next, we claim that this condition is equivalent to the following holding for
    at least one of $\nu = 1,2$:
    \begin{multline}
        0 \le \min_{\Ptilde_{X_{1}X_{2}Y}\in\SetTiiicc(Q_{1}\times Q_{2}\times W)} 
        \max\big\{ \\ D(\Ptilde_{X_{1}X_{2}Y}\|Q_{1}\times Q_{2}\times P_{Y}) - (R_1 + R_2), I_{\Ptilde}(X_{\nu};Y) -  R_{\nu}\big\}. \label{eq:MAC_Rate_nu}
    \end{multline}
    This is seen by applying Lemma \ref{lem:MU_MinMaxLemma} with the following identifications ($\nu=1,2$):
    \begin{align} 
        f(\zv)   &= 0 \\
        g(\zv)   &= D\big(\Ptilde_{X_{1}X_{2}Y}\|Q_{1}\times Q_{2}\times P_{Y}\big)-R_1-R_2 \\
        g_{\nu}(\zv) &= I_{\Ptilde}(X_{\nu};Y)-R_{\nu}.
    \end{align}
    From the last two lines and the identity
    \begin{multline}
        D\big(\Ptilde_{X_{1}X_{2}Y}\|Q_{1}\times Q_{2}\times P_{Y}\big) = I_{\Ptilde}(X_{1};Y) \\ + I_{\Ptilde}(X_{2};Y) + I_{\Ptilde}(X_1;X_2|Y),
    \end{multline} 
    which holds under the constraints present
    in the definition of $\SetTiiicc$ in \eqref{eq:MAC_SetT12}, 
    we see that the condition in \eqref{eq:MU_g1g2g} is satisfied.

    Finally, the lemma follows from \eqref{eq:MAC_Rate_nu} by reversing 
    the step used to obtain \eqref{eq:MAC_Rate_nu_0}.
\end{proof}

\subsection{Proof of First Part of Theorem \ref{thm:MAC_DualRate}} \label{sub:MAC_DUAL_PROOF}

Each expression in the theorem statement is 
derived similarly, so we focus on \eqref{eq:MAC_R12_1_DualLM}. 
We claim that \eqref{eq:MAC_R12_1_LM} holds if and only if
\begin{multline}
    R_1  \le \max_{\rho_{2}\in[0,1]} \min_{\Ptilde_{X_{1}X_{2}Y}\in\SetTiiicc(P_{X_{1}X_{2}Y})} I_{\Ptilde}(X_{1};Y) \\ +\rho_{2}I_{\Ptilde}(X_{2};X_{1},Y)  - \rho_{2}R_{2}, \label{eq:MAC_DualProof1}
\end{multline}
where here and in the remainder of the proof we write $P_{X_1X_2Y} \triangleq Q_1 \times Q_2 \times W$.
To see this, we first note that by the identity 
\begin{equation}
D\big(P_{X_{1}X_{2}Y}\|Q_{1}\times Q_{2}\times P_{Y}\big) = I_{P}(X_{1};Y) + I_{P}(X_2;X_1,Y), \label{eq:MAC_ExpandedD}
\end{equation} 
\eqref{eq:MAC_Rate_nu} (with $\nu = 1$) is equivalent to
\begin{multline}
    R_1 \le \min_{\Ptilde_{X_{1}X_{2}Y}\in\SetTiiicc(Q_{1}\times Q_{2}\times W)} I_{\Ptilde}(X_1;Y) \\ + \big[I_{\Ptilde}(X_2;X_1,Y) - R_2\big]^+.
\end{multline}
Next, we apply the identity $[\alpha]^{+}=\max_{0\le\rho_{1}\le1}\rho_1\alpha$.  
The resulting objective is linear in $\rho_{1}$ and jointly
convex in $(P_{X_{1}X_{2}Y},\Ptilde_{X_{1}X_{2}Y})$, so we can apply Fan's minimax theorem
\cite{Minimax} to interchange the maximization and minimizations, thus yielding \eqref{eq:MAC_DualProof1}.

We define the sets
\begin{align}
    \SetTiiicc'(P_{X_{1}X_{2}Y},\hat{P}_{X_{1}Y}) \defeq \Big\{\Ptilde_{X_{1}X_{2}Y}\in\Pc(\Xc_{1}\times\Xc_{2}\times\Yc)\,:\,&\nonumber \\
     \Ptilde_{X_{2}}=P_{X_{2}}, \Ptilde_{X_{1}Y}=\hat{P}_{X_{1}Y},& \nonumber \\ \EE_{\Ptilde}[\log q(X_{1},X_{2},Y)]\ge\EE_{P}[\log q(X_{1},X_{2},Y)]\Big\}&\label{eq:MAC_SetU} 
\end{align}
\vspace*{-5ex}
\begin{multline}
        \SetTiiicc''(P_{X_{1}X_{2}Y}) \defeq\Big\{\hat{P}_{X_{1}Y}\in\Pc(\Xc_{1}\times\Yc)\,:\,\\ \hat{P}_{X_{1}}=P_{X_{1}},\hat{P}_{Y}=P_{Y}\Big\}.\label{eq:MAC_SetV}
\end{multline}
It follows that $\Ptilde_{X_{1}X_{2}Y}\in\SetTiiicc(P_{X_{1}X_{2}Y})$ (see \eqref{eq:MAC_SetT12})
if and only if $\Ptilde_{X_{1}X_{2}Y}\in\SetTiiicc'(P_{X_{1}X_{2}Y},\hat{P}_{X_{1}Y})$
for some $\hat{P}_{X_{1}Y}\in\SetTiiicc''(P_{X_{1}X_{2}Y})$. We
can therefore replace the minimization over $\Ptilde_{X_{1}X_{2}Y}\in\SetTiiicc(P_{X_{1}X_{2}Y})$
in \eqref{eq:MAC_DualProof1} with minimizations over $\hat{P}_{X_{1}Y}\in\SetTiiicc''(P_{X_{1}X_{2}Y})$
and $\Ptilde_{X_{1}X_{2}Y}\in\SetTiiicc'(P_{X_{1}X_{2}Y},\hat{P}_{X_{1}Y})$.

We prove the theorem by performing the minimization in several steps, and performing
multiple applications of Lemma \ref{lem:MU_DualityLemma}.  Each such application will yield
an overall optimization of the form $\sup\min\sup\{\cdot\}$, and we will implicitly use
Fan's minimax theorem \cite{Minimax} to obtain an equivalent expression of the form $\sup\sup\min\{\cdot\}$.
Thus, we will leave the optimization of the dual variables (i.e.~the suprema) until the final step.

\subsubsection*{Step 1}

We first consider the minimization of the term $I_{\Ptilde}(X_{1};X_{2},Y)$
over $\Ptilde_{X_{1}X_{2}Y}$ when $P_{X_{1}X_{2}Y}\in\SetScc(\Qv)$
and $\hat{P}_{X_{1}Y}\in\SetTiiicc''(P_{X_{1}X_{2}Y})$ are fixed,
and thus all of the terms in the objective in \eqref{eq:MAC_DualProof1}
other than $I_{\Ptilde}(X_{1};X_{2},Y)$ are fixed. The minimization 
is given by
\begin{equation}
    \Fsf_{1} \defeq \min_{\Ptilde_{X_{1}X_{2}Y}\in\SetTiiicc'(P_{X_{1}X_{2}Y},\hat{P}_{X_{1}Y})} I_{\Ptilde}(X_{1};X_{2},Y).\label{eq:MAC_DualS1_1}
\end{equation}
Applying Lemma \ref{lem:MU_DualityLemma} with $P_{Z_{1}}=P_{X_{2}}$, $P_{Z_{2}}=\hat{P}_{X_{1}Y}$
and $\mu_{1}(\cdot)=a_{2}(\cdot)$, we obtain the dual expression
\begin{multline}
    \hspace*{-2.5ex}\Fsf_{1} = - \sum_{x_{1},y}\hat{P}_{X_{1}Y}(x_{1},y)\log\sum_{\xbar_{2}}P_{X_{2}}(\xbar_{2})q(x_{1},\xbar_{2},y)^{s}e^{a_{2}(\xbar_{2})} \\
            + s\sum_{x_{1},x_{2},y}P_{X_{1}X_{2}Y}(x_{1},x_{2},y)\log q(x_{1},x_{2},y) + \\ \sum_{x_{2}}P_{X_{2}}(x_{2})a_{2}(x_{2}).\label{eq:MAC_DualS1_8}
\end{multline}

\subsubsection*{Step 2}

After Step 1, the overall objective (see \eqref{eq:MAC_DualProof1})
is given by
\begin{equation}
    I_{\hat{P}}(X_{1};Y) + \rho_{2}\Fsf_{1} - \rho_2 R_2,\label{eq:MAC_DualS2_1}
\end{equation}
where we have replaced $I_{\Ptilde}(X_{1};Y)$ by $I_{\hat{P}}(X_{1};Y)$
due to the constraint $\Ptilde_{X_{1}Y}=\hat{P}_{X_{1}Y}$ in
\eqref{eq:MAC_SetU}. Since the only terms involving $\hat{P}_{X_{1}Y}$
are $I_{\hat{P}}(X_{1};Y)$ and the first term in 
\eqref{eq:MAC_DualS1_8}, we consider the minimization
\begin{multline}
    \Fsf_{2} \defeq \min_{\hat{P}_{X_{1}Y}\in\SetTiiicc''(P_{X_{1}X_{2}Y})} I_{\hat{P}}(X_1;Y)  
                 -\rho_{2}\sum_{x_{1},y}\hat{P}_{X_{1}Y}(x_{1},y)\\ \times\log\sum_{\xbar_{2}}P_{X_{2}}(\xbar_{2})q(x_{1},\xbar_{2},y)^{s}e^{a_{2}(\xbar_{2})}.\label{eq:MAC_DualS2_2}
\end{multline}
Applying Lemma \ref{lem:MU_DualityLemma} with $P_{Z_{1}}=P_{X_{1}}$, $P_{Z_{2}}=P_{Y}$
and $\mu_{1}(\cdot)=a_{1}(\cdot)$, we obtain
\begin{multline}
    \Fsf_{2} = \sum_{x_{1}}P_{X_{1}}(x_{1})a_{1}(x_{1})-\sum_{y}P_{Y}(y)\log\sum_{\xbar_{1}}P_{X_{1}}(\xbar_{1}) \\ \times\bigg(\sum_{\xbar_{2}}P_{X_{2}}(\xbar_{2})q(\xbar_{1},\xbar_{2},y)^{s}e^{a_{2}(\xbar_{2})}\bigg)^{\rho_{2}}e^{a_{1}(\xbar_{1})}.\label{eq:MAC_DualS2_9}
\end{multline}

\subsubsection*{Step 3}

From \eqref{eq:MAC_DualS1_8}, \eqref{eq:MAC_DualS2_1} and \eqref{eq:MAC_DualS2_9},
the overall objective is now given by
\begin{multline}
    \Fsf_{3} \defeq \Fsf_{2} - \rho_2 R_2 \\ + \rho_{2}\sum_{x_{1},x_{2},y}P_{X_{1}X_{2}Y}(x_{1},x_{2},y)\log q(x_{1},x_{2},y)^{s}e^{a_{2}(x_{2})}. \label{eq:MAC_DualS3_1}
\end{multline}
Substituting \eqref{eq:MAC_DualS2_9} and performing some 
rearrangements, we obtain the objective in \eqref{eq:MAC_R12_1_DualLM},
and conclude the proof by taking the supremum over $\rho_2$, $s$,
$a_1(\cdot)$ and $a_2(\cdot)$.

\subsection{Proof of Theorem \ref{thm:MAC_NaiveML}}

We begin with the following proposition, which  
shows that the exponents $(\Eircc,\Eiircc,\Erccprime)$ (see \eqref{eq:MAC_Er1_LM}
and \eqref{eq:MAC_Er12'}) under ML decoding coincide 
with those by Liu and Hughes in the absence of time-sharing \cite{MACExponent4}.
\begin{prop} \label{prop:MAC_ExpML}
    Under ML decoding (i.e.~$q(x_{1},x_{2},y)=W(y|x_{1},x_{2})$), $\Enurcc$ and $\Erccprime$
    can be expressed as 
    \begin{align}
        &\hspace*{-2ex}\Enurcc(\Qv,R_{\nu})            =\min_{P_{X_{1}X_{2}Y}\in\SetScc(\Qv)}D(P_{X_{1}X_{2}Y}\|Q_{1}\times Q_{2}\times W)\nonumber \\ 
        &\qquad\qquad\qquad\qquad\quad+\big[I_{P}(X_{\nu};X_{\nu^c},Y)-R_{\nu}\big]^{+} \label{eq:MAC_Er1_ML} \\
        &\hspace*{-2ex}\Erccprime(\Qv,R_{1},R_{2})  =\min_{P_{X_{1}X_{2}Y}\in\SetScc(\Qv)}D(P_{X_{1}X_{2}Y}\|Q_{1}\times Q_{2}\times W) \nonumber \\
        &\quad+\big[D(P_{X_{1}X_{2}Y}\|Q_{1}\times Q_{2}\times P_{Y})-(R_{1}+R_{2})\big]^{+}. \label{eq:MAC_Er12Naive_ML} 
    \end{align}
\end{prop}
\begin{proof} 
    The proof is similar to that of \cite[Lemma 9]{GallagerCC}, so
    we provide only an outline, and we focus on the type-12 exponent.
    Consider any pair $(P_{X_1X_2Y},\Ptilde_{X_1X_2Y})$ satisfying the constraints
    of \eqref{eq:MAC_Er12'}.  If $D(\Ptilde_{X_{1}X_{2}Y}\|Q_{1}\times Q_{2}\times P_{Y}) \ge D(P_{X_{1}X_{2}Y}\|Q_{1}\times Q_{2}\times P_{Y})$,
    we can lower bound the objective of \eqref{eq:MAC_Er12'} by that of \eqref{eq:MAC_Er12Naive_ML}.
    In the remaining case, we may use the constraint $\EE_{\Ptilde}[\log W] \ge \EE_{P}[\log W]$
    to lower bound the objective in \eqref{eq:MAC_Er12'} by that of 
    \eqref{eq:MAC_Er12Naive_ML} with $\Ptilde_{X_1X_2Y}$ in place of $P_{X_1X_2Y}$.
    This proves that \eqref{eq:MAC_Er12Naive_ML} lower bounds \eqref{eq:MAC_Er12'},
    and the matching upper bound follows immediately from the fact that 
    $\Ptilde_{X_1X_2Y} = P_{X_1X_2Y}$ satisfies the constraints of the minimization in \eqref{eq:MAC_Er12'}.
\end{proof}

We know that $\Eiiircc\ge \Erccprime$
always holds, and hence the left-hand side of \eqref{eq:MAC_NaiveML}
is greater than or equal to the right-hand side. It remains to 
prove the reverse inequality.
From the definition of $\SetTiiicc(P_{X_{1}X_{2}Y})$, $\Ptilde_{X_{1}X_{2}Y}=P_{X_{1}X_{2}Y}$
always satisfies the constraints of \eqref{eq:MAC_Er12_LM}, and hence
\begin{equation}
    \Eiiircc(\Qv,R_{1},R_{2})\le F_{12}(\Qv,R_{1},R_{2}),\label{eq:MAC_NaiveML1}
\end{equation}
where
\begin{align}
    & F_{12}(\Qv,R_{1},R_{2})\defeq\min_{P_{X_{1}X_{2}Y}\in\SetScc(\Qv)}D(P_{X_{1}X_{2}Y}\|Q_{1}\times Q_{2}\times W)\nonumber\\
    &\qquad+\Big[\max\Big\{I_{P}(X_{1};Y)-R_{1},I_{P}(X_{2};Y)-R_{2}, \nonumber \\ 
    &\qquad D\big(P_{X_{1}X_{2}Y}\|Q_{1}\times Q_{2}\times P_{Y}\big)-R_{1}-R_{2}\Big\}\Big]^{+}.\label{eq:MAC_NaiveML2}
\end{align}
We will prove \eqref{eq:MAC_NaiveML} by showing that
\begin{multline}
    \min\big\{ \Eircc(\Qv,R_{1}),\Eiircc(\Qv,R_{2}),F_{12}(\Qv,R_{1},R_{2})\big\} \\ \le\min\big\{ \Eircc(\Qv,R_{1}),\Eiircc(\Qv,R_{2}),\Erccprime(\Qv,R_{1},R_{2})\big\}.\label{eq:MAC_NaiveML3}
\end{multline}
It suffices to show that whenever $F_{12}$ exceeds $\Erccprime$,
$F_{12}$ also greater than or equal to either $\Eircc$
or $\Eiircc$. Comparing \eqref{eq:MAC_Er12Naive_ML}
and \eqref{eq:MAC_NaiveML2}, the objective in \eqref{eq:MAC_NaiveML2} 
only exceeds that of \eqref{eq:MAC_Er12Naive_ML}
when the maximum in \eqref{eq:MAC_NaiveML2} is achieved by $I_{P}(X_{1};Y)-R_{1}$
or $I_{P}(X_{2};Y)-R_{2}$. We show that the former implies $F_{12}\ge \Eiircc$;
it can similarly be shown that the latter implies $F_{12}\ge \Eircc$.
If $I_{P}(X_{1};Y)-R_{1}$ achieves the maximum, we have 
\begin{equation}
    I_{P}(X_{1};Y)-R_{1}\ge D\big(P_{X_{1}X_{2}Y}\|Q_{1}\times Q_{2}\times P_{Y}\big)-R_{1}-R_{2}. \label{eq:MAC_NaiveML4}
\end{equation}
Using the identity \eqref{eq:MAC_ExpandedD},
we can write \eqref{eq:MAC_NaiveML4} as $I_{P}(X_{2};X_{1},Y) \le R_{2}$.
For any $P_{X_{1}X_{2}Y}$ satisfying this property, the objective
in \eqref{eq:MAC_Er1_ML} (with $\nu=2$) equals $D(P_{X_{1}X_{2}Y}\|Q_{1}\times Q_{2}\times W)$, 
and thus cannot exceed the objective in \eqref{eq:MAC_NaiveML2}. It follows
that $F_{12}\ge \Eiircc$.

\section{Refined Superposition Coding Proofs} \label{sub:RSC_PROOFS}

\subsection{A Preliminary Lemma}

Similarly to Lemma \ref{lem:MAC_AltRate} for the MAC, the following lemma gives
an alternative expression for \eqref{eq:RSC_AltR0} that
is more amenable to Lagrange duality techniques.

\begin{lem}
    The condition in \eqref{eq:RSC_AltR0} holds if and only if the following holds for at least one of $u=1,2$:
    \begin{multline}
        R_{0} \le\min_{\Ptilde_{UXY}\in\SetTocc(Q_{UX}\times W)}I_{\Ptilde}(U;Y) + \Big[\max\Big\{ \\ Q_{U}(u)\big(I_{\Ptilde}(X;Y|U=u)-R_{1u}\big),I_{\Ptilde}(X;Y|U)-R_{1}\Big\}\Big]^{+} \label{eq:RSC_AltR0_1} 
    \end{multline}
\end{lem} 
\begin{proof}
    This is a special case of Lemma \ref{lem:MU_MinMaxLemma} in Appendix \ref{sec:MU_DUALITY} with the following identifications ($u=1,2$):
    \begin{align} 
        f(\zv)   &= I_{\Ptilde}(U;Y) \\
        g(\zv)   &= I_{\Ptilde}(X;Y|U)-R_{1} \\
        g_u(\zv) &= Q_{U}(u)\big(I_{\Ptilde}(X;Y|U=u)-R_{1u}\big),
    \end{align}
    where we recall that $R_1 = \sum_{u} Q_U(u)R_{1u}$. In this case, the condition in \eqref{eq:MU_g1g2g} holds with equality.
\end{proof}

\subsection{Proof of First Part of Theorem \ref{thm:RSC_Dual}} \label{sub:RSC_DUAL1}

We show the equivalence of \eqref{eq:RSC_AltR0_1} ($\nu=1$) and \eqref{eq:RSC_DualR0_1} ($u=1$);
identical arguments apply for $\nu=u=2$.
The primal expression is written in terms of a minimization over $\Ptilde_{UXY}$.
It is convenient to ``split'' this distribution into three distributions:
$\Ptilde_{UY}$, $\hat{P}_{XY}\defeq\Ptilde_{XY|U}(\cdot,\cdot|1)$
and $\hat{\hat{P}}_{XY}\defeq\Ptilde_{XY|U}(\cdot,\cdot|2)$.  Using 
a similar argument to the start of Section \ref{sub:MAC_DUAL_PROOF},
we can write the right-hand side of \eqref{eq:RSC_AltR0_1} as
\begin{multline}
   \hspace*{-2ex}\sup_{\rho_{1}\in[0,1],\rho_{2}\in[0,1]}\min_{\Ptilde_{UY},\hat{P}_{XY},\hat{\hat{P}}_{XY}}I_{\Ptilde}(U;Y)+\rho_{1}Q_{U}(1)I_{\hat{P}}(X;Y) \\ +\rho_{1}\rho_{2}Q_{U}(2)I_{\hat{\hat{P}}}(X;Y) - \rho_{1}R_{11} - \rho_{1}\rho_{2}R_{12}.
\end{multline}
Defining $P_{UXY} \triangleq Q_{UX}\times W$, the minimization is subject 
to the constraints (i) $\Ptilde_{U}=Q_{U}$, (ii) $\hat{P}_{X}=Q_{X|U}(\cdot|1)$,
(iii) $\hat{\hat{P}}_{X}=Q_{X|U}(\cdot|2)$, (iv) $\Ptilde_{Y}=P_{Y}$,
(v) $\hat{P}_{Y}=\Ptilde_{Y|U}(\cdot|1)$, (vi) $\hat{\hat{P}}_{Y}=\Ptilde_{Y|U}(\cdot|2)$,
(vii) $Q_{U}(1)\EE_{\hat{P}}[\log q(X,Y)]+Q_{U}(2)\EE_{\hat{\hat{P}}}[\log q(X,Y)]\ge\EE_{P}[\log q(X,Y)]$.

Similarly to Section \ref{sub:MAC_DUAL_PROOF}, we apply the minimization 
in several steps, making repeated use of Lemma \ref{lem:MU_DualityLemma}.
We implicitly apply Fan's minimax theorem \cite{Minimax} after each step,
so that the supremum over the dual variables can be left until the end.
We provide less detail than the amount given in Section \ref{sub:MAC_DUAL_PROOF},
since the general steps are similar.

\subsubsection*{Step 1}

For given joint distributions $\Ptilde_{UY}$ and $\hat{P}_{XY}$,
the minimization $\min_{\hat{\hat{P}}_{XY}}I_{\hat{\hat{P}}}(X;Y)$ subject
to the constraints (iii), (vi) and (vii) has a dual expression given by
\begin{equation}
    \Fsf_{1} \triangleq -\Fsf_{1,1} + \Fsf_{1,2} + \Fsf_{1,3} - s Q_{U}(1)\Fsf_{1,4},
\end{equation}
where
\begin{align}
    \Fsf_{1,1} &\defeq \sum_{y}\Ptilde_{Y|U}(y|2)\log\sum_{\xbar}Q_{X|U}(\xbar|2)q(\xbar,y)^{s Q_{U}(2)}e^{a_{2}(\xbar)} \\
    \Fsf_{1,2} &\defeq \sum_{x_{2}}Q_{X|U}(x|2)a_{2}(x) \\
    \Fsf_{1,3} &\defeq s\sum_{x,y}P_{XY}(x,y)\log q(x,y) \\
    \Fsf_{1,4} &\defeq \sum_{x,y}\hat{P}_{XY}(x,y)\log q(x,y),
\end{align}
and where $s\ge0$ and $a_2(\cdot)$ are dual variables.

\subsubsection*{Step 2}

For a given joint distribution $\Ptilde_{UY}$, the minimization 
$\min_{\hat{P}_{XY}}I_{\hat{P}}(X;Y)-s\rho_{2}Q_{U}(2)\Fsf_{1,4}(\hat{P}_{XY})$
subject to (ii) and (v) has a dual expression given by
\begin{equation}
    \Fsf_{2} \triangleq -\Fsf_{2,1} + \Fsf_{2,2},
\end{equation}
where
\begin{align}
    \Fsf_{2,1} &\defeq \sum_{y}\Ptilde_{Y|U}(y|1)\log\sum_{\xbar}Q_{X|U}(\xbar|1)q(\xbar,y)^{s\rho_{2}Q_{U}(2)}e^{a_{1}(\xbar)} \\
    \Fsf_{2,2} &\defeq \sum_{x}Q_{X|U}(x|1)a_{1}(x),
\end{align}
and where $a_1(\cdot)$ is a dual variable.

\subsubsection*{Step 3}

Next, we consider the minimization 
$\min_{\Ptilde_{UY}}I_{\Ptilde}(U;Y)-\rho_{1}Q_{U}(1)\Fsf_{2,1}-\rho_{1}\rho_{2}Q_{U}(2)\Fsf_{1,1}$
subject to (i) and (iv).  The objective can equivalently be expressed as
\begin{multline}
    \Fsf_{3} \triangleq I_{\Ptilde}(U;Y) - \sum_{u}\rho_1(u)\sum_{y}\Ptilde_{UY}(u,y) \\ \times\log\sum_{\xbar}Q_{X|U}(\xbar|u)q(\xbar,y)^{s_1(u)}e^{a(u,\xbar)}
\end{multline}
using the definitions in \eqref{eq:RSC_DualVars1} along with $a(u,x) \triangleq a_{u}(x)$.  
The dual expression is given by
\begin{multline}
    \Fsf_{3} = \sum_{u}Q_{U}(u)b(u)-\sum_{y}P_{Y}(y)\log\sum_{\ubar}Q_{U}(\ubar) \\ \times\bigg(\sum_{\xbar}Q_{X|U}(\xbar|\ubar)q(\xbar,y)^{s_1(\ubar)}e^{a(\ubar,\xbar)}\bigg)^{\rho_1(u)}e^{b(\ubar)},
\end{multline}
where $b(\cdot)$ is a dual variable.

\subsubsection*{Step 4}

The final objective is given by
\begin{multline}
    \Fsf_{4} \defeq \Fsf_{3}+\rho_{1}Q_{U}(1)\Fsf_{2,1}+\rho_{1}\rho_{2}Q_{U}(2)\big(\Fsf_{1,2}+\Fsf_{1,3}) \\ - \sum_{u=1,2}\rho_1(u)Q_U(u)R_{1u}.
\end{multline}
After applying some algebraic manipulations, we obtain the dual expression
\begin{multline}
    \hspace*{-1.5ex}- \sum_{u=1,2}\rho_1(u)Q_U(u)R_{1u} + \sum_{u,x,y}P_{UXY}(u,x,y) \\ \hspace*{-1.5ex}\times\log\frac{\Big(q(x,y)^{s_1(u)}e^{a(u,x)}\Big)^{\rho_1(u)}e^{b(u)}}{\sum_{\ubar}Q_{U}(\ubar)\Big(\sum_{\xbar}Q_{X|U}(\xbar|\ubar)q(\xbar,y)^{s_1(\ubar)}e^{a(\ubar,\xbar)}\Big)^{\rho_1(\ubar)}e^{b(\ubar)}}. \label{eq:RSC_DualPfFinal}
\end{multline}
To conclude the proof, we show that the variable $b(u)$ can be removed
from the numerator and denominator in \eqref{eq:RSC_DualPfFinal} without
affecting the dual optimization.  For $\rho_1>0$ and $\rho_2>0$,
this follows by factoring $b(u)$ into $a(u,x)$.  Using the identity
$\EE[e^{b(U)}] \ge e^{\EE[b(U)]}$ (by Jensen's inequality), we find that the
optimal value of the objective is zero when $\rho_1=0$, regardless
of whether $b(u)$ is present.  For the remaining case, namely $\rho_1>0$ and
$\rho_2=0$, the objective depends on $a(u,x)$ only for $u=1$.  Moreover,
since \eqref{eq:RSC_DualPfFinal} depends on $b(\cdot)$ only through
the difference $b(2)-b(1)$, we may set $b(2)=0$ without loss of generality.
The remaining parameter $b(1)$ can be factored into $a(1,x)$.

\subsection{Proof of Second Part of Theorem \ref{thm:RSC_Dual}} \label{sub:RSC_DUAL2}

We focus on the derivation of \eqref{eq:RSC_DualR0_1} with $u=1$, since the case $u=2$ is handled similarly.  The ideas used in the derivation are similar to those for 
the MAC (see the proof of Theorem \ref{thm:MAC_DualRate}), but the details are more involved.

Applying Lemma \ref{lem:MU_UpperInd} to the union in \eqref{eq:RSC_RefProof2}, 
with $Z_{1}(i)=\Xv_{1}^{(1,i)}$ and $Z_{2}(j)=\Xv_{2}^{(1,j)}$, we obtain
\begin{multline}
    \peobar \le \EE\Bigg[\min\Bigg\{1,(M_{0}-1)\EE\Bigg[\min\Bigg\{1, 
        M_{11}\EE\bigg[\min\bigg\{1, \\ M_{12}\PP\bigg[\frac{q^{n}\big(\Xvbar,\Yv\big)}{q^{n}(\Xv,\Yv)}\ge1\,\Big|\,\Xvbar_{1}\bigg]\bigg\}\,\Big|\,\Uvbar\bigg]\Bigg\}\,\Bigg|\,\Uv,\Xv,\Yv\Bigg]\Bigg\}\Bigg].
\end{multline}
Using \eqref{eq:RSC_SplitMetric}, Markov's inequality, and $\min\{1,\zeta\} \le \zeta^{\rho}$ 
($\rho\in[0,1]$), we obtain\footnote{In the case of continuous alphabets, the summations should be replaced by integrals as necessary.}
\begin{align}
    &\peobar \le (M_{0}M_{11}^{\rho_{1}}M_{12}^{\rho_{1}\rho_{2}})^{\rho_{0}} \sum_{\uv,\xv_{1},\xv_{2}}P_{\Uv}(\uv)P_{\Xv_{1}}(\xv_{1})P_{\Xv_{2}}(\xv_{2}) \nonumber \\ 
    &\quad\times \sum_{\yv}W^{n}(\yv|\Xi(\uv,\xv_{1},\xv_{2}))\Bigg(\sum_{\uvbar}P_{\Uv}(\uvbar) \nonumber \\
    &\quad\times \bigg(\sum_{\xvbar_{1}}P_{\Xv_{1}}(\xvbar_{1})\bigg(\frac{q^{n_{1}}\big(\xvbar_{1},\yv_{1}(\uvbar))}{q^{n_{1}}(\xv_{1},\yv_{1}(\uv))}\bigg)^{\rho_{2}s}\bigg)^{\rho_{1}} \nonumber \\ 
    &\quad\times\bigg(\sum_{\xvbar_{2}}P_{\Xv_{2}}(\xvbar_{2})\bigg(\frac{q^{n_{2}}\big(\xvbar_{2},\yv_{2}(\uvbar))}{q^{n_{2}}(\xv_{2},\yv_{2}(\uv))}\bigg)^{s}\bigg)^{\rho_{1}\rho_{2}}\Bigg)^{\rho_{0}},
\end{align}
where $s\ge0$ and $\rho_{1},\rho_{2}\in[0,1]$ are arbitrary. Using the definition  
of the ensemble in \eqref{eq:RSC_DistrCost}--\eqref{eq:RSC_Phi}, we obtain
\begin{align}
    &\peobar \,\,\dot{\le}\, (M_{0}M_{11}^{\rho_{1}}M_{12}^{\rho_{1}\rho_{2}})^{\rho_{0}} \nonumber \\
    &\times\sum_{\uv,\xv_{1},\xv_{2}}P_{\Uv}(\uv)P_{\Xv_{1}}(\xv_{1})P_{\Xv_{2}}(\xv_{2})\sum_{\yv}W^{n}(\yv|\Xi(\uv,\xv_{1},\xv_{2})) \nonumber \\ 
    & \times \Bigg(\sum_{\uvbar}P_{\Uv}(\uvbar)\bigg(\sum_{\xvbar_{1}}P_{\Xv_{1}}(\xvbar_{1})\nonumber \\
    &\quad\times\bigg(\frac{q^{n_{1}}\big(\xvbar_{1},\yv_{1}(\uvbar))}{q^{n_{1}}(\xv_{1},\yv_{1}(\uv))}\bigg)^{\rho_{2}s}\frac{e^{a_{1}^{n_{1}}(\xvbar_{1})}}{e^{a_{1}^{n_{1}}(\xv_{1})}}\bigg)^{\rho_{1}} \nonumber \\
    & \quad\times \bigg(\sum_{\xvbar_{2}}P_{\Xv_{2}}(\xvbar_{2})\bigg(\frac{q^{n_{2}}\big(\xvbar_{2},\yv_{2}(\uvbar))}{q^{n_{2}}(\xv_{2},\yv_{2}(\uv))}\bigg)^{s}\frac{e^{a_{2}^{n_{2}}(\xvbar_{2})}}{e^{a_{2}^{n_{2}}(\xv_{2})}}\bigg)^{\rho_{1}\rho_{2}}\Bigg)^{\rho_{0}}, \label{eq:RSC_DirectDual3}
\end{align}
where for $u=1,2$, $a_{u}(\cdot)$ is one of the $L_{u}=2$ cost functions in \eqref{eq:RSC_SetD},
and $a_{u}^{n_{u}}(\xv_{u}) \defeq \sum_{i=1}^{n_{u}}a_{u}(x_{u,i})$.
For each $(\uv,\xv_{1},\xv_{2},\yv)$, we write the argument to the summation over
$\yv$ in \eqref{eq:RSC_DirectDual3} as a product of two terms, namely
\begin{align}
    \Tsf_{1} &\defeq W^{n}(\yv|\Xi(\uv,\xv_{1},\xv_{2})) \nonumber \\ 
             & \qquad~~ \times \Big(q^{n_{1}}(\xv_{1},\yv_{1}(\uv))^{-\rho_{1}\rho_{2}s}e^{-\rho_{1}a_{1}^{n_{1}}(\xv_{1})} \nonumber \\ 
             & \qquad~~ \times q^{n_{2}}(\xv_{2},\yv_{2}(\uv))^{-\rho_{1}\rho_{2}s}e^{-\rho_{1}\rho_{2}a_{2}^{n_{2}}(\xv_{2})}\Big)^{\rho_0} \\
    \Tsf_{2} &\defeq \Bigg(\sum_{\uvbar}P_{\Uv}(\uvbar) \nonumber \\ 
    &\hspace*{-4ex} \times\bigg(\sum_{\xvbar_{1}}P_{\Xv_{1}}(\xvbar_{1})q^{n_{1}}\big(\xvbar_{1},\yv_{1}(\uvbar))^{\rho_{2}s}e^{a_{1}^{n_{1}}(\xvbar_{1})}\bigg)^{\rho_{1}} \nonumber \\ 
    & \hspace*{-4ex}\times \bigg(\sum_{\xvbar_{2}}P_{\Xv_{2}}(\xvbar_{2})q^{n_{2}}\big(\xvbar_{2},\yv_{2}(\uvbar))^{s}e^{a_{2}^{n_{2}}(\xvbar_{2})}\bigg)^{\rho_{1}\rho_{2}}\Bigg)^{\rho_{0}}.
\end{align} 
Since $P_{\Xv_{u}}(\xv_u)$ is upper bounded by a subexponential prefactor times 
$\prod_{i=1}^{n}Q_{X|U}(x_{u,i}|u)$ for $u=1,2$ (see Proposition \ref{prop:MAC_SubExp}), we have
\begin{align}
    &\sum_{\xvbar_{1}}P_{\Xv_{1}}(\xvbar_{1})q^{n_{1}}\big(\xvbar_{1},\yv_{1}(\uvbar))^{\rho_{2}s}e^{a_{1}^{n_{1}}(\xvbar_{1})} \nonumber \\
    & \qquad\dot{\le}\, \prod_{i=1}^{n_{1}}\sum_{\xbar_{1}}Q_{X|U}(\xbar_{1}|1)q(\xbar_{1},y_{1,i}(\uvbar))^{\rho_{2}s}e^{a_{1}(\xbar_{1})} \label{eq:RSC_DirectDual5}\\
    & \sum_{\xvbar_{2}}P_{\Xv_{2}}(\xvbar_{2})q^{n_{2}}\big(\xvbar_{2},\yv_{2}(\uvbar))^{s}e^{a_{2}^{n_{2}}(\xvbar_{2})} \nonumber \\
    & \qquad\dot{\le}\, \prod_{i=1}^{n_{2}}\sum_{\xbar_{2}}Q_{X|U}(\xbar_{2}|2)q(\xbar_{2},y_{2,i}(\uvbar))^{s}e^{a_{2}(\xbar_{2})}, \label{eq:RSC_DirectDual6}
\end{align}
where for $u=1,2$, $y_{u,i}(\uvbar)$ is the $i$-th entry of $\yv_{u}(\uvbar)$.
Using the definitions in \eqref{eq:RSC_DualVars1} along with $a(u,x)\triangleq a_{u}(x)$, we therefore obtain
\begin{align}
      &\hspace*{-1.2ex} \bigg(\sum_{\xvbar_{1}}P_{\Xv_{1}}(\xvbar_{1})q^{n_{1}}\big(\xvbar_{1},\yv_{1}(\uvbar))^{\rho_{2}s}e^{a_{1}^{n_{1}}(\xvbar_{1})}\bigg)^{\rho_{1}} \nonumber \\ 
      &\hspace*{-1.2ex} \,~\times \bigg(\sum_{\xvbar_{2}}P_{\Xv_{2}}(\xvbar_{2})q^{n_{2}}\big(\xvbar_{2},\yv_{2}(\uvbar))^{s}e^{a_{2}^{n_{2}}(\xvbar_{2})}\bigg)^{\rho_{1}\rho_{2}} \\
      &\hspace*{-1.2ex} \,\,\dot{\le}\, \bigg(\prod_{i=1}^{n_{1}}\sum_{\xbar_{1}}Q_{X|U}(\xbar_{1}|1)q(\xbar_{1},y_{1,i}(\uvbar))^{\rho_{2}s}e^{a_{1}(\xbar_{1})}\bigg)^{\rho_{1}} \nonumber \\
      &\hspace*{-1.2ex}\,~\times\bigg(\prod_{i=1}^{n_{2}}\sum_{\xbar_{2}}Q_{X|U}(\xbar_{2}|2)q(\xbar_{2},y_{2,i}(\uvbar))^{s}e^{a_{2}(\xbar_{2})}\bigg)^{\rho_{1}\rho_{2}} \\
      &\hspace*{-1.2ex} = \prod_{i=1}^{n}\bigg(\sum_{\xbar}Q_{X|U}(\xbar|\ubar_{i})q(\xbar,y_{i})^{s_1(\ubar_{i})}e^{a(\ubar_{i},\xbar)}\bigg)^{\rho_1(\ubar_{i})}.
\end{align}
Hence, and using the fact that $P_{\Uv}(\uv) \,\,\dot{\le}\, Q_U^n(\uv)$ 
(see \cite[Ch.~2]{CsiszarBook}), we obtain
\begin{multline}
      \Tsf_{2} \,\,\dot{\le}\, \prod_{i=1}^{n}\Bigg(\sum_{\ubar}Q_{U}(\ubar) \\ \times\bigg(\sum_{\xbar}Q(\xbar|\ubar)q(\xbar,y_{i})^{s_1(\ubar)}e^{a(\ubar,\xbar)}\bigg)^{\rho_1(\ubar)}\Bigg)^{\rho_{0}}. \label{eq:RSC_T2Final}                      
\end{multline}
A similar argument (without the need for the $\dot{\le}$ steps) gives 
\begin{equation}
      \Tsf_{1} = \prod_{i=1}^{n}W(y_i|x_{i})\Big(q(x_{i},y_i)^{-\rho_1(u_{i})s_1(u_{i})}e^{-\rho_1(u_{i})a(u_{i},x_{i})}\Big)^{\rho_0}, \label{eq:RSC_T1Final}                                                                                                                                                                                                                  
\end{equation}
where we have used the fact that $W^{n}(\yv|\Xi(\uv,\xv_{1},\xv_{2})) = W^{n_{1}}(\yv_{1}(\uv)|\xv_{1})W^{n_{2}}(\yv_{2}(\uv)|\xv_{2})$.
Substituting \eqref{eq:RSC_T2Final} and \eqref{eq:RSC_T1Final} into \eqref{eq:RSC_DirectDual3}, 
we obtain
\begin{align}
    \peobar \,\,\dot{\le}\, &(M_{0}M_{11}^{\rho_{1}} M_{12}^{\rho_{1}\rho_{2}})^{\rho_{0}} \sum_{\uv,\xv}P_{\Uv\Xv}(\uv,\xv)\prod_{i=1}^{n}\sum_{y}W(y|x_{i}) \nonumber \\ 
    &\times \bigg(\sum_{\ubar}Q_{U}(\ubar)\bigg(\sum_{\xbar}Q_{X|U}(\xbar|\ubar_{i}) \nonumber\\ &\times\bigg(\frac{q(\xbar,y_{i})}{q(x_{i},y_{i})}\bigg)^{s_1(\ubar_{i})}\frac{e^{a(\ubar_{i},\xbar)}}{e^{a(u_{i},x_{i})}}\bigg)^{\rho_1(\ubar_{i})}\bigg)^{\rho_{0}}, \label{eq:RSC_DirectDual12}
\end{align}
where
\begin{multline}
    P_{\Uv\Xv}(\uv,\xv) \defeq \sum_{\xv_{1},\xv_{2}}P_{\Uv}(\uv)P_{\Xv_{1}}(\xv_{1})P_{\Xv_{2}}(\xv_{2}) \\ \times\openone\{\xv = \Xi(\uv,\xv_{1},\xv_{2})\}.
\end{multline}
If $P_{\Uv\Xv}$ were i.i.d.~on $Q_{UX}$, then \eqref{eq:RSC_DirectDual12} would yield an
error exponent that is positive when \eqref{eq:RSC_DualR0_1} ($u=1$) holds with strict inequality, 
by taking $\rho_0\to0$ similarly to Theorem \ref{thm:MAC_DualRate}.  The same can be done 
in the present setting by upper bounding $P_{\Uv\Xv}$ by a subexponential prefactor
times $Q_{UX}^{n}$, analogously to \eqref{eq:RSC_DirectDual5}--\eqref{eq:RSC_DirectDual6}.
More precisely, we have
\begin{align}
    & P_{\Uv\Xv}(\uv,\xv) \nonumber \\  &\,\,\dot{\le}\,\,\sum_{\xv_{1},\xv_{2}}P_{\Uv}(\uv)\bigg(\prod_{i=1}^{n}Q_{X|U}(x_{1,i}|1)\bigg) \nonumber \\ 
    &\qquad\times\bigg(\prod_{i=1}^{n}Q_{X|U}(x_{2,i}|2)\bigg)\openone\{\xv = \Xi(\uv,\xv_{1},\xv_{2})\} \\
                        &= P_{\Uv}(\uv)Q_{X|U}^{n}(\xv|\uv) \\
                        &\,\dot{\le}\,\, Q_{U}^{n}(\uv)Q_{X|U}^{n}(\xv|\uv) 
                        = Q_{UX}^{n}(\uv,\xv).
\end{align}


\bibliographystyle{IEEEtran}
\bibliography{12-Paper,18-MultiUser,18-SingleUser}

\vspace*{-2cm}
\begin{IEEEbiographynophoto}{Jonathan Scarlett}
(S'14 -- M'15) was born in Melbourne, Australia, in 1988. In 2010, he received 
the B.Eng. degree in electrical engineering and the B.Sci. degree in 
computer science from the University of Melbourne, Australia. In 2011, 
he was a research assistant at the Department of Electrical \& Electronic 
Engineering, University of Melbourne.  From October 2011 to August 2014,  
he was a Ph.D. student in the Signal Processing and Communications Group
at the University of Cambridge, United Kingdom. He
is now a post-doctoral researcher with the Laboratory for Information
and Inference Systems at the \'Ecole Polytechnique F\'ed\'erale de Lausanne,
Switzerland.  His research interests are in the areas of information theory, 
signal processing, machine learning, and high-dimensional statistics. 
He received the Cambridge Australia Poynton International Scholarship, and the 'EPFL Fellows' postdoctoral fellowship co-funded by Marie Curie.
\end{IEEEbiographynophoto}

\vspace*{-2cm}
\begin{IEEEbiographynophoto}{Alfonso Martinez}
(SM'11) was born in Zaragoza, Spain, in October 1973. He
is currently a Ram\'on y Cajal Research Fellow at Universitat Pompeu Fabra,
Barcelona, Spain. He obtained his Telecommunications Engineering degree
from the University of Zaragoza in 1997. In 1998-2003 he was a Systems Engineer
at the research centre of the European Space Agency (ESAESTEC) in
Noordwijk, The Netherlands. His work on APSK modulation was instrumental
in the definition of the physical layer of DVB-S2. From 2003 to 2007 he was
a Research and Teaching Assistant at Technische Universiteit Eindhoven, The
Netherlands, where he conducted research on digital signal processing for
MIMO optical systems and on optical communication theory. Between 2008
and 2010 he was a post-doctoral fellow with the Information-Theoretic Learning
Group at Centrum Wiskunde \& Informatica (CWI), in Amsterdam, The
Netherlands. In 2011 he was a Research Associate with the Signal Processing
and Communications Lab at the Department of Engineering, University
of Cambridge, Cambridge, U.K. His research interests lie in the fields of
information theory and coding, with emphasis on digital modulation and the
analysis of mismatched decoding; in this area he has coauthored a monograph
on ``Bit-Interleaved Coded Modulation.'' More generally, he is intrigued by the
connections between information theory, optical communications, and physics,
particularly by the links between classical and quantum information theory.
\end{IEEEbiographynophoto}

\vspace*{-2cm}
\begin{IEEEbiographynophoto}{Albert Guill\'en i F\`abregas}
	(S'01 -- M'05 -- SM'09) received the Telecommunication
	Engineering Degree and the Electronics Engineering Degree from
	Universitat Polit\`ecnica de Catalunya and Politecnico di Torino, respectively
	in 1999, and the Ph.D. in Communication Systems from \'Ecole Polytechnique
	F\'ed\'erale de Lausanne (EPFL) in 2004.
	Since 2011 he has been an ICREA Research Professor at Universitat Pompeu
	Fabra. He is also an Adjunct Researcher at the University of Cambridge.
	He has held appointments at the New Jersey Institute of Technology, Telecom
	Italia, European Space Agency (ESA), Institut Eur\`ecom, University of South
	Australia, University of Cambridge, as well as visiting appointments at EPFL,
	\'Ecole Nationale des T\'el\'ecommunications (Paris), Universitat Pompeu Fabra,
	University of South Australia, Centrum Wiskunde \& Informatica and Texas
	A\&M University in Qatar. His research interests are in the areas of information
	theory, coding theory and communication theory.
	Dr. Guill\'en i F\`abregas is a Member of the Young Academy of Europe, and
	received the Starting Grant from the European Research Council, the Young
	Authors Award of the 2004 European Signal Processing Conference, the 2004
	Best Doctoral Thesis Award from the Spanish Institution of Telecommunications
	Engineers, and a Research Fellowship of the Spanish Government
	to join ESA. He is also an Associate Editor of the IEEE Transactions on Information Theory, an Editor of the Foundations and Trends in
	Communications and Information Theory, Now Publishers and was an Editor
	of the IEEE Transactions on Wireless Communications.
\end{IEEEbiographynophoto}

\end{document}